\documentclass[12pt]{article}

\usepackage[margin=10ex]{geometry}
\usepackage{hyperref}

\usepackage{amsfonts}
\usepackage{amsmath,amssymb}
\usepackage{graphicx,amscd,xypic}
\usepackage{bbding}
\usepackage{indentfirst}
\usepackage{mathrsfs, dsfont, bbm}
\usepackage[T1]{fontenc}
\usepackage{color}
\usepackage{enumerate}

\newtheorem{thm}{Theorem}[section]

\newtheorem{prop}{Proposition}[section]

\newtheorem{lemma}[thm]{Lemma}

\newtheorem{assump}{Assumption}[section]
\newtheorem{case}{Case}[section]

\newenvironment{proof}{\hspace{0ex}\textsc{Proof}:\hspace{1ex}}{\hfill$\Box$\\[2ex] }

\DeclareMathOperator{\BP}{\mathbb{P}}
\DeclareMathOperator{\BF}{\mathbf{F}}

\setlength\arraycolsep{1pt}

\linespread{1.5}

\allowdisplaybreaks

\begin{document}

\title{Optimal Insurance with Rank-Dependent Utility and Increasing Indemnities}

\author{Zuo Quan Xu\thanks{Department of Applied Mathematics, Hong Kong Polytechnic University, Kowloon, Hong Kong. Email: \href{mailto:maxu@polyu.edu.hk}{\nolinkurl{maxu@polyu.edu.hk}}. This author acknowledges financial supports from Hong Kong Early Career Scheme (No. 533112), Hong Kong General Research Fund (No. 529711) and NNSF of China (No. 11471276). } \and
Xun Yu Zhou\thanks{Mathematical Institute and Nomura Centre for Mathematical Finance, and Oxford--Man Institute of Quantitative Finance, The University of Oxford,  Oxford OX2 6GG, UK. This author acknowledges supports from a start-up fund of the University of Oxford, and research grants from the Nomura Centre for Mathematical Finance and the Oxford--Man Institute of Quantitative Finance.} \and Shengchao Zhuang\thanks{Department of Statistics and Actuarial Science, University of Waterloo, Waterloo, Ontario, N2L 3G1, Canada.}}

\maketitle

\begin{abstract}
Bernard et al. (2015) study an optimal insurance design problem where an individual's preference is of the rank-dependent utility (RDU) type, and show that in general an optimal contract covers both large and small losses. However, their contracts suffer from a problem of moral hazard for paying more compensation for a smaller loss. This paper addresses this setback by exogenously imposing the constraint that both the indemnity function and the insured's retention function be increasing with respect to the loss. We characterize the optimal solutions via calculus of variations, and then apply the result to obtain explicitly expressed contracts for problems with Yaari's dual criterion and general RDU.  Finally, we use a numerical example to compare the results between ours and that of Bernard et al. (2015).

\bigskip

\noindent {\bf Keywords: } optimal insurance design, rank-dependent utility theory, Yaari's dual criterion, probability weighting function, moral hazard, indemnity function, retention function, quantile formulation.
\end{abstract}

%\newpage
%%%%%%%%%%%%%%%%%%%%%%%%%%%%
\section{Introduction}
%%%%%%%%%%%%%%%%%%%%%%%%%%%%
\noindent
Optimal insurance contract design is an important problem, manifested not only in theory but also in insurance and financial practices. The  problem is to determine the optimal amount of compensation as a function of the loss -- called indemnity -- so as to maximize the insured's satisfaction, subject to the participation constraint of the insurer.
\par
In the insurance literature, most of the work assume that the insurer is risk neutral while the insured is a risk-averse expected utility (EU) maximizer; see e.g. Arrow (1963), Raviv (1979), and Gollier and Schlesinger (1996). In this case, the optimal contract is in general a deductible one that covers part of the loss in excess of a deductible level.
\par
However, the EU theory has received many criticisms, for it fails to explain numerous experimental observations and theoretical puzzles. In the context of insurance contracting, the classical EU-based models cannot explain some behaviors in insurance demand such as that for small losses (e.g. demand for
warranties); see a detailed discussion in Bernard et al. (2015).
%For example, it fails to explain the famous Allais Paradox nor the reason why a same person may buy both lottery and insurance. Other paradoxes/puzzles that EU theory contradict include common ratio effect (Allais, 1953), Friedman and Savage puzzle (Friedman and Savage, 1948), Ellesberg paradox (Ellesberg, 1961), and the equity premium puzzle (Mehra and Prescott, 1985). Sydnor (2010) investigated how people choose the deductible decisions between \$100, \$250, \$500, and \$1,000. The major finding is that the households who pay an average premium of \$715 per year mostly choose the deductible of \$500. However, the policy with a \$1000 deductible that just requires the premium of \$615 every year was rejected by all the households. Assume that the claim rate is about 5 percent, these households were willing to to pay \$100 to protect against a 5 percent possibility of paying an additional \$500. As explained by Barberis (2012), this choice can only be explained by unreasonably high levels of risk aversion from the perspective of expected utility framework.
\par
In order to overcome this drawback of the EU theory, different measures of evaluating uncertain outcomes have been put forward to depict human behaviors. A notable one is the rank-dependent utility (RDU) proposed by Quiggin (1982), which consists of a concave utility function and an inverse-$S$ shaped probability weighting (or distortion) function.\footnote{The RDU preference reduces to Yaari's dual criterion (Yaari 1987) when the utility function is the identity one.} Through the probability weighting, the RDU theory captures the common observation that people tend to exaggerate small probabilities of extremely good and bad outcomes. With the development of advanced mathematical tools, the RDU preference has been applied to many areas of finance, including portfolio choice and option pricing. On the other hand, Barseghyan et al. (2013) use data on households' insurance deductible decisions in auto and home insurance to demonstrate the relevance and importance of the probability weighting and suggest the possibility of generalizing their conclusions to other insurance choices.
\par
There have been also studies in the area of insurance contract design within the RDU framework; see for example Chateauneuf, Dana and Tallon (2000), Dana and Scarsini (2007), and Carlier and Dana (2008). However, all these papers assume that the probability weighting function is convex. Bernard et al. (2015) are probably the first to study RDU-based insurance contracting  with  {\it inverse-$S$ shaped} weighting functions, using the quantile formulation originally developed for portfolio choice (Jin and Zhou 2008, He and Zhou 2011).  They derive optimal contracts that not only insure large losses above a deductible level but also cover small ones. However, their contracts suffer from a severe problem of moral hazard, since they are not increasing with respect to the losses.\footnote{Throughout this paper, by an ``increasing'' function we mean a ``non-decreasing'' function, namely $f$ is increasing if $f(x)\geqslant f(y)$ whenever $x>y$. We say $f$ is ``strictly increasing'' if  $f(x)> f(y)$ whenever $x>y$. Similar conventions are used for ``decreasing'' and ``strictly decreasing'' functions.} As a consequence, insureds may be motivated to hide their true losses in order to obtain additional compensations;  see a discussion on pp. 175--176 of Bernard et al. (2015).
\par
This paper aims to address this setback. We consider the same insurance model as in Bernard et al. (2015), but adding an explicit constraint that both the indemnity function and the insured's retention function (i.e. the part of the losses to be born by the insured) must be globally increasing with respect to the losses. This constraint will rule out completely the aforementioned behaviour of moral hazard; yet mathematically it gives rise to
substantial difficulty. The approach used in Bernard et al. (2015) no longer works. We develop a general approach to overcome this difficulty. Specifically, we first derive the necessary and sufficient conditions for optimal solutions via calculus of variations. Then we deduce  explicitly expressed optimal contracts by a fine analysis on these conditions. An interesting finding is that, for a good and reasonable range of parameters specifications, there are only two types of optima contracts, one being the classical deductible one and the other a ``three-fold" one covering both small and large losses.
\par
The remainder of the paper is organized as follows. Section 2 presents the optimal insurance model under the RDU framework including its quantile formulation. Section 3 applies the calculus of variations to derive a general necessary and sufficient condition for optimal solutions. We then derive optimal contracts for Yaari's criterion and the general RDU in Sections 4 and 5, respectively.  Section 6 provides a numerical example to illustrate our results. Finally, we conclude with Section 7. Proofs of some lemmas are placed in an Appendix.

%%%%%%%%%%%%%%%%%%%%%%%%%%%%
\section{The Model}
%%%%%%%%%%%%%%%%%%%%%%%%%%%%
\noindent
In this section, we present the optimal insurance contracting model in which the insured has the RDU type of preferences, followed by
its quantile formulation that will facilitate deriving the solutions.

\subsection{Problem formulation}
\noindent
We follow Bernard et al. (2015) for the problem formulation except for one critical difference, which we will highlight. Let $(\Omega, \BF, \BP )$ be a probability space. An insured, endowed with an initial wealth $W_0$, faces a non-negative random loss $X$ supported in $[0,M]$, where $M$ is a given positive scalar. He chooses an insurance contract to protect himself from the loss, by paying a premium $\pi$ to the insurer in return for a compensation (or {\it indemnity}) in the case of a loss. This compensation is to be determined as a function of the loss $X$, denoted by $I(\cdot)$ throughout this paper.
The {\it retention} function $R(X):=X-I(X)$ is thereby the part of the loss to be borne by the insured.

For a given $X$, the insured aims to choose an insurance contract that provides the best tradeoff between the premium and compensation based on his
risk preference. In this paper, we consider the case when insured's preference is of the RDU type. This RDU preference consists of two  components:
a $utility$ function $u:\mathbb{R}^+\mapsto \mathbb{R}^+$ and a probability $weighting$ function $T:[0,1]\mapsto [0,1]$.
Let us denote by $V^{rdu}(W)$ the RDU of the final (random) wealth $W$ of an insured, which is a Choquet integral of $u(W)$ with respect to the capacity $T\circ \BP $, i.e.,
\begin{eqnarray}\label{rdeu}
\nonumber
V^{rdu}(W)= && \int u(W)d(T\circ \BP ):= \int_{\mathbb{R}^+} u(x)d[-T(1-F_W(x))],
\end{eqnarray}
where $F_W(\cdot)$ is the cumulative distribution function (CDF) of $W$. Assuming that $T$ is differentiable,
we can rewrite
\[
 V^{rdu}(W)=\int_{\mathbb{R}^+} u(x)T'(1-F_W(x)) dF_W(x).
\]
If $T$ is inverse-$S$ shaped, that is, it is first concave and then convex; see Figure 1, then the above expression shows that the role $T$ plays is to overweigh both tails of $W$
when evaluating the mean of $u(W)$.
On the other hand, if the insurer is risk-neutral and the cost of offering the compensation is proportional to the expectation of the indemnity, then
 the premium to be charged for an insurance contract should satisfy the participation constraint
\[ \pi\geqslant (1+\rho)E[I(X)], \]
where the constant  $\rho$ is the {\it safety loading} of the insurer.
\par
It is natural to require an indemnity function to satisfy
\begin{align}\label{inceasing indeminity}
I(0)=0,\quad 0\leqslant I(x) \leqslant  x, \quad\forall \ 0\leqslant x\leqslant M,
\end{align}
a constraint that
has been imposed in most insurance contracting literature. If the insured's preference is dictated by the classical EU theory, then
the optimal contract is typically a deductible contract which automatically renders the indemnity function increasing; see e.g.
Arrow (1971) and Raviv (1979). However, for the RDU preference the resulting optimal
indemnity may not be an increasing function, as shown in Bernard et al. (2015). This may potentially cause moral hazard as pointed out earlier. Similarly, a non-monotone retention function may also lead to moral hazard. Consequently, to include the increasing constraint on the contract has been an outstanding open question.
 \par
In this paper, we require the indemnity function to satisfy  $I(0)=0$ and $0\leqslant I(x)-I(y)\leqslant x-y, \ \forall \  0\leqslant y<x\leqslant M$.
In other words, we constrain {\it both} indemnity and retention functions to be globally
increasing.
\par
We can now formulate our insurance contracting problem as
\begin{align}\label{orgi}
  \begin{array}{ll}
  \underset{{I(\cdot)}}{\text{max}} & \ \ \  V^{rdu}(W_0-\pi-X+I(X)) \\
  \text{s.t.} & \quad (1+\rho)E[I(X)]\leqslant  \pi,  \\
             & \quad  I(\cdot)\in \mathbb{I},
 \end{array}
\end{align}
where
\begin{align*}
  \mathbb{I}:=\{I(\cdot): I(0)=0, \ 0\leqslant I(x)-I(y)\leqslant x-y, \ \forall \  0\leqslant y\leqslant x\leqslant M\},
\end{align*}
and $W_0$ and $\pi$ are fixed scalars.
\par
For any random variable $Y \geqslant 0 $ a.s., define the quantile function of $Y$ as
\[ F^{-1}_Y(t):=\inf \{x \in \mathbb{R^+}: P(Y \leqslant x)\geqslant t \},\;\; t\in [0,1].
 \]
Note that any quantile function is nonnegative,  increasing and left-continuous (ILC).
 \par
We now introduce the following assumptions that will be used hereafter.
\begin{assump}\label{assump:21}
The random loss $X$ has a strictly increasing distribution function $F_X$. Moreover, $F^{-1}_X$ is absolutely continuous on $[0,1]$.
\end{assump}

\begin{assump}\label{assump:22}
 (Concave Utility) The  utility function $u:\mathbb{R}^+\mapsto \mathbb{R}^+$ is strictly increasing and continuously differentiable. Furthermore, $u'$ is decreasing.
\end{assump}
\begin{assump}\label{assump:23}
(Inverse-$S$ Shaped Weighting) The probability weighting function $ T $ is a continuous and strictly increasing mapping from [0,1] onto [0,1] and twice differentiable on $(0,1)$. Moreover, there exists $b\in(0,1)$ such that $T'(\cdot)$ is strictly decreasing on $(0,b)$ and strictly increasing on $(b,1)$. Furthermore, $T'(0+):=\lim_{z\downarrow 0} T'(z)>1 $
and $T'(1-):=\lim_{z\uparrow1} T'(z)=+\infty $.
\end{assump}
\par
The first part of Assumption \ref{assump:21}, crucial for the quantile formulation,  is standard; see e.g.  Raviv (1979). Note a significant difference from Bernard et al. (2015) is that here we allow $X$ to have atoms (which is usually the case in the insurance context).
For example, let  $X$ be distributed with $F_X(x)=\frac{1-\gamma e^{-\eta x}}{1-\gamma e^{-\eta M}}$ for $x\in [0,M]$, where $\gamma\in (0,1)$ and $\eta>0$. Then, $X$ satisfies Assumption \ref{assump:21}, and has an atom at $0$ with the probability $\mathbb{P}(X=0)=\frac{1-\gamma }{1-\gamma e^{-\eta M}}>0$. This assumption also ensures that $F^{-1}_X(F_{X}(x))\equiv x,\forall \ x \in [0,M]$, a fact that will be used often in the subsequent analysis.
Next, Assumption \ref{assump:22} is standard for a utility function. Finally, Assumption \ref{assump:23} is satisfied for many weighting functions proposed or used in the literature, e.g.  that proposed by Tversky and Kahneman (1992) (parameterized by $\theta$):
\begin{align}\label{TK-distortion}
T_\theta(x)=\frac{x^\theta}{(x^\theta+(1-x)^\theta)^{\frac{1}{\theta}}} .
\end{align}
Figure 1 displays this (inverse-$S$ shaped) weighting function (in blue) when $\theta=0.5$.
\begin{figure}[htb]\label{figure:distortion}
  \centering
      \includegraphics[width=4 in]{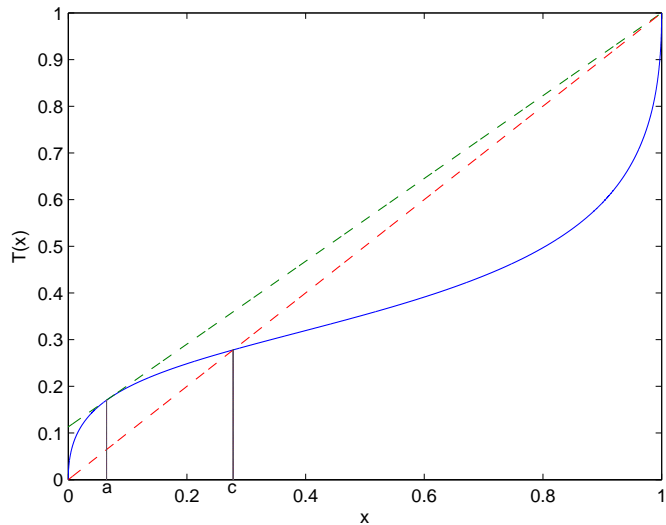}
      \caption{An inverse-$S$ shaped weighting function satisfying Assumption \ref{assump:23}. The marked points $a$ and $c$ will be explained later. }
\end{figure}
\par
In practice, most of the insurance contracts are not tailor-made for individual customers. Instead, an insurance company usually has contracts with different premiums to accommodate   customers with different needs. Each contract is designed with the best interest of a representative customer
in mind so as to stay marketable and competitive, while maintaining the desired profitability (the participation constraint). An insured can then choose one from the menu of contracts to cater for individual needs. Problem \eqref{orgi} is therefore motivated by the insurer's making of this menu.
\par
If the premium $\pi\geqslant (1+\rho)E[X]$, then $I^*(x)\equiv x$ (corresponding to a full coverage) is feasible and maximizes the objective function in  Problem \eqref{orgi} pointwisely; hence optimal. To rule out this trivial case, henceforth we restrict $0< \pi< (1+\rho)E[X]$. Moreover, we  assume
\begin{align}
W_0-(1+\rho)E[X]-M\geqslant 0,
\end{align}
to ensure that the policyholder will not go bankrupt. This is because $W_{0}-\pi-M>0$,  $\forall\; \pi\in (0,(1+\rho)E[X])$.
\par
It is more convenient to consider the retention function $R(x)=x-I(x)$ instead of $I(x)$ in
our study below. Letting $\Delta:=E[X]-\frac{\pi}{1+\rho} \in  (0,E[X]),\ W:=W_0-(1+\rho)E[X]>0,\ W_\Delta:=W+(1+\rho)\Delta\equiv W_0-\pi$, we can reformulate \eqref{orgi} in terms of $R(\cdot)$:
\begin{align}\label{orgi-2}
  \begin{array}{ll}
  \underset{{R(\cdot)}}{\text{max}} & \ \ \  V^{rdu}(W_\Delta-R(X)) \\
  \text{s.t.} & \quad E[R(X)]\geqslant \Delta,  \\
             & \quad  R(\cdot)\in \mathcal{R},
 \end{array}
\end{align}
where
\begin{align*}
  \mathcal{R}:=\{ R(\cdot): R(0)=0, \ 0\leqslant R(x)-R(y)\leqslant x-y, \ \forall \  0\leqslant y\leqslant x\leqslant M \}.
\end{align*}

\subsection{Quantile Formulation}
\noindent
The objective function in \eqref{orgi-2} is not concave in $R(X)$ (due to the nonlinear weighting function $T$), leading to a major difficulty in solving \eqref{orgi-2}.
However, under Assumption \ref{assump:23}, we have
\begin{eqnarray*}
\nonumber
V^{rdu}(W_\Delta-R(X))= && \int_{\mathbb{R}^+} u(x)d[-T(1-F_{W_\Delta-R(X)}(x))] \\
= && \int_0^1 u(F_{W_\Delta-R(X)}^{-1}(z))T'(1-z)dz= \int_0^1 u(W_\Delta-F_{R(X)}^{-1}(1-z))T'(1-z)dz \\
= &&  \int_0^1 u(W_\Delta-F_{R(X)}^{-1}(z))T'(z)dz,
\end{eqnarray*}
where the third equality is because $F_{W_\Delta-R(X)}^{-1}(z)=W_\Delta-F_{R(X)}^{-1}(1-z)$ except for a countable set of $z$. On the other hand, $E[R(X)]\geqslant \Delta$ is equivalent to $\int_0^1 F_{R(X)}^{-1}(z) dz\geqslant \Delta$.
\par
The above suggests that we may change the decision variable from the random variable $R(X)$ to its quantile function $F_{R(X)}^{-1}$, with which the objective function of \eqref{orgi-2} becomes concave and the first constraint is linear. It remains to rewrite the monotonicity constraint (represented by
the constraint set $\mathcal{R}$) also in terms of $F_{R(X)}^{-1}$. To this end, the next lemma plays an important role.

\begin{lemma}\label{lemma:quantile}
Under Assumption \ref{assump:21}, for any given $R(\cdot)\in \mathcal{R}$, we have \[R(x)=F^{-1}_{R(X)} (F_X (x)),\forall \ x \in [0,M].\]
\end{lemma}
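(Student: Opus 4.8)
The plan is to show that the composition $F^{-1}_{R(X)}\circ F_X$ reconstructs $R$ exactly on $[0,M]$, using the fact that $R$ is increasing (so $R(X)$ has a well-behaved quantile function) together with the strict monotonicity of $F_X$ guaranteed by Assumption \ref{assump:21}. First I would fix $R(\cdot)\in\mathcal{R}$ and observe that since $R$ is increasing and $1$-Lipschitz (hence continuous) with $R(0)=0$, the event $\{R(X)\leqslant R(x)\}$ contains $\{X\leqslant x\}$, so $F_{R(X)}(R(x))\geqslant F_X(x)$ for every $x\in[0,M]$. By the definition of the quantile function as a left-continuous generalized inverse, this yields $F^{-1}_{R(X)}(F_X(x))\leqslant R(x)$, giving one inequality for free and for all $R\in\mathcal{R}$ (monotonicity alone suffices here).

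For the reverse inequality I would argue by contradiction: suppose $F^{-1}_{R(X)}(F_X(x_0))<R(x_0)$ for some $x_0$. Pick $y$ with $F^{-1}_{R(X)}(F_X(x_0))<y<R(x_0)$. By definition of $F^{-1}_{R(X)}$, the value $F_X(x_0)$ already exceeds $\BP(R(X)\leqslant y)$; that is, $\BP(R(X)\leqslant y)<F_X(x_0)$. On the other hand, because $R$ is continuous and increasing with $R(x_0)>y$, there is some $x_1<x_0$ with $R(x_1)<y$ and in fact, by the intermediate value theorem, $R(x)<y$ for all $x$ in a left-neighbourhood up to the point where $R$ hits $y$; more precisely $\{X\leqslant x_1\}\subseteq\{R(X)\leqslant y\}$ for a suitable $x_1$ that can be taken arbitrarily close to $x_0$ (taking $x_1=\sup\{x: R(x)\le y\}\wedge x_0$ and using left-continuity of $F_X$). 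Hence $\BP(R(X)\leqslant y)\geqslant F_X(x_1)$, and letting $x_1\uparrow x_0$, strict monotonicity and continuity of $F_X$ (Assumption \ref{assump:21}) force $\BP(R(X)\leqslant y)\geqslant F_X(x_0)$, contradicting the previous displayed inequality.

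Combining the two inequalities gives $R(x)=F^{-1}_{R(X)}(F_X(x))$ for all $x\in[0,M]$. The one ingredient I would want to handle carefully — and I expect it to be the main obstacle — is the boundary behaviour near a point where $R$ is flat: if $R$ is constant on an interval $[a,b]$, then $R(X)$ puts an atom at that constant value, and I must make sure the generalized-inverse bookkeeping (which value of $F_X$ maps back to which point) is consistent with the claimed identity at every $x\in[a,b]$, not just at the endpoints. This is exactly where the assumed strict monotonicity and continuity of $F_X$ — equivalently the stated fact $F^{-1}_X(F_X(x))\equiv x$ on $[0,M]$ — does the work: it rules out atoms and flat pieces of $F_X$ that would otherwise break the correspondence, so the only flatness to worry about comes from $R$ itself, which the argument above accommodates. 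A clean way to package all of this is to note that $F_X:[0,M]\to[0,1]$ is a continuous strictly increasing bijection onto its range, so it suffices to verify the identity after composing with $F^{-1}_X$ on the right, reducing the claim to $R(F^{-1}_X(t))=F^{-1}_{R(X)}(t)$ for $t$ in the range of $F_X$, and then to recognize the left-hand side as the quantile function of $R(X)=R(F^{-1}_X(U))$ with $U$ uniform, since $R\circ F^{-1}_X$ is increasing and left-continuous.
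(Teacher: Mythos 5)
Your first inequality, $F^{-1}_{R(X)}(F_X(x))\leqslant R(x)$, is correct and is exactly the paper's argument. The reverse inequality, however, contains a genuine error of direction. Having chosen $y$ with $F^{-1}_{R(X)}(F_X(x_0))<y<R(x_0)$, you assert that ``by definition of $F^{-1}_{R(X)}$'' one has $\BP(R(X)\leqslant y)<F_X(x_0)$. This is backwards: since $y$ lies \emph{above} the quantile $F^{-1}_{R(X)}(F_X(x_0))=\inf\{w:\BP(R(X)\leqslant w)\geqslant F_X(x_0)\}$ and $F_{R(X)}$ is right-continuous, the definition yields $\BP(R(X)\leqslant y)\geqslant F_X(x_0)$, the opposite inequality. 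The strict inequality $\BP(R(X)\leqslant y)<F_X(x_0)$ for every $y<R(x_0)$ is not a definitional triviality; it is precisely the content of the reverse inequality and is the step where continuity of $R$ and strict monotonicity of $F_X$ must do the work. The paper proves it by taking $z_1\in(y,R(x_0))$, producing $y'<x_0$ with $R(y')=z_1$ via the intermediate value theorem (using $R(0)=0$ and continuity), and then using the inclusion $\{R(X)<R(y')\}\subseteq\{X\leqslant y'\}$ to obtain $\BP(R(X)\leqslant y)\leqslant F_X(y')<F_X(x_0)$. Your ``on the other hand'' derivation establishes (a version of) the true but unhelpful inequality $\BP(R(X)\leqslant y)\geqslant F_X(x_0)$, so the contradiction you exhibit is between a false assertion and a true one and does not refute the supposition.

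A secondary problem is the limit ``letting $x_1\uparrow x_0$'': with $x_1=\sup\{x:R(x)\leqslant y\}$ and $R(x_0)>y$, continuity of $R$ forces $R>y$ on a whole left-neighbourhood of $x_0$, so $x_1$ is a fixed number bounded strictly away from $x_0$ and cannot be sent to $x_0$. The packaging you sketch in the last paragraph --- reducing to the identity $R(F^{-1}_X(t))=F^{-1}_{R(X)}(t)$ and invoking the standard fact that an increasing left-continuous function $h$ is the quantile function of $h(U)$ --- is a legitimate alternative route (it uses $F^{-1}_X(F_X(x))\equiv x$ from Assumption \ref{assump:21} in the same place the paper uses strict monotonicity of $F_X$), but as written it is only a gesture, and the argument you actually present does not close the gap.
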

\begin{proof}
%The result trivially holds when $x=0$. So we assume $0<x\leqslant M$.
First, by the monotonicity of $R(\cdot)$, we have  $\BP (R(X)\leqslant R(x)) \geqslant \BP (X\leqslant x) =F_X(x)$, so by the definition of $F^{-1}_{R(X)}(F_X(x))$, we conclude that $F^{-1}_{R(X)}(F_X(x))\leqslant R(x)$. It suffices to prove the reverse inequality.
Consider two cases.
\begin{itemize}
\item $R(x)=0$: In this case, we have $F^{-1}_{R(X)}(F_X(x))=0$ as quantile functions are nonnegative.
\item $R(x)>0$:  It  suffices to prove that $ \BP (R(X)\leqslant z)< F_X(x)$ for any $z < R(x)$. Take $z_1$ such that $z<z_1<R(x)$. By the continuity and monotonicity of $R(\cdot)$, there exists $y$ such that $y<x$ and $R(y)=z_1$. Then, $\BP (R(X)\leqslant z) \leqslant \BP (R(X) < z_1) = \BP (R(X) < R(y)) \leqslant  \BP (X \leqslant y)=F_X(y) < F_X(x)$, where we have used the fact that $F_X$ is strictly increasing under Assumption \ref{assump:21}.
\end{itemize}
The claim is thus proved.
\end{proof}

In view of the above results,  we can rewrite \eqref{orgi-2} as the following problem, in which the decision variable is $F_{R(X)}^{-1}(\cdot)$ (denoted by $G(\cdot)$ for simplicity):
\begin{equation}\label{orgi-3}
\begin{array}{rl}
  \max\limits_{ {G(\cdot)}}  & \quad  \int_0^1 u(W_\Delta-G(z))T'(z)dz, \\
  \mathrm{s.t.} & \quad  \int_0^1 G(z)dz \geqslant \Delta, \\
  & \quad  G(\cdot)\in {\mathbb{G}},
\end{array}
\end{equation}
 where $\mathbb{G}:=\{F^{-1}_{R(X)} (\cdot): R(\cdot)\in \mathcal{R}\}$.
 \par
In the absence of an explicit expression the constraint set $\mathbb{G}$ is hard  to deal with.
The following result addresses this issue. Note the major technical difficulty arises from the possible existence of the atoms of $X$.
\begin{lemma}
Under Assumption \ref{assump:21}, we have
\begin{equation}\label{Galternative}
{\mathbb{G}}=\{ G(\cdot): G(\cdot) \mbox{ is absolutely continuous, } \ G(0)=0, \ 0\leqslant G'(z)\leqslant (F^{-1}_X)'(z),   \mbox{ a.e. } z\in [0,1]  \}.
\end{equation}
\end{lemma}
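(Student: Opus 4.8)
The plan is to prove the two inclusions in \eqref{Galternative} separately. The recurring device is the pair of identities $R(x)=G(F_X(x))$ (Lemma~\ref{lemma:quantile}) and $G(z)=R(F^{-1}_X(z))$, together with the fact, guaranteed by Assumption~\ref{assump:21}, that $F^{-1}_X(F_X(x))=x$ on $[0,M]$. The genuine difficulty, as flagged in the text, is that $F_X$ may have jumps, so $F_X([0,M])$ need not be all of $[0,1]$; the argument must treat the ``gaps'' — equivalently, the intervals on which $F^{-1}_X$ is flat — separately, and it is exactly there that the constraint $G'\leqslant(F^{-1}_X)'$ does its work.

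For ``$\subseteq$'': fix $R\in\mathcal{R}$ and set $G:=F^{-1}_{R(X)}$. Then $G(0)=0$ is immediate since the infimum defining $F^{-1}_{R(X)}(0)$ is taken over $\mathbb{R}^+$. Next I would show $G(z)=R(F^{-1}_X(z))$ for every $z\in[0,1]$: the map $z\mapsto R(F^{-1}_X(z))$ is nonnegative, increasing and left-continuous (as $R$ is continuous and $F^{-1}_X$ is ILC), vanishes at $0$, and composing it with $U\sim\mathrm{Unif}[0,1]$ produces a random variable with the law of $R(F^{-1}_X(U))\stackrel{d}{=}R(X)$; since the ILC quantile function is the unique such map, it coincides with $F^{-1}_{R(X)}$. (Alternatively, Lemma~\ref{lemma:quantile} together with $F^{-1}_X\circ F_X=\mathrm{id}$ gives the equality on $F_X([0,M])$, and the gap intervals are patched by the CDF estimate used below.) Granting this identity, $G$ is absolutely continuous because $R$ is $1$-Lipschitz on $[0,M]$ while $F^{-1}_X$ is absolutely continuous; and at a.e.\ $z$ where $G$ and $F^{-1}_X$ are both differentiable, $0\leqslant G(z+h)-G(z)=R(F^{-1}_X(z+h))-R(F^{-1}_X(z))\leqslant F^{-1}_X(z+h)-F^{-1}_X(z)$ by monotonicity and the Lipschitz bound for $R$, so dividing by $h\downarrow0$ yields $0\leqslant G'(z)\leqslant(F^{-1}_X)'(z)$.

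For ``$\supseteq$'': let $G$ be absolutely continuous with $G(0)=0$ and $0\leqslant G'\leqslant(F^{-1}_X)'$ a.e., so $G\geqslant0$ and $G$ is increasing, and define $R(x):=G(F_X(x))$ on $[0,M]$. First I would verify $R\in\mathcal{R}$: for $0\leqslant y\leqslant x\leqslant M$, $R(x)-R(y)=\int_{F_X(y)}^{F_X(x)}G'(z)\,dz\geqslant0$, and by $G'\leqslant(F^{-1}_X)'$ and $F^{-1}_X\circ F_X=\mathrm{id}$ this integral is $\leqslant F^{-1}_X(F_X(x))-F^{-1}_X(F_X(y))=x-y$; moreover $R(0)=G(F_X(0))=0$, because $F^{-1}_X$ is constant ($\equiv0$) on $[0,F_X(0)]$, forcing $(F^{-1}_X)'=0$, hence $G'=0$, a.e.\ there, so $G(F_X(0))=G(0)=0$. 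It then remains to show $F^{-1}_{R(X)}=G$. By Lemma~\ref{lemma:quantile} applied to this $R$, $R(x)=F^{-1}_{R(X)}(F_X(x))$, so $F^{-1}_{R(X)}$ and $G$ agree on $F_X([0,M])$. Each complementary interval has the form $(F_X(x_0^-),F_X(x_0))$ for an atom $x_0$ of $X$ (or $[0,F_X(0))$); on $(F_X(x_0^-),F_X(x_0)]$ the quantile $F^{-1}_X$ is constant $\equiv x_0$, so the flatness argument makes $G$ constant on $[F_X(x_0^-),F_X(x_0)]$, equal to $G(F_X(x_0))=R(x_0)$; and $F^{-1}_{R(X)}$ is likewise constant $\equiv R(x_0)$ on that interval, since for $y<R(x_0)$ monotonicity of $R$ gives $\{R(X)\leqslant y\}\subseteq\{X<x_0\}$, whence $\mathbb{P}(R(X)\leqslant y)\leqslant F_X(x_0^-)$, which lies strictly below every point of the interval. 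Hence $F^{-1}_{R(X)}=G$ on all of $[0,1]$, i.e.\ $G\in\mathbb{G}$.

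I expect the main obstacle to be the atom bookkeeping in the ``$\supseteq$'' part: making precise that the only $z$'s for which $G(z)$ is not already pinned down by $R(x)=G(F_X(x))$ are exactly those inside flat intervals of $F^{-1}_X$, and that on each such interval both $G$ and $F^{-1}_{R(X)}$ are forced to equal the same constant — $G$ by the derivative constraint, $F^{-1}_{R(X)}$ by the one-line CDF estimate above. The ``$\subseteq$'' direction is comparatively routine once the identity $G=R\circ F^{-1}_X$ is in hand.
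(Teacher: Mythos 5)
Your proof is correct, but it is organized around a different pair of key ideas than the paper's. In the forward direction the paper does not establish the global identity $F^{-1}_{R(X)}=R\circ F^{-1}_X$ at all: it sandwiches $F^{-1}_X(a)$ between $\underline{a}:=\inf\{x:R(x)=G(a)\}$ and $\overline{a}:=\sup\{x:R(x)=G(a)\}$ and deduces $0\leqslant G(a)-G(b)=R(\underline{a})-R(\overline{b})\leqslant F^{-1}_X(a)-F^{-1}_X(b)$ directly, whereas you derive the same two-sided bound from the identity $G=R\circ F^{-1}_X$, obtained via the uniqueness of the increasing left-continuous map pushing $\mathrm{Unif}(0,1)$ onto the law of $R(X)$. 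Your route is cleaner and yields a structural statement mirroring Lemma~\ref{lemma:quantile}, at the price of invoking that (standard, but unproved here) uniqueness fact; the paper's sandwich is self-contained. In the reverse direction the paper proves $G(a)=F^{-1}_{R(X)}(a)$ for an \emph{arbitrary} $a$ by a two-sided estimate (again using $\overline{a}$ and $\mathbb{P}(R(X)\leqslant G(a))=F_X(\overline{a})$), so it never has to enumerate the gaps in $F_X([0,M])$; you instead get agreement on the range of $F_X$ from Lemma~\ref{lemma:quantile} and then show both functions are forced to the same constant on each flat interval of $F^{-1}_X$. That gap analysis is essentially right and arguably more illuminating about the role of atoms, but note one small slip in the bookkeeping: since $F_X$ is strictly increasing, the left endpoint $F_X(x_0^-)$ of a jump is itself \emph{not} in $F_X([0,M])$ (it is only a limit of range points from below), so the complementary pieces are of the form $[F_X(x_0^-),F_X(x_0))$ and your case analysis leaves the single point $z=F_X(x_0^-)$ uncovered; it is recovered in one line from the left-continuity of $F^{-1}_{R(X)}$ and the continuity of $G$, since both agree on a sequence $F_X(x_n)\uparrow F_X(x_0^-)$.
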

\begin{proof}
We denote the right hand side of \eqref{Galternative} by $\mathbb{G}_{1}$.
For any $G(\cdot)\in\mathbb{G}$, there exists $R(\cdot)\in \mathcal{R}$ such that $G(\cdot)=F^{-1}_{R(X)}(\cdot)$.
For any $0\leqslant b<a \leqslant 1$,  define $\underline{a}=\inf\{x\in[0,M]: R(x)=G(a)\}$, $\overline{a}=\sup\{x\in[0,M]: R(x)=G(a)\}$, define $\underline{b}$ and $\overline{b}$ similarly. Let us show that $\underline{a} \leqslant F_{X}^{-1}(a)\leqslant \overline{a}$. In fact, by definition,
\begin{align*}
F^{-1}_X(a) &=\inf \{x \in \mathbb{R^+}: F_{X}(x)\geqslant a \} \geqslant \inf \{x \in \mathbb{R^+}: G(F_{X}(x))\geqslant G(a) \}\\
& =\inf \{x \in \mathbb{R^+}: R(x)\geqslant G(a) \}= \underline{a}.
\end{align*}
Suppose $F_{X}^{-1}(a)-\varepsilon>\overline{a}$ for some $\varepsilon>0$. Then by monotonicity,
\[G(a)=R(\overline{a})<R(F_{X}^{-1}(a)-\varepsilon)=G(F_X (F_{X}^{-1}(a)-\varepsilon))\leqslant G(a),\]
where we have used the fact that $F_X (F_{X}^{-1}(a)-\varepsilon)<a$  to get the last inequality. This leads to a contradiction; hence it must hold that  $F_{X}^{-1}(a)\leqslant \overline{a}$.
Similarly, we can prove $\underline{b} \leqslant F_{X}^{-1}(b)\leqslant \overline{b}$. Then we have
\[0\leqslant G(a)-G(b)=R(\underline{a})-R(\overline{b})\leqslant \underline{a}-\overline{b}\leqslant F^{-1}_X(a)-F^{-1}_X(b).\]
This inequality shows that  $G$ is absolutely continuous since $F_{X}^{-1}$ is an absolutely continuous function under Assumption \ref{assump:21}. Furthermore, it also implies \(0\leqslant G'(z)\leqslant (F^{-1}_X)'(z)),\) a.e. \(z\in [0,1].\)
So we have established  that $\mathbb{G}\subseteq\mathbb{G}_{1}$.
\par
To prove the reverse inclusion, take any $G(\cdot)\in \mathbb{G}_{1}$ and define $R(\cdot)=G(F_{X}(\cdot))$. It follows from  Assumption \ref{assump:21} that  $0\leqslant R(0)=G(F_{X}(0))-G(0)\leqslant F_{X}^{-1}(F_{X}(0))-F_{X}^{-1}(0)=0$ and $0\leqslant R(a)-R(b)=G(F_{X}(a))-G(F_{X}(b))\leqslant F_{X}^{-1}(F_{X}(a))-F_{X}^{-1}(F_{X}(b))=a-b$ $\forall 0\leqslant b<a \leqslant 1$. Hence $R(\cdot)\in\mathcal{R}$. It now suffices to show  $G(a)=F^{-1}_{R(X)}(a)$ for any  $0\leqslant a \leqslant 1$. If $G(a)=0$, then $G(a)\leqslant F^{-1}_{R(X)}(a)$ holds. Otherwise, for any $s<G(a)$, there exists $y$ such that $s<R(y)=G(F_X(y))<G(a)$ by the continuity of $R(\cdot)$. Then by the monotonicity of $R(\cdot)$ and $G(\cdot)$, we have \[\BP(R(X)\leqslant s)\leqslant \BP(R(X)<R(y)) \leqslant \BP(X\leqslant y)=F_X(y)<a,\] which means $G(a)\leqslant F^{-1}_{R(X)}(a)$. Using the same notation,  $\overline{a}$, as above, and
noting that $G(a)=R(\overline{a})=G(F_{X}(\overline{a}))$, we have $a \leqslant F_{X}(\overline{a})$ by the definition of $\overline{a}$ and the continuity of $R(\cdot)$. Moreover, it follows from \[\BP(R(X)\leqslant G(a))=\BP(R(X)\leqslant R(\overline{a})) = \BP(X\leqslant \overline{a})=F_{X}(\overline{a})\] that $F^{-1}_{R(X)}(F_{X}(\overline{a}))\leqslant G(a)$. Therefore,
\[G(a)\leqslant F^{-1}_{R(X)}(a) \leqslant F^{-1}_{R(X)}(F_{X}(\overline{a})) \leqslant G(a)\] holds by monotonicity. The desired result follows.
\end{proof}
\par
To solve \eqref{orgi-3}, we
apply the Lagrange dual method to remove the constraint $\int_0^1 G(z)dz-\Delta\geqslant 0$ and consider the following auxiliary problem:
\begin{equation}\label{orgi-6}
\begin{array}{rl}
  \max\limits_{ {G(\cdot)}}  & \quad  U_\Delta(\lambda, G(\cdot)):=\int_0^1 [u(W_\Delta-G(z))T'(z)+\lambda G(z)]dz -\lambda\Delta, \\%[3pt]
  \mathrm{s.t.} & \quad  G(\cdot)\in {\mathbb{G}}.
\end{array}
\end{equation}
\par
The existence of the optimal solutions to \eqref{orgi-3} and \eqref{orgi-6} (for each given $\lambda \in \mathbb{R}^+$) is established in Appendix B, while the uniqueness is straightforward  when the utility function $u$ is strictly concave.
\par
To derive the optimal solution to \eqref{orgi-3}, we first solve  \eqref{orgi-6}  to obtain an optimal solution, denoted by $\widetilde{G}_\lambda(\cdot)$. Then we determine $\lambda^* \in \mathbb{R}^+$ by binding the constraint $\int_0^1 \widetilde{G}_{\lambda^*}(z)dz=\Delta$. A standard duality
argument then deduces that $G^*(\cdot):=\widetilde{G}_{\lambda^*}(\cdot)$ is an optimal solution to \eqref{orgi-3}.
Finally, an optimal solution to \eqref{orgi-2} is given by $R^*(z)=G^*(F_X(z))$ $\forall z\in [0,M]$ and that to (1) by $I^*(z)=z-R^*(z)$ $\forall z\in [0,M]$.
\par
So our problem boils down to solving \eqref{orgi-6}. However, in doing so the constraint that $0\leqslant G'(z)\leqslant (F^{-1}_X)'(z)$ in  ${\mathbb{G}}$ poses the major difficulty compared with Bernard et al. (2015).

%%%%%%%%%%%%%%%%%%%%%%%%%%%%
\section{Characterization of Solutions}
%%%%%%%%%%%%%%%%%%%%%%%%%%%%
\noindent
 In this section, we derive a necessary and sufficient condition for a solution to  be optimal to \eqref{orgi-6}. Assume $\widetilde{G}_\lambda(\cdot)$ solves \eqref{orgi-6} with a fixed $\lambda$. Let $G(\cdot)\in{\mathbb{G}}$ be arbitrary and fixed. For any $\varepsilon\in(0,1)$, set $G^\epsilon(\cdot)=(1-\epsilon)\widetilde{G}_\lambda(\cdot)+\epsilon G(\cdot)$. Then $G^\epsilon(\cdot)\in{\mathbb{G}}$. By the optimality of $\widetilde{G}_\lambda(\cdot)$ and the concavity of $u$, we have
\begin{eqnarray}\label{CoV0}
\nonumber
0\geqslant && \frac{1}{\varepsilon}\left\{\int_0^1 \left[ u(W_\Delta-G^\epsilon(z))T'(z)+\lambda G^\epsilon(z) \right] dz-\int_0^1 \left[u(W_\Delta-\widetilde{G}_\lambda(z))T'(z)+\lambda \widetilde G_\lambda(z)\right]dz\right\} \\ \nonumber
= &&\frac{1}{\varepsilon}\left\{\int_0^1 \left[(u(W_\Delta-G^\epsilon(z))-u(W_\Delta-\widetilde{G}_\lambda(z)))T'(z)+ \lambda (G^\epsilon(z)- \widetilde{G}_\lambda(z))\right]dz\right\}\\ \nonumber
\geqslant && \frac{1}{\varepsilon}\left\{\int_0^1 \left[(u'(W_\Delta-G^\epsilon(z)))(W_\Delta-G^\epsilon(z)-W_\Delta+\widetilde{G}_\lambda(z))T'(z)+ \lambda (G^\epsilon(z)- \widetilde{G}_\lambda(z))\right]dz\right\}\\ \nonumber
\underrightarrow{\epsilon \downarrow 0} && \int_0^1 \left[(u'(W_\Delta-\widetilde{G}_\lambda(z)))(\widetilde{G}_\lambda(z)-G(z))T'(z)+ \lambda (G(z)- \widetilde{G}_\lambda(z))\right]dz \\
= && \int_0^1 \left[u'(W_\Delta-\widetilde{G}_\lambda(z))T'(z)-\lambda \right](\widetilde{G}_\lambda(z)-G(z))dz.
\end{eqnarray}

Define
\begin{align}
N_\lambda(z):=-\int_z^1 \left[u'(W_\Delta-\widetilde{G}_\lambda(t))T'(t)-\lambda \right] dt,\quad z\in[0,1].
\end{align}

Then \eqref{CoV0} yields
\begin{align*}
0 \geqslant & \int_0^1 \left[u'(W_\Delta-\widetilde{G}_\lambda(z))T'(z)-\lambda \right](\widetilde{G}_\lambda(z)-G(z))dz
%=& \int_0^1 N_\lambda'(z)(\widetilde{G}_\lambda(z)-G(z))dz \\
%=& \int_0^1 (\widetilde{G}_\lambda(z)-G(z))dN_\lambda(z) \\
= \int_0^1 \int_0^z (\widetilde{G}_\lambda'(t)-G'(t)) dt dN_\lambda(z) \\
=& \int_0^1 \int_t^1 (\widetilde{G}_\lambda'(t)-G'(t)) dN_\lambda(z)dt
%=& -\int_0^1 N_\lambda(t)(\widetilde{G}_\lambda'(t)-G'(t)) dt
=\int_0^1 N_\lambda(t) (G'(t)-\widetilde{G}_\lambda'(t))dt,
\end{align*}
leading to
\begin{align*}
  \int_0^1 N_\lambda(z)  G'(z)dz  \leqslant \int_0^1 N_\lambda(z) \widetilde{G}_\lambda'(z) dz, \ \ \forall G(\cdot)\in {\mathbb{G}}.
\end{align*}
In other words, $\widetilde{G}_\lambda'(\cdot)$ maximizes $\int_0^1 N_\lambda(z)  G'(z)dz$ over $G(\cdot)\in {\mathbb{G}}$. Therefore, a necessary condition for $\widetilde{G}_\lambda(\cdot)$ to be optimal for \eqref{orgi-6} is
\begin{align} \label{optimal}
  \widetilde{G}_\lambda'(z) \overset{a.e.}{=} \begin{cases}
            0, &\quad  \text{  if  }          N_\lambda(z)=\int_z^1 [ \lambda- u'(W_\Delta-\widetilde{G}_\lambda(t))T'(t) ] dt < 0, \\%[3pt]
            \in [0,(F^{-1}_X)'(z)], &\quad  \text{  if  }          N_\lambda(z)=\int_z^1 [ \lambda- u'(W_\Delta-\widetilde{G}_\lambda(t))T'(t) ] dt = 0,\\%[3pt]
            (F^{-1}_X)'(z), &\quad  \text{  if  }        N_\lambda(z)=\int_z^1 [ \lambda- u'(W_\Delta-\widetilde{G}_\lambda(t))T'(t) ] dt > 0.
          \end{cases}
\end{align}
\par
It turns out that \eqref{optimal} completely characterizes the optimal solutions to \eqref{orgi-6}.
\begin{thm}\label{theorem:main}
A function $\widetilde{G}_\lambda(\cdot)$ is an optimal solution to \eqref{orgi-6} if and only if $\widetilde{G}_\lambda(\cdot)\in {\mathbb{G}}$ and $\widetilde{G}_\lambda(\cdot)$ satisfies \eqref{optimal}.
\end{thm}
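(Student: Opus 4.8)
The plan is to establish the two directions separately, with the necessity already essentially done in the preceding computation, so the real content lies in sufficiency. For the \textbf{necessity} direction, I would simply observe that the variational argument leading to \eqref{CoV0} and then to the inequality $\int_0^1 N_\lambda(z)G'(z)\,dz \leqslant \int_0^1 N_\lambda(z)\widetilde{G}_\lambda'(z)\,dz$ for all $G(\cdot)\in\mathbb{G}$ is valid for any optimal $\widetilde{G}_\lambda(\cdot)$; then I would argue that the pointwise maximization of the linear functional $G'\mapsto\int_0^1 N_\lambda(z)G'(z)\,dz$ over the box constraint $0\leqslant G'(z)\leqslant (F^{-1}_X)'(z)$ forces the bang-bang structure \eqref{optimal} at a.e.\ $z$ — on $\{N_\lambda<0\}$ one must take $G'=0$, on $\{N_\lambda>0\}$ one must take $G'=(F^{-1}_X)'$, and on $\{N_\lambda=0\}$ the value is unconstrained within the box. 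This is a standard bang-bang/exchange argument: if $\widetilde{G}_\lambda'$ disagreed with \eqref{optimal} on a set of positive measure, one could strictly increase the functional by modifying $\widetilde{G}_\lambda'$ there, contradicting optimality.

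For the \textbf{sufficiency} direction — the substantive part — suppose $\widetilde{G}_\lambda(\cdot)\in\mathbb{G}$ satisfies \eqref{optimal}; I must show it is optimal for \eqref{orgi-6}. Take any competitor $G(\cdot)\in\mathbb{G}$. By concavity of $u$,
\begin{align*}
U_\Delta(\lambda,G(\cdot))-U_\Delta(\lambda,\widetilde{G}_\lambda(\cdot)) \leqslant \int_0^1\bigl[u'(W_\Delta-\widetilde{G}_\lambda(z))T'(z)-\lambda\bigr](\widetilde{G}_\lambda(z)-G(z))\,dz,
\end{align*}
using $u(a)-u(b)\leqslant u'(b)(a-b)$ with $a=W_\Delta-G(z)$, $b=W_\Delta-\widetilde{G}_\lambda(z)$. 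Then I would reproduce the Fubini/integration-by-parts manipulation already displayed after the definition of $N_\lambda$ to rewrite the right-hand side as $\int_0^1 N_\lambda(t)\bigl(G'(t)-\widetilde{G}_\lambda'(t)\bigr)\,dt$. Finally, condition \eqref{optimal} makes the integrand pointwise non-positive: where $N_\lambda(t)<0$ we have $\widetilde{G}_\lambda'(t)=0$ and $G'(t)\geqslant 0$, so $N_\lambda(t)(G'(t)-\widetilde{G}_\lambda'(t))=N_\lambda(t)G'(t)\leqslant 0$; where $N_\lambda(t)>0$ we have $\widetilde{G}_\lambda'(t)=(F^{-1}_X)'(t)\geqslant G'(t)$, so again the product is $\leqslant 0$; and where $N_\lambda(t)=0$ the term vanishes. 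Hence $U_\Delta(\lambda,G(\cdot))\leqslant U_\Delta(\lambda,\widetilde{G}_\lambda(\cdot))$, proving optimality.

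Two technical points I would be careful about. First, the integration-by-parts step $\int_0^1[u'(W_\Delta-\widetilde{G}_\lambda(z))T'(z)-\lambda](\widetilde{G}_\lambda(z)-G(z))\,dz = \int_0^1 N_\lambda(t)(G'(t)-\widetilde{G}_\lambda'(t))\,dt$ needs justification: since $G,\widetilde{G}_\lambda$ are absolutely continuous with $G(0)=\widetilde{G}_\lambda(0)=0$, one writes $\widetilde{G}_\lambda(z)-G(z)=\int_0^z(\widetilde{G}_\lambda'(t)-G'(t))\,dt$, and $N_\lambda$ is absolutely continuous with $N_\lambda(1)=0$ and $N_\lambda'(z)=u'(W_\Delta-\widetilde{G}_\lambda(z))T'(z)-\lambda$ a.e.; then Fubini on the iterated integral over $\{(t,z):0\leqslant t\leqslant z\leqslant 1\}$ and $\int_t^1 dN_\lambda(z)=N_\lambda(1)-N_\lambda(t)=-N_\lambda(t)$ give the identity. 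Integrability is not an issue because $\widetilde{G}_\lambda',G',(F^{-1}_X)'$ are all in $L^1[0,1]$ and, although $T'(1-)=+\infty$, one checks $N_\lambda$ is finite (indeed the optimal $\widetilde{G}_\lambda$ keeps $W_\Delta-\widetilde{G}_\lambda$ bounded away from the problematic regime, or more simply $N_\lambda(z)$ is defined as an integral that one shows converges). Second, in the sufficiency argument I use only absolute continuity of $\widetilde{G}_\lambda$ and the characterization of $\mathbb{G}$ from the preceding lemma, so nothing beyond what is already available is needed.

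The main obstacle I anticipate is not conceptual but the careful handling of the infinite value $T'(1-)=+\infty$ near $z=1$: one must ensure $N_\lambda(z)$ is well-defined and finite (equivalently, that $u'(W_\Delta-\widetilde{G}_\lambda(t))T'(t)$ is integrable near $1$) so that the integration-by-parts identity is legitimate for every competitor $G$, and that the direct comparison via concavity does not hide a $\infty-\infty$ cancellation. Once that integrability is pinned down — using that $\widetilde{G}_\lambda\in\mathbb{G}$ controls $\widetilde{G}_\lambda$ near $1$ and hence $u'(W_\Delta-\widetilde{G}_\lambda)$ stays bounded there while the $T'$ singularity is only relevant on the set where $\widetilde{G}_\lambda'=(F^{-1}_X)'$, matching $G'$ — the rest is the routine bang-bang verification described above.
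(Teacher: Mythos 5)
Your proposal is correct and follows essentially the same route as the paper: the necessity is exactly the variational/bang-bang argument given in the text preceding the theorem, and your sufficiency argument (concavity of $u$, then the Fubini/integration-by-parts identity to reduce to $\int_0^1 N_\lambda(t)(G'(t)-\widetilde{G}_\lambda'(t))\,dt\leqslant 0$, verified sign-by-sign from \eqref{optimal}) is precisely the paper's proof of the ``if'' part. Your extra attention to the integrability of $u'(W_\Delta-\widetilde{G}_\lambda(t))T'(t)$ near $t=1$ is a legitimate but easily settled point (the argument of $u'$ stays in the compact interval $[W_\Delta-M,W_\Delta]\subset(0,\infty)$ and $T'\in L^1[0,1]$), and does not change the approach.
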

\begin{proof}
We only need to prove the "if" part. For any feasible $G(\cdot)$ in ${\mathbb{G}}$, we have
\begin{align*}
& U_\Delta(\lambda,\widetilde{G}_\lambda(\cdot))-U_\Delta(\lambda,G(\cdot)) \\
%=& \int_0^1 [u(W_\Delta-\widetilde{G}_\lambda(z))T'(z)+ \lambda \widetilde{G}_\lambda(z)]dz - \int_0^1 [u(W_\Delta-{G}(z))T'(z)+ \lambda {G}(z)]dz\\
=& \int_0^1 [u(W_\Delta-\widetilde{G}_\lambda(z))-u(W_\Delta-{G}(z))]T'(z) dz +  \int_0^1 \lambda (\widetilde{G}_\lambda(z)- {G}(z))dz \\
\geqslant & \int_0^1 u'(W_\Delta-\widetilde{G}_\lambda(z))({G}(z) - \widetilde{G}_\lambda(z))T'(z) dz -  \int_0^1 \lambda ({G}(z) - \widetilde{G}_\lambda(z)) dz \\
%=& \int_0^1  (u'(W_\Delta-\widetilde{G}_\lambda(z))T'(z)- \lambda)({G}(z) - \widetilde{G}_\lambda(z)) dz \\
=& \int_0^1   N_\lambda'(z)({G}(z) - \widetilde{G}_\lambda(z)) dz
= \int_0^1 N_\lambda(t) (\widetilde{G}_\lambda'(t) - {G}'(t))dt \geqslant 0.
\end{align*}
%where the last inequality follows from the fact that $\widetilde{G}_\lambda(\cdot)$ satisfies \eqref{optimal}.
Hence, $\widetilde{G}_\lambda(\cdot)$ is optimal for \eqref{orgi-6}.
\end{proof}
\par
The above theorem establishes a general characterization result for the optimal solutions of \eqref{orgi-6}. This result, however, is only implicit as an optimal  $\widetilde{G}_\lambda(\cdot)$ appears on both sides of \eqref{optimal}. Moreover, the derivative of $\widetilde{G}_\lambda(z)$ is undetermined when $N_\lambda(z)=0$. In the next two sections, we will apply this general result to derive the solutions.

\section{Model with Yaari's Dual Criterion}
\noindent
When $u(x)\equiv x$, the corresponding $V^{rdu}$ reduces to the so-called Yaari's dual
criterion (Yaari 1987). In this section we solve our insurance problem with Yaari's criterion
by applying Theorem \ref{theorem:main}.  In this case, the condition \eqref{optimal} is greatly simplified. Indeed, when $u(x)\equiv x$, \eqref{optimal} reduces to \\
\begin{align} \label{equation2}
  \widetilde{G}_\lambda'(z) \overset{a.e.}{=} \begin{cases}
            0, &\quad  \text{  if  }          \int_z^1 (\lambda - T'(t) ) dt=\lambda (1-z)-(1-T(z)) < 0, \\%[3pt]
                        \in [0,(F^{-1}_X)'(z)], &\quad  \text{  if  }        \int_z^1 (\lambda - T'(t) ) dt=\lambda (1-z)-(1-T(z)) = 0, \\%[3pt]
                                 (F^{-1}_X)'(z), &\quad  \text{  if  }      \int_z^1 (\lambda - T'(t) ) dt=\lambda (1-z)-(1-T(z)) > 0  .
          \end{cases}
\end{align}
\par
It should be noted that although $u(x)\equiv x$ is not strictly concave here,  the uniqueness of optimal solution to \eqref{orgi-6} is implied by the characterizing condition \eqref{equation2}.
\par
To apply \eqref{equation2}, we need to compare  $\lambda$ and $\frac{1-T(z)}{1-z}$.
Define $f(z):=\frac{1-T(z)}{1-z}$, $z\in[0,1)$.
\begin{lemma}\label{desp}
The function $f(\cdot)$ is a continuous function on $[0,1)$. Moreover, under Assumption \ref{assump:23}, there exists a unique $a \in (0,b)$ such that $f(\cdot)$ is strictly decreasing on $[0,a]$ and strictly increasing on $[a,1)$.
\end{lemma}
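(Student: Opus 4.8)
The plan is to analyze $f(z) = \frac{1-T(z)}{1-z}$ by studying the sign of its derivative, which should be controlled by the comparison between $T'(z)$ and $f(z)$ itself. Continuity on $[0,1)$ is immediate since $T$ is continuous and $1-z > 0$ there. For the shape statement, I would first compute $f'(z)$ on $(0,1)$, obtaining (after clearing denominators) that $f'(z)$ has the same sign as $g(z) := (1-z)T'(z) - (1-T(z))$, i.e. the same sign as $T'(z) - f(z)$. So the whole problem reduces to understanding where $T'(z)$ crosses $f(z)$.

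The key step is to study $g(z) = (1-z)T'(z) - 1 + T(z)$ directly. Note $g(0) = T'(0+) - 1 > 0$ by Assumption \ref{assump:23} (here one should check $g$ extends continuously to $0$; since $T$ is continuous with $T(0)=0$ and $T'(0+)$ exists, this is fine), so $f$ is initially strictly increasing — wait, that contradicts the claim, so let me recompute the sign: $f'(z) = \frac{-T'(z)(1-z) + (1-T(z))}{(1-z)^2}$, so $f'(z)$ has the sign of $(1-T(z)) - (1-z)T'(z) = -g(z)$. Thus $f$ is decreasing exactly where $g > 0$, in particular near $0$ since $g(0) = T'(0+)-1 > 0$. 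Good. Next, differentiate: $g'(z) = (1-z)T''(z) - T'(z) + T'(z) = (1-z)T''(z)$, which has the sign of $T''(z)$. By Assumption \ref{assump:23}, $T'' < 0$ on $(0,b)$ and $T'' > 0$ on $(b,1)$, so $g$ is strictly decreasing on $(0,b)$ and strictly increasing on $(b,1)$. Since $g(0+) > 0$ and $g(1-) = \lim_{z\uparrow 1}[(1-z)T'(z)] - 1 + T(1)$; as $T'(1-) = +\infty$ one must argue $\lim_{z\uparrow1}(1-z)T'(z) = 0$ (this needs a short argument, e.g. from $T$ being bounded and monotone — if $(1-z)T'(z)$ stayed bounded below by $c>0$ then $T(1)-T(z) \geq c\int_z^1 \frac{dt}{1-t} = \infty$, contradiction), giving $g(1-) = 0$. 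So $g$ starts positive at $0$, strictly decreases on $(0,b)$, strictly increases on $(b,1)$, and returns to $0$ at $1$. Since $g$ is strictly increasing on $(b,1)$ with limit $0$ at the right endpoint, $g < 0$ throughout $(b,1)$; combined with $g(0+)>0$ and strict monotonicity on $(0,b)$, there is a unique $a \in (0,b)$ with $g(a) = 0$, $g > 0$ on $[0,a)$ and $g < 0$ on $(a,1)$.

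Translating back: $f' < 0$ (i.e. $f$ strictly decreasing) on $(0,a)$ where $g>0$, and $f' > 0$ (strictly increasing) on $(a,1)$ where $g < 0$; one extends the monotonicity to the closed intervals $[0,a]$ and $[a,1)$ by continuity of $f$. The uniqueness of $a$ and the fact $a \in (0,b)$ follow from the strict monotonicity of $g$ on $(0,b)$ together with $g(b) < 0$ (since $g$ is strictly increasing on $(b,1)$ up to the value $0$). The main obstacle is the boundary behaviour at $z = 1$: because $T'(1-) = +\infty$, the expression $(1-z)T'(z)$ is an indeterminate form, and the argument that its limit is $0$ — which is what forces $g(1-)=0$ and hence pins down that $g$ is negative on all of $(b,1)$ — requires the integral estimate sketched above rather than a naive limit computation. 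Everything else is routine differentiation and sign-chasing.
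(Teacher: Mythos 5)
Your proof is correct and follows essentially the same route as the paper's: both reduce the claim to a sign analysis of the numerator of $f'$ (your $g$ is the negative of the paper's auxiliary function $p$), differentiate it to get $\pm(1-z)T''(z)$, and use the inverse-$S$ shape together with $T'(0+)>1$ to locate a unique sign change in $(0,b)$. The only cosmetic difference is at the right endpoint: the paper invokes convexity of $T$ on $[b,1]$ (tangent slope below secant slope) to get $p\geqslant 0$ there, whereas you compute $\lim_{z\uparrow 1}(1-z)T'(z)=0$ by an integral-divergence argument — both are valid, and your use of the monotonicity of $g$ on $(b,1)$ to guarantee the limit exists closes the loop.
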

\begin{proof}
We have $f'(z)=\frac{(1-T(z))-T'(z)(1-z)}{(1-z)^2}$, $z\in[0,1)$. Let $p(z):=(1-T(z))-T'(z)(1-z)$. Then $p'(z)=-T'(z)+T'(z)-T''(z)(1-z)=-T''(z)(1-z)$. It follows from Assumption \ref{assump:23} that $p'(z) > 0$ for $z \in (0,b)$ and $p'(z) < 0$  for $z \in (b,1)$. Moreover, $p(0+)=1-T'(0+)<0$, $p(b)=(1-T(b))-T'(b)(1-b)=\left(\frac{1-T(b)}{1-b}-T'(b)\right)(1-b)>0$, and $p(1-)=\lim_{z\uparrow1} (\frac{1-T(z)}{1-z}-T'(z))(1-z)\geqslant 0$ (noting $T(\cdot)$ is strictly convex on $[b,1]$). So, there exists $a \in (0,b)$ such that $p(z)<0$  for $z \in [0,a)$ and $p(z)>0$ for $z \in (a,1)$. The desired result follows.
\end{proof}
\par
Clearly, $f(0)=1, f(1-)=+\infty$.
Set $\widehat{\lambda}:=f(a) <f(0)=1$. From the proof of Lemma \ref{desp}, $a$ is determined
 by $T'(a)=\frac{1-T(a)}{1-a}$. Let $c \in (a,1]$ be the unique scalar such that $f(c)=1$ or $T(c)=c$. See Figure 1 for the locations of the points $a$ and $c$.
\par
Now, we proceed by considering three cases based on the value of $\lambda$.
\noindent
\begin{case}
$\lambda\leqslant \widehat{\lambda}$.
\end{case}
\par
In this case, $N_\lambda(z)=(1-z)(\lambda - f(z)) < 0 \quad \forall z\in [0,a)\cup(a,1]$. It then follows from $\eqref{equation2}$ that $\widetilde{G}_\lambda '(z) \overset{a.e.}{=} 0$; hence $\widetilde{G}_\lambda(z) = 0$ $\forall \ z\in [0,1]$.
Thus the corresponding retention $\widetilde{R}_\lambda(z)=0 \ \forall z \in [0,M]$ and indemnity  $\widetilde{I}_\lambda(z)=z$ $\forall z\in [0,M]$, namely, the optimal contract is a full insurance contract.

\begin{case}
$ \widehat{\lambda} < \lambda < 1 $.
\end{case}
\par
By Lemma \ref{desp}, there exist unique $x_0\in(0,a)$ and $y_0\in(a,c)$ such that $f(x_0)=f(y_0)=\lambda $. Accordingly, we have
\begin{align*} \label{diff}
   N_\lambda(z)=\begin{cases}
            < 0, &\quad  \text{  if  }           0< z < x_0,\\
            > 0, &\quad  \text{  if  }              x_0<z<y_0,\\
            < 0, &\quad  \text{  if  }            y_0 < z < 1.
          \end{cases}
\end{align*}
\par
Hence, \eqref{equation2} leads to  the following function:
\begin{align}
   \widetilde{G}_\lambda (z)=\begin{cases}
            0, &\quad  \text{  if  }           0\leqslant z < x_0, \\%[3pt]
            F^{-1}_X(z)-F^{-1}_X(x_0) , &\quad  \text{  if  }              x_0\leqslant z< y_0, \\
           F^{-1}_X(y_0)-F^{-1}_X(x_0) ,&\quad  \text{  if  }            y_0\leqslant z \leqslant 1.
          \end{cases}
\end{align}
The corresponding retention and indemnity functions are, respectively,
\begin{align*}
   \widetilde{R}_\lambda (z)\equiv \widetilde{G}_\lambda (F_X(z))=\begin{cases}
            0, &\quad  \text{  if  }             0\leqslant z < F^{-1}_X(x_0) ,\\%[3pt]
                       z-F^{-1}_X(x_0), &\quad  \text{  if  }           F^{-1}_X(x_0)\leqslant z < F^{-1}_X(y_0) ,\\
            F^{-1}_X(y_0)-F^{-1}_X(x_0),&\quad  \text{  if  }          F^{-1}_X(y_0)\leqslant z \leqslant M,
          \end{cases}
\end{align*}
and
\begin{align}
   \widetilde{I}_\lambda (z)\equiv z-\widetilde{R}_\lambda (z)=\begin{cases}
            z, &\quad  \text{  if  }           0\leqslant z < F^{-1}_X(x_0) ,\\%[3pt]
             F^{-1}_X(x_0) , &\quad  \text{  if  }        F^{-1}_X(x_0)\leqslant z < F^{-1}_X(y_0) ,\\
            z-F^{-1}_X(y_0)+F^{-1}_X(x_0),&\quad  \text{  if  }             F^{-1}_X(y_0)\leqslant z \leqslant M.
          \end{cases}
\end{align}
\par
The corresponding indemnity function is illustrated by Figure 2. Qualitatively, the insurance covers not only large losses (when $z\geqslant F^{-1}_X(y_0)$) but also {\it small} losses (when $z<F^{-1}_X(x_0)$), and the compensation is a constant for the median
range of losses. We term such a contract a {\it threefold} one. The need
for small loss coverage along with its connection to the probability weighting are
 amply discussed in Bernard et al. (2015). However, in Bernard et al. (2015) the optimal indemnity is strictly decreasing in some ranges of the losses. Such a contract may incentivize  the insured to hide partial losses in order to get more compensations. In contrast, both our indemnity and retention are increasing functions of the loss, which
  will rule out this sort of moral hazard.
\begin{figure}[htb]
  \centering
      \includegraphics[width=4 in]{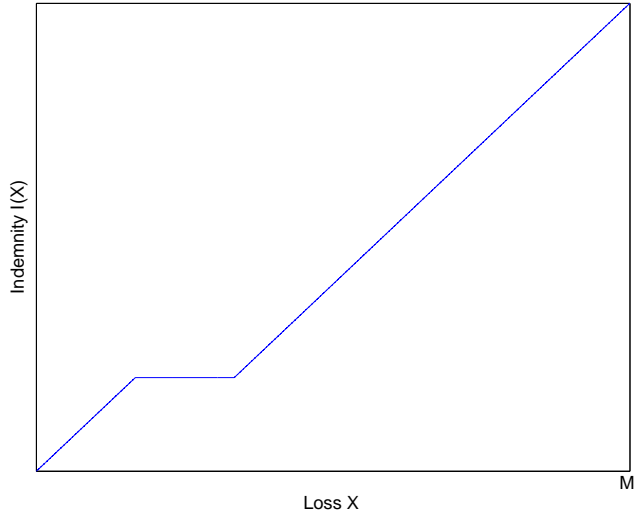}
      \caption{A threefold contract.}
\end{figure}
\begin{case}
$1 \leqslant \lambda < +\infty $
\end{case}
\par
By Lemma \ref{desp}, there exists a unique $z_0 \in [c,1]$ such that $f(z_0)=\lambda$. Thus
\begin{align*} \label{differ}
   N_\lambda(z)=\begin{cases}
            >0, &\quad  \text{  if  }              0<z<z_0,\\
            <0,&\quad  \text{  if  }            z_0 < z < 1.
          \end{cases}
\end{align*}
\noindent
By \eqref{equation2}, we have
\begin{align}
   \widetilde{G}_\lambda (z)=\begin{cases}
            F^{-1}_X(z), &\quad  \text{  if  }           0\leqslant z < z_0,\\%[3pt]
             F^{-1}_X(z_0),&\quad  \text{  if  }            z_0\leqslant z \leqslant 1.
          \end{cases}
\end{align}
So
%\begin{align*}
%  \widetilde{R}_\lambda (z)=\begin{cases}
%              z, &\quad  \text{  if  }         0\leqslant z < F^{-1}_X(z_0), \\%[3pt]
%            F^{-1}_X(z_0),&\quad  \text{  if  }             F^{-1}_X(z_0)\leqslant z \leqslant M,
%          \end{cases}
%\end{align*}
%and
\begin{align}
   \widetilde{I}_\lambda (z)\equiv z-\widetilde{R}_\lambda (z)=\begin{cases}
            0, &\quad  \text{  if  }            0\leqslant z < F^{-1}_X(z_0), \\%[3pt]
            z-F^{-1}_X(z_0),&\quad  \text{  if  }             F^{-1}_X(z_0)\leqslant z \leqslant M .
          \end{cases}
\end{align}
\par
This contract is a standard deductible contract in which only losses above a deductible point will be covered.
\par
We now summarize our results. Define
\begin{align}
   \bar{G}(z)=\begin{cases}
            F^{-1}_X(z), &\quad  \text{  if  }           0\leqslant z < c,\\%[3pt]
             F^{-1}_X(c),&\quad  \text{  if  }            c\leqslant z \leqslant 1,
          \end{cases}
\end{align}
and let $K_c:=\int_0^1 \bar{G}(z) dz$. Clearly $K_c\leqslant \int_0^1 F^{-1}_X(z) dz=E[X]$.
\begin{prop}\label{prop:yaari}
Under Yaari's criterion, $u(x)\equiv x$, and Assumptions \ref{assump:21} and \ref{assump:23}, we have the following conclusions:
\begin{enumerate}[(i)]
\item If $\Delta = 0$, then the optimal solution to \eqref{orgi-3} is ${G}^*(z)=0, 0\leqslant z\leqslant 1$.
\item  If $0<\Delta <K_c$, then the optimal solution to \eqref{orgi-3} is
\begin{align}\label{Gstar1}
   {G}^*(z)=\begin{cases}
            0, &\quad  \text{  if  }           0\leqslant z < d, \\%[3pt]
            F^{-1}_X(z)-F^{-1}_X(d) , &\quad  \text{  if  }              d\leqslant z< e, \\
           F^{-1}_X(e)-F^{-1}_X(d) ,&\quad  \text{  if  }            e\leqslant z \leqslant 1,
          \end{cases}
\end{align}
where $(d,e)$ is the unique pair satisfying $0\leqslant d<a<e\leqslant c,f(d)=f(e)$ and $\int_0^1 {G}^*(z) dz = \Delta$.
\item  If $K_c \leqslant \Delta \leqslant E[X] $, then the optimal solution to \eqref{orgi-3} is
\begin{align}
   {G}^*(z)=\begin{cases}
            F^{-1}_X(z), &\quad  \text{  if  }           0\leqslant z < q, \\%[3pt]
             F^{-1}_X(q),&\quad  \text{  if  }            q\leqslant z \leqslant 1,
          \end{cases}
\end{align}
where $q$ the unique number satisfying $c \leqslant q$ and  $\int_0^1 {G}^*(z) dz = \Delta$.
\end{enumerate}
\end{prop}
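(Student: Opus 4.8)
The plan is to close the Lagrangian loop of Section 2.2. For each admissible $\Delta$ we will exhibit a multiplier $\lambda^*\geqslant 0$ such that the solution $\widetilde{G}_{\lambda^*}(\cdot)$ of the auxiliary problem \eqref{orgi-6} --- already computed explicitly in the three cases above and characterized through Theorem \ref{theorem:main} by \eqref{equation2} --- binds the constraint of \eqref{orgi-3}, i.e. $\int_0^1\widetilde{G}_{\lambda^*}(z)\,dz=\Delta$. Once this is done, the usual weak--duality argument concludes: for any feasible $G(\cdot)$ of \eqref{orgi-3}, writing $U_\Delta(\lambda^*,G(\cdot))=\int_0^1 u(W_\Delta-G(z))T'(z)\,dz+\lambda^*\big(\int_0^1 G(z)\,dz-\Delta\big)$ and using $\lambda^*\geqslant 0$, $\int_0^1 G\,dz\geqslant\Delta$, and the optimality of $\widetilde{G}_{\lambda^*}$ for \eqref{orgi-6}, we get $\int_0^1 u(W_\Delta-G(z))T'(z)\,dz\leqslant U_\Delta(\lambda^*,G(\cdot))\leqslant U_\Delta(\lambda^*,\widetilde{G}_{\lambda^*}(\cdot))=\int_0^1 u(W_\Delta-\widetilde{G}_{\lambda^*}(z))T'(z)\,dz$, the last equality being the binding condition; since $\widetilde{G}_{\lambda^*}(\cdot)\in\mathbb{G}$ and satisfies the constraint, it is optimal for \eqref{orgi-3}. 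Hence everything reduces to the scalar function $\Phi(\lambda):=\int_0^1\widetilde{G}_\lambda(z)\,dz$.

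Part (i) needs no duality: for $\Delta=0$ the constraint $\int_0^1 G\,dz\geqslant 0$ is vacuous on $\mathbb{G}$, so the pointwise maximizer $G^*\equiv 0$ is optimal. For (ii) and (iii) we analyze $\Phi$. By Case 1, $\Phi\equiv 0$ on $[0,\widehat{\lambda}]$. By Case 2, for $\lambda\in(\widehat{\lambda},1)$ the solution $\widetilde{G}_\lambda$ is the threefold function built from the two roots $x_0(\lambda)\in(0,a)$, $y_0(\lambda)\in(a,c)$ of $f(\cdot)=\lambda$; since Lemma \ref{desp} makes $f$ strictly decreasing on $[0,a]$ and strictly increasing on $[a,1)$, the map $x_0(\cdot)$ is continuous and strictly decreasing, $y_0(\cdot)$ continuous and strictly increasing, with $x_0(\lambda),y_0(\lambda)\to a$ as $\lambda\downarrow\widehat{\lambda}$ and $x_0(\lambda)\to 0$, $y_0(\lambda)\to c$ as $\lambda\uparrow 1$ (using $f(0)=f(c)=1$). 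By Case 3, for $\lambda\in[1,\infty)$ we have $\widetilde{G}_\lambda(z)=F^{-1}_X(\min\{z,z_0(\lambda)\})$, where $z_0(\lambda)\in[c,1)$ is the unique root of $f(\cdot)=\lambda$ there; $z_0(\cdot)$ is continuous and strictly increasing with $z_0(1)=c$ and $z_0(\lambda)\to1$ as $\lambda\to\infty$ (using $f(1-)=+\infty$). Substituting these into $\Phi$ and using $F^{-1}_X(0)=0$, the continuity of $F^{-1}_X$, and dominated convergence (dominated by $M$), we find that $\Phi$ is continuous and nondecreasing on $[0,\infty)$, with $\Phi(\widehat{\lambda})=0$, $\Phi(1)=\int_0^1\bar{G}(z)\,dz=K_c$, and $\Phi(\lambda)\to E[X]=\int_0^1 F^{-1}_X(z)\,dz$ as $\lambda\to\infty$; monotonicity holds because $\lambda\mapsto\widetilde{G}_\lambda(z)$ is nondecreasing for each fixed $z$.

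Parts (ii) and (iii) then follow by the intermediate value theorem. For $0<\Delta<K_c$ choose $\lambda^*\in(\widehat{\lambda},1)$ with $\Phi(\lambda^*)=\Delta$ and set $d:=x_0(\lambda^*)$, $e:=y_0(\lambda^*)$: then $0<d<a<e<c$, $f(d)=f(e)$, $\int_0^1\widetilde{G}_{\lambda^*}=\Delta$, and $\widetilde{G}_{\lambda^*}$ is exactly the function in \eqref{Gstar1}. For $K_c\leqslant\Delta<E[X]$ choose $\lambda^*\in[1,\infty)$ with $\Phi(\lambda^*)=\Delta$ and set $q:=z_0(\lambda^*)\in[c,1)$; for $\Delta=E[X]$ take $q=1$, $G^*=F^{-1}_X$ (the unique feasible point of \eqref{orgi-3} in that case). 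The duality argument of the first paragraph then certifies optimality in every case. We expect the hardest step to be the \emph{strict} monotonicity of $\Phi$ --- needed for the uniqueness of $(d,e)$ and of $q$ --- because $X$ may have atoms (the main technical complication here), so $F^{-1}_X$ may be constant on subintervals and $\widetilde{G}_\lambda$ can then fail to vary with $\lambda$ over part of $[\widehat{\lambda},1)$. We would handle this by noting that $\Phi$ is strictly increasing exactly on the $\lambda$-range where $y_0(\lambda)$ (resp. $z_0(\lambda)$) leaves the flats of $F^{-1}_X$, that every target $\Delta\in(0,K_c)$ (resp. $[K_c,E[X])$) is attained precisely there, and that $\lambda\mapsto(x_0(\lambda),y_0(\lambda))$ (resp. $\lambda\mapsto z_0(\lambda)$) is injective on that range; in any case the optimal function $G^*$ is unique almost everywhere, since $\Phi$ constant on an interval forces $\widetilde{G}_\lambda$ to be constant there.
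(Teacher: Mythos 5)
Your proposal is correct and follows essentially the same route as the paper: both rest on the characterization \eqref{equation2} from Theorem \ref{theorem:main}, produce the candidate by matching the budget constraint $\int_0^1 G^*(z)\,dz=\Delta$, and close with the Lagrangian duality argument; your sweep of the multiplier via $\Phi(\lambda)=\int_0^1\widetilde{G}_\lambda(z)\,dz$ is just a reparameterization of the paper's direct selection of the pair $(d,e)$ (equivalently $q$) through $\lambda=f(d)=f(e)$. Your version is somewhat more explicit than the paper's about the weak-duality step and about the possible non-injectivity caused by flat pieces of $F^{-1}_X$ (atoms of $X$), and your resolution --- the optimal \emph{function} is unique even if the parameters are not --- is sound.
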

\begin{proof}
\begin{enumerate}[(i)]
\item When $\Delta=0$, the optimal solution to \eqref{orgi-3} is trivially ${G}^*(z)=0, 0\leqslant z\leqslant 1$.
\item When $0<\Delta <K_c$, there exists a unique pair $(d,e)$ such that
$0\leqslant d<a<e\leqslant c,f(d)=f(e)$ and $\int_0^1 {G}^*(z) dz = \Delta$ where $G^*$ is defined by \eqref{Gstar1}.
The existence of this pair follows from the condition that $\Delta<K_c$ and the definition of $K_c$, whereas the uniqueness comes from the requirement that $f(d)=f(e)$ and $\int_0^1 G^*(z) dz = \Delta$. Letting $\lambda\equiv \lambda_\Delta:=f(d)$, it is easy to show that $G^*(\cdot)$ satisfies \eqref{equation2} under $\lambda$, corresponding to the aforementioned Case 4.2.
This implies that $G^*(\cdot)$ is optimal for \eqref{orgi-3} under $\Delta$.
\item When $K_c \leqslant \Delta \leqslant E[X] $, a case corresponding to Case 4.3, the desired result can be derived similarly as in (ii).
\end{enumerate}
The proof is completed.
\end{proof}
\par
We are now in the position to state our main result in terms of the premium $\pi$ and the indemnity function $I(\cdot)$. Denote by $\pi_c:=(1+\rho)(E[X]-K_c)$.
\begin{thm}
Under Yaari's criterion, $u(x)\equiv x$, and  Assumptions \ref{assump:21} and \ref{assump:23}, the optimal indemnity function $I^*(\cdot)$ to Problem \eqref{orgi} is given as
\begin{enumerate}[(i)]
\item If $\pi\geqslant (1+\rho)E[X]$, then ${I}^*(z)=z$ $\forall z\in [0,M]$.
\item If $\pi_c<\pi <(1+\rho)E[X]$, then
\begin{align}
   {I}^*(z)=\begin{cases}
            z, &\quad  \text{  if  }           0\leqslant z < F^{-1}_X(d), \\%[3pt]
             F^{-1}_X(d) , &\quad  \text{  if  }        F^{-1}_X(d)\leqslant z < F^{-1}_X(e), \\
            z-F^{-1}_X(e)+F^{-1}_X(d),&\quad  \text{  if  }             F^{-1}_X(e)\leqslant z \leqslant M,
          \end{cases}
\end{align}
where $(d,e)$ is the unique pair satisfying  $0\leqslant d<a<e\leqslant c,f(d)=f(e)$ and $E[I^*(X)]=\frac{\pi}{1+\rho}$.
\item If $ 0 \leqslant \pi \leqslant \pi_c$, then
\begin{align}
   {I}^*(z)=\begin{cases}
            0, &\quad  \text{  if  }            0\leqslant z < F^{-1}_X(q), \\%[3pt]
            z-F^{-1}_X(q),&\quad  \text{  if  }             F^{-1}_X(q)\leqslant z\leqslant M,
          \end{cases}
\end{align}
where $q$ is the unique scalar satisfying $c \leqslant q$ and $E[I^*(X)]=\frac{\pi}{1+\rho}$.
\end{enumerate}
\end{thm}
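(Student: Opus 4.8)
The plan is to obtain this result simply as a translation of Proposition \ref{prop:yaari} back into the original variables, using the reduction chain set up in Section 2.2. First I would recall that $\Delta=E[X]-\frac{\pi}{1+\rho}$, so that $\pi\mapsto\Delta$ is a decreasing bijection from $(0,(1+\rho)E[X])$ onto $(0,E[X])$ carrying $\pi_c=(1+\rho)(E[X]-K_c)$ to $K_c$; hence the regimes $\pi_c<\pi<(1+\rho)E[X]$ and $0\le\pi\le\pi_c$ correspond exactly to $0<\Delta<K_c$ and $K_c\le\Delta\le E[X]$, i.e. to parts (ii) and (iii) of Proposition \ref{prop:yaari}. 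For part (i), when $\pi\ge(1+\rho)E[X]$ the pointwise optimality of the full-coverage contract $I^*(z)\equiv z$ was already established at the end of Section 2.1, so nothing further is needed (this being consistent with the $\Delta=0$ case of Proposition \ref{prop:yaari}).

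For parts (ii) and (iii), I would start from the optimal quantile function $G^*(\cdot)$ supplied by Proposition \ref{prop:yaari}(ii) (resp. (iii)), form $R^*(z)=G^*(F_X(z))$, which is optimal for \eqref{orgi-2} by Lemma \ref{lemma:quantile} and the accompanying discussion (the objective of \eqref{orgi-2} is law-invariant in $R(X)$, hence depends on $R$ only through $F^{-1}_{R(X)}=G^*$, and $E[R(X)]=\int_0^1 G^*(z)\,dz$), and then set $I^*(z)=z-R^*(z)$, which is optimal for \eqref{orgi} because $I\mapsto R=X-I$ is an affine bijection between the feasible sets with $V^{rdu}(W_0-\pi-X+I(X))=V^{rdu}(W_\Delta-R(X))$ and $(1+\rho)E[I(X)]\le\pi\iff E[R(X)]\ge\Delta$. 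Plugging the piecewise formula for $G^*$ into $R^*(z)=G^*(F_X(z))$ and invoking the identity $F^{-1}_X(F_X(x))\equiv x$ from Assumption \ref{assump:21}, the three branches $\{F_X(z)<d\}$, $\{d\le F_X(z)<e\}$, $\{F_X(z)\ge e\}$ become $\{z<F^{-1}_X(d)\}$, $\{F^{-1}_X(d)\le z<F^{-1}_X(e)\}$, $\{z\ge F^{-1}_X(e)\}$, yielding $I^*(z)=z$, $I^*(z)=F^{-1}_X(d)$, and $I^*(z)=z-F^{-1}_X(e)+F^{-1}_X(d)$ respectively; case (iii) is identical with $q$ in place of $(d,e)$. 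Finally, the budget identity $\int_0^1 G^*(z)\,dz=\Delta$ from Proposition \ref{prop:yaari} becomes $E[I^*(X)]=E[X]-E[R^*(X)]=E[X]-\int_0^1 G^*(z)\,dz=E[X]-\Delta=\frac{\pi}{1+\rho}$, which is the binding participation constraint in the statement, and uniqueness of $(d,e)$ (resp. $q$) is inherited verbatim from Proposition \ref{prop:yaari}.

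I do not expect a genuine obstacle, since the theorem merely re-expresses Proposition \ref{prop:yaari}. The two points that do require care are: translating the thresholds stated in terms of $F_X(z)$ into thresholds $F^{-1}_X(d),F^{-1}_X(e),F^{-1}_X(q)$ on $z$ in the presence of atoms of $X$ — this is exactly where the strict monotonicity of $F_X$ and the identity $F^{-1}_X(F_X(x))\equiv x$ from Assumption \ref{assump:21} are used, so that e.g. $\{d\le F_X(z)\}$ and $\{z\ge F^{-1}_X(d)\}$ agree up to a $\BP$-null set; and confirming that the participation constraint is active at the optimum, which is automatic because the Lagrange multiplier $\lambda^*$ underlying $G^*$ is, by the construction in Section 2.2, chosen precisely so that $\int_0^1 G^*(z)\,dz=\Delta$.
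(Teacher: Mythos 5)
Your proposal is correct and follows the same route as the paper: the paper's own proof simply observes that $\Delta=E[X]-\frac{\pi}{1+\rho}$ converts the budget constraint $\int_0^1 G^*(z)\,dz=\Delta$ into $E[I^*(X)]=\frac{\pi}{1+\rho}$ and then invokes Proposition \ref{prop:yaari} together with the substitutions $R^*(z)=G^*(F_X(z))$ and $I^*(z)=z-R^*(z)$ already set up in Section 2.2. You merely spell out the details (regime correspondence, branch translation via $F^{-1}_X(F_X(x))\equiv x$) that the paper leaves implicit.
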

\begin{proof}
Since $\Delta=E[X]-\frac{\pi}{1+\rho}$, the constraint $E[R(X)]\equiv \int_0^1 {G}_{R(X)}(z) dz = \Delta$ is equivalent to that $E[I(X)]=E[X]-E[R(X)]= \frac{\pi}{1+\rho}$. The desired result is then a direct consequence of Proposition \ref{prop:yaari}.
\end{proof}
\par
The economic interpretation of this result is clear. When the premium is small ($ 0 \leqslant \pi \leqslant \pi_c$), the insurance only compensates large losses in excess of certain amount. When the premium is in middle range ($\pi_c<\pi <(1+\rho)E[X]$), the contract is a threefold one, covering both small and large losses. When the premium is sufficiently large ($\pi\geqslant (1+\rho)E[X]$), it is a full coverage.
\par
It is interesting to  investigate the comparative statics of the point $\pi_c$ (in terms of $c$) that triggers the coverage for small losses. In fact, as $K_c=\int_0^c F^{-1}_X(z) dz + F^{-1}_X(c)(1-c) $, we have $\frac{\partial K_c}{\partial c}=(1-c)(F^{-1}_X)'(c).$ However, $\pi_c=(1+\rho)(E[X]-K_c)$; hence $\frac{\partial \pi_c}{\partial c}=(1+\rho)(c-1)(F^{-1}_X)'(c) <0.$ This implies that the insurer is more willing to be protected against small losses if his weighting function has a  bigger $c$. This is consistent with the fact that a bigger $c$ renders a larger concave domain of the probability weighting that overweighs small losses (refer to Figure 1).

 %%%%%%%%%%%%%%%%%%%%%%%%%%%%
\section{Model with the RDU Criterion }
%%%%%%%%%%%%%%%%%%%%%%%%%%%%
\noindent
In this section we study the general RDU model in which the utility function is strictly concave. Compared with the Yaari model, solving the corresponding insurance problem calls for a more delicate analysis.
\par
For any twice differentiable function $f$ with $f'(x)\neq 0$, define its {\it Arrow-Pratt measure of absolute risk aversion} $A_f(x):=-\frac{f''(x)}{f'(x)}$.
We now introduce the following assumptions.
\begin{assump}\label{scu}
(Strictly Concave Utility) The  utility function $u:\mathbb{R}^+\mapsto \mathbb{R}^+$ is strictly increasing and  twice differentiable. Furthermore, $u'$ is strictly decreasing.
\end{assump}
\begin{assump}\label{ap}
\begin{enumerate}[(i)]
\item The function  $A_u(z)$ is decreasing on $(0,\infty)$.
\item $A_T(z)> A_u(W-F^{-1}_X(z))(F^{-1}_X)'(z)$ $\forall z \in (0,a]$.
\end{enumerate}
\end{assump}
\par
Assumption \ref{scu} is to replace Assumption \ref{assump:22}, ensuring a genuine RDU criterion. Assumption \ref{ap}-(i) requires that the absolute risk aversion measure of the utility function $u$ be decreasing, which holds true for many frequently used utility functions including logarithmic, power and exponential utilities. In general, experimental and empirical evidences are consistent with the decreasing absolute risk aversion; see e.g. Friend, Irwin and Blume, Marshall (1975). On the other hand, $A_T(z)$, $z \in (0,a]$, measures the level of probability weighting for small losses. The economical interpretation of Assumption \ref{ap}-(ii) is, therefore, that the degree of the insured's concern for small losses is sufficiently large relative to the absolute risk aversion of the utility function.
Note that Assumption \ref{ap}-(ii) is automatically satisfied when $F^{-1}_X(z)=0$, $\forall z \in [0,a]$, which is equivalent to $\mathbb{P}(X=0)\geqslant a$. In practice, $\mathbb{P}(X=0)\geqslant 0.5$ is a plausible assumption for many insurance products such as automobile and house insurance. On the other hand, $a$ is very small for many
commonly used inverse-$S$ shaped weighting functions. Take Tversky and Kahneman's weighting function \eqref{TK-distortion} as an example, $a\approx 0.013$
when $\theta=0.3$,  $a\approx 0.07$ when $\theta=0.5$, and $a\approx 0.166$ when $\theta=0.8$. In these cases, Assumption \ref{ap}-(ii) holds automatically.  

\par
Problem \eqref{orgi-3} has trivial solutions in the following two cases. When $\Delta=0$,  the optimal solution is $G^*(z)=0$ $\forall z\in [0,1]$, corresponding to a full coverage. When $\Delta=E[X]$, the optimal solution is $G^*(z)=F^{-1}_X(z)$ $\forall z\in [0,1]$ as it is the only feasible solution, corresponding to no coverage.
\par
So we are interested in only the case  $0<\Delta<E[X]$. It follows from Proposition \ref{prop-dual} in Appendix C that there exists $\lambda^*$ such that $\widetilde{G}_{\lambda^*}(\cdot)$ is optimal solution to \eqref{orgi-6} under $\lambda^*$ and $\int_0^1 \widetilde{G}_{\lambda^*}(z) dz = \Delta$. Furthermore, recall that we have proved that \eqref{orgi-6} has a unique solution when $u$ is strictly concave and \eqref{optimal} provides the necessary and sufficient condition for the optimal solution.

\begin{lemma}\label{lemma:51}
For any $G(\cdot)\in {\mathbb{G}}$, if there exists $z\in(0,1)$ such that $\lambda- u'(W_\Delta-G(z))T'(z)=\int_z^1 [ \lambda- u'(W_\Delta-G(t))T'(t) ] dt = 0$, then $z\leqslant a$.
\end{lemma}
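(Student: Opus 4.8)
\textbf{Proof proposal for Lemma \ref{lemma:51}.}

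The plan is to argue by contradiction: suppose $z\in(a,1)$ satisfies both $\lambda-u'(W_\Delta-G(z))T'(z)=0$ and $\int_z^1[\lambda-u'(W_\Delta-G(t))T'(t)]\,dt=0$, and derive an inconsistency from the shape properties of $T'$ on $(a,1)$. The first equality fixes the value $\lambda=u'(W_\Delta-G(z))T'(z)$; the strategy is to show that, because $T'$ is strictly increasing on $(b,1)\supseteq$ (a neighborhood to the right of any point past $b$) and $G$ is increasing (so $u'(W_\Delta-G(\cdot))$ is increasing, as $u'$ is decreasing), the integrand $\lambda-u'(W_\Delta-G(t))T'(t)$ is eventually strictly negative on $(z,1)$, forcing the integral $\int_z^1$ to be strictly negative rather than zero. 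The subtlety is the interval $(a,b)$, where $T'$ is still decreasing, so I cannot immediately claim monotonicity of the whole integrand; here is where the precise definition of $a$ from Lemma \ref{desp} must be used.

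First I would record the monotonicity facts: since $G(\cdot)\in\mathbb G$ is increasing and $u'$ is decreasing (Assumption \ref{assump:22} or \ref{scu}), the map $t\mapsto u'(W_\Delta-G(t))$ is increasing on $[0,1]$. Next, I would introduce $h(t):=\lambda-u'(W_\Delta-G(t))T'(t)$ and analyze its sign structure on $(z,1)$. The key comparison is with the Yaari-type quantity: recall from the proof of Lemma \ref{desp} that $a$ is characterized by $T'(a)=\frac{1-T(a)}{1-a}=f(a)$, and $f$ is strictly increasing on $[a,1)$ with $f(1-)=+\infty$. The idea is that the condition $\int_z^1 h(t)\,dt=0$ with $h(z)=0$ says that the "average" of the integrand past $z$ equals its value at $z$; combined with a monotonicity/convexity argument on $u'(W_\Delta-G(t))T'(t)$ past the point $z$, this can only happen if $z$ is to the left of where $T'$ (equivalently $f$) starts its strictly increasing regime relative to the correction coming from $u'$. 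I would make this rigorous by showing $u'(W_\Delta-G(t))T'(t)$ is, on $(z,1)$, bounded below by $u'(W_\Delta-G(z))\cdot$ (something increasing past $z$), pushing the integrand negative.

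The main obstacle, and the step I would spend the most care on, is handling the region where $T'$ is still decreasing, i.e., when $z$ could a priori lie in $(a,b)$: on that piece $u'(W_\Delta-G(t))$ increases but $T'(t)$ decreases, so the product's monotonicity is not automatic and a cruder sign argument fails. I expect the resolution to mirror the device in Lemma \ref{desp}: integrate by parts or compare directly against the linear benchmark $\lambda(1-t)$ versus $\int_t^1 u'(\cdot)T'$, using that at $t=a$ the borderline identity $T'(a)(1-a)=1-T(a)$ holds, and that for $t>a$ the integral $\int_t^1 T'(s)\,ds=1-T(t)$ grows slower than $T'(t)(1-t)$ precisely because $f$ is increasing there. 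The decreasing-absolute-risk-aversion hypotheses (Assumption \ref{ap}) do not seem needed for this particular lemma — only the shape of $T$ and the monotonicity of $G$ and $u'$ — so I would keep the argument to those ingredients and flag that the contradiction ultimately reduces to $f(z)>f(a)$ being incompatible with the two simultaneous equalities when $z>a$.
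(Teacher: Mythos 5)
Your plan is essentially the paper's own proof: use that $G$ is increasing and $u'$ decreasing to bound $u'(W_\Delta-G(t))\geqslant u'(W_\Delta-G(z))$ for $t\geqslant z$, substitute $\lambda=u'(W_\Delta-G(z))T'(z)$ from the pointwise equality, and obtain $\int_z^1[\lambda-u'(W_\Delta-G(t))T'(t)]\,dt\leqslant u'(W_\Delta-G(z))\left[(1-z)T'(z)-(1-T(z))\right]<0$ for $z>a$ by Lemma \ref{lemma:A1}-(i), contradicting the vanishing of the integral. The only caveats are that the execution is this single one-line estimate rather than the delicate case analysis on $(a,b)$ you anticipate, and that your preliminary remark that eventual negativity of the integrand ``forces'' the integral to be negative is a non sequitur which you rightly abandon in favor of the correct comparison.
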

\begin{proof}
From $\lambda- u'(W_\Delta-G(z))T'(z)=0$, it follows
$u'(W_\Delta-G(z))=\frac{\lambda}{T'(z)}$. Hence, if $z>a$, then
\begin{eqnarray*}
0= && \int_z^1 \left[ \lambda- u'(W_\Delta-G(t))T'(t) \right] dt \\
\leqslant && \int_z^1 \left[ \lambda- u'(W_\Delta-G(z))T'(t) \right] dt
=  \frac{\lambda}{T'(z)}(1-z) \left[T'(z)-\frac{1-T(z)}{1-z}\right]<0,
\end{eqnarray*}
where the last inequality is due to Lemma \ref{lemma:A1}-(i) in Appendix A, noting $z>a$. This is a contradiction.
\end{proof}

\begin{lemma}\label{lemma:52}
Under Assumption \ref{ap}, for any \ $G(\cdot) \in {\mathbb{G}}$, $u'(W_\Delta-G(z))T'(z)$ is a strictly decreasing function of $z$ on $[0,a]$.
\end{lemma}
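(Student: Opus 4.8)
The plan is to write $h(z):=u'(W_\Delta-G(z))\,T'(z)$ and show that $h$ is strictly decreasing on $[0,a]$ by verifying that $h'(z)<0$ for a.e.\ $z\in[0,a]$ and then integrating. Note first that $h>0$ (since $u'>0$ by Assumption \ref{scu} and $T'>0$ on $(0,1)$ by Assumption \ref{assump:23}), and that $h$ is locally absolutely continuous on $(0,1)$: members of $\mathbb{G}$ are absolutely continuous, $u'$ is $C^1$ on the compact range $[\,W_\Delta-G(a),\,W_\Delta\,]$ of the inner argument (hence Lipschitz there), and $T'$ is differentiable and monotone on $(0,b)\supseteq(0,a]$, hence absolutely continuous on compact subintervals. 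Thus at any $z\in(0,a)$ where $G'(z)$ exists,
\begin{align*}
h'(z) &= -u''(W_\Delta-G(z))\,G'(z)\,T'(z) + u'(W_\Delta-G(z))\,T''(z)\\
&= h(z)\left[A_u(W_\Delta-G(z))\,G'(z) - A_T(z)\right].
\end{align*}
Since $h(z)>0$, everything reduces to proving the pointwise bound
\[
A_u(W_\Delta-G(z))\,G'(z) < A_T(z)\qquad\text{for a.e. }z\in[0,a].
\]

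For this key inequality I would bridge the arbitrary $G$ to the benchmark $F^{-1}_X$ in two steps. First, because $\BP(R(X)\leqslant x)\geqslant \BP(X\leqslant x)$ for the $R(\cdot)\in\mathcal{R}$ associated with $G$ (using $R(x)\leqslant x$), the defining infimum gives $G(z)=F^{-1}_{R(X)}(z)\leqslant F^{-1}_X(z)$ for all $z$; combined with $W_\Delta=W+(1+\rho)\Delta\geqslant W$ this yields $W_\Delta-G(z)\geqslant W-F^{-1}_X(z)\geqslant W-M\geqslant 0$. Since $u'$ is strictly decreasing (Assumption \ref{scu}) we have $A_u\geqslant 0$, and since $A_u$ is decreasing on $(0,\infty)$ (Assumption \ref{ap}-(i)) it follows that $0\leqslant A_u(W_\Delta-G(z))\leqslant A_u(W-F^{-1}_X(z))$. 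Multiplying this with $0\leqslant G'(z)\leqslant (F^{-1}_X)'(z)$ (the defining property of $\mathbb{G}$), all factors being nonnegative, gives
\[
A_u(W_\Delta-G(z))\,G'(z)\ \leqslant\ A_u(W-F^{-1}_X(z))\,(F^{-1}_X)'(z)\ <\ A_T(z),
\]
where the last strict inequality is exactly Assumption \ref{ap}-(ii), valid for $z\in(0,a]$. Hence $h'(z)<0$ for a.e.\ $z\in[0,a]$.

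Finally, for $0<z_1<z_2\leqslant a$, absolute continuity of $h$ on $[z_1,z_2]$ gives $h(z_2)-h(z_1)=\int_{z_1}^{z_2}h'(t)\,dt<0$; the endpoint case $z_1=0$ follows since $h$ is continuous at $0$ when $T'(0+)<\infty$ (and $h(0+)=+\infty$ otherwise). This proves $h$ is strictly decreasing on $[0,a]$.

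The step I expect to be the main obstacle is the bridging chain in the second paragraph: one has to notice that feasibility in $\mathcal{R}$/$\mathbb{G}$ forces both $G\leqslant F^{-1}_X$ (so that decreasingness of $A_u$ can be invoked for the argument) and $G'\leqslant (F^{-1}_X)'$ (for the multiplicative factor), which together let Assumption \ref{ap}-(ii)---stated only in terms of $F^{-1}_X$---transfer uniformly to every admissible $G$. The differentiation of the product and the measure-theoretic passage from ``$h'<0$ a.e.'' to ``$h$ strictly decreasing'' are routine given the stated regularity of $u$, $T$, and the elements of $\mathbb{G}$.
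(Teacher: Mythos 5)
Your proof is correct and follows essentially the same route as the paper's: differentiate $u'(W_\Delta-G(z))T'(z)$, rewrite the derivative as $h(z)[A_u(W_\Delta-G(z))G'(z)-A_T(z)]$, and bound $A_u(W_\Delta-G(z))G'(z)$ by $A_u(W-F^{-1}_X(z))(F^{-1}_X)'(z)$ using $G\leqslant F^{-1}_X$, $G'\leqslant (F^{-1}_X)'$, $W_\Delta\geqslant W$ and the monotonicity of $A_u$, before invoking Assumption \ref{ap}-(ii). The extra care you take with absolute continuity and the endpoint $z=0$ goes slightly beyond the paper's computation but changes nothing substantive.
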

\begin{proof}
Noting $W_\Delta\geqslant W$, it follows from Assumption \ref{ap} that $A_T(z)> A_u(W-F^{-1}_X(z))(F^{-1}_X)'(z) \geqslant A_u(W_\Delta-F^{-1}_X(z))(F^{-1}_X)'(z) $ $\forall z \in (0,a]$. Now, we compute  the partial derivative of $u'(W_\Delta-G(z))T'(z)$ with respect to $z\in [0,a]$:
\begin{eqnarray*}
&& \frac{\partial}{\partial z}(u'(W_\Delta-G(z))T'(z)) \\
=&& -u''(W_\Delta-G(z))G'(z)T'(z)+u'(W_\Delta-G(z))T''(z) \\
=&& u'(W_\Delta-G(z))T'(z)\left[A_u(W_\Delta-G(z))G'(z)-A_T(z)\right] \\
<&& u'(W_\Delta-G(z))T'(z)\left[A_u(W_\Delta-G(z))G'(z)-A_u(W_\Delta-F^{-1}_X(z))(F^{-1}_X)'(z)\right]  \\
\leqslant && u'(W_\Delta-G(z))T'(z)\left[A_u(W_\Delta-G(z))(F^{-1}_X)'(z)-A_u(W_\Delta-F^{-1}_X(z))(F^{-1}_X)'(z)\right]  \\
\leqslant && u'(W_\Delta-G(z))T'(z)\left[A_u(W_\Delta-F^{-1}_X(z))(F^{-1}_X)'(z)-A_u(W_\Delta-F^{-1}_X(z))(F^{-1}_X)'(z)\right] \\
=&& 0.
\end{eqnarray*}
The proof is complete.
\end{proof}
\par
Now, for any $\lambda \leqslant \widehat{\lambda} u'(W_\Delta)$, we have
\begin{eqnarray*}
\int_z^1 \left[ \lambda- u'(W_\Delta-\widetilde{G}_\lambda(t))T'(t) \right] dt
\leqslant && \int_z^1 \left[ \widehat{\lambda}u'(W_\Delta)- u'(W_\Delta)T'(t) \right] dt \\
= && u'(W_\Delta) \int_z^1 [\widehat{\lambda}-T'(t)] dt
= u'(W_\Delta)(1-z)\left[\widehat{\lambda}-\frac{1-T(z)}{1-z}\right]< 0,
\end{eqnarray*}
where the last inequality is due to Lemma \ref{desp}.
Hence $\widetilde{G}_\lambda(z)=0$ $\forall z \in [0,1]$ is the only solution satisfying \eqref{optimal}. However, $\int_0^1 \widetilde{G}_{\lambda}(z) dz = 0<\Delta$, a contradiction. Therefore, only when $\lambda > \widehat{\lambda}u'(W_\Delta)$ is it possible for \eqref{optimal} to hold.

\par
Fixing $\lambda > \widehat{\lambda}u'(W_\Delta)$, we now analyze the shape of the function $\widetilde{G}_\lambda(\cdot)$ that satisfies \eqref{optimal}. Assume that $\widetilde{G}_\lambda(1)=k<W_\Delta$. We have $N_\lambda(1)=0$ and $\lambda- u'(W_\Delta-k)T'(1-)<0$ since $T'(1-)=+\infty$. So, $\widetilde{G}_\lambda'(z)=0$ when $z$ is close to 1 since $N_\lambda(z)<0$ for such $z$. Hence, $\widetilde{G}_\lambda(z)\equiv k \ \forall z\in [z_1,1]$ for some $z_1 \in [0,1)$, at which $N_\lambda(z_1)=0$ and $N_\lambda(z)<0$ for $\forall z \in (z_1,1)$. Next, we
consider three cases respectively depending on the value of $k$.
\par
(A) If $k>W_\Delta-(u')^{-1}(\frac{\lambda}{\widehat{\lambda}})$ (i.e. $\lambda<\widehat{\lambda}u'(W_\Delta-k)$), then we have, $\forall z\in [0,1)$
\begin{eqnarray*}\label{z1z1}
\int_{z}^1 \left[ \lambda- u'(W_\Delta-k)T'(t) \right] dt
< && \int_z^1 \left[ \widehat{\lambda}u'(W_\Delta-k)- u'(W_\Delta-k)T'(t) \right] dt \\
%= && u'(W_\Delta-k) \int_z^1 [\widehat{\lambda}-T(t)] dt \\
= && u'(W_\Delta-k)(1-z)\left[\widehat{\lambda}-\frac{1-T(z)}{1-z}\right]
\leqslant  0.
\end{eqnarray*}
It then follows from \eqref{optimal} that $\widetilde{G}_\lambda(z)\equiv k=\widetilde{G}_\lambda(0)=0$ . However, $0=k>W_\Delta-(u')^{-1}(\frac{\lambda}{\widehat{\lambda}})$, or $\lambda \leqslant \widehat{\lambda}u'(W_\Delta)$,  leading to a contradiction. So, this case  in fact will not take place.
\par
(B) If $k=W_\Delta-(u')^{-1}(\frac{\lambda}{\widehat{\lambda}})$, then $z_1$ should be $a$. This is because $\int_{a}^1 [ \lambda- u'(W_\Delta-k)T'(t) ] dt=0$ and $\int_{z}^1 [ \lambda- u'(W_\Delta-k)T'(t) ] dt=\frac{\lambda}{\widehat{\lambda}}(1-z)(\widehat{\lambda}-\frac{1-T(z)}{1-z})<0$ for $z\in (a,1)$ by Lemma \ref{desp}. %\{{\bf Zhou: I cannot follow here. Zhuang: I have addressed this problem and made it more clearly.}}
Moreover, $\lambda- u'(W_\Delta-k)T'(a)$=0. By Lemma \ref{lemma:52}, $\lambda- u'(W_\Delta-\widetilde{G}_\lambda(z))T'(z)$ strictly increases with respect to $z\in [0,a]$. It follows that $\lambda- u'(W_\Delta-\widetilde{G}_\lambda(z))T'(z)<0$ for $z\in [0,a)$. Then \eqref{optimal} implies $\widetilde{G}_\lambda'(z)=0$ for $z\in(0,a)$. As a result, $k=\widetilde{G}_\lambda(a)=\widetilde{G}_\lambda(0)=0$, or $\lambda=\widehat{\lambda}u'(W_\Delta)$, which is a contradiction. So, again, this case will not occur.
\par
(C) If $k<W_\Delta-(u')^{-1}(\frac{\lambda}{\widehat{\lambda}})$, then $z_1\in (a,1)$ exists. By Lemma \ref{lemma:51}, we have $\lambda- u'(W_\Delta-k)T'(z_1)>0$. Hence, there may or may not exist $z_2\in (0,1)$ such that $N_\lambda(z_2)=0$ and $N_\lambda(z)>0$ for $z\in(z_2,z_1)$. We now discuss four subcases depending on the existence and location of $z_2$.
\par
(C.1) If $z_2$ does not exist or $z_2=0$ (i.e. $N_\lambda(z)>0$ for $z\in(0,z_1)$), then by \eqref{optimal}, $\widetilde{G}_\lambda'(z)=(F^{-1}_X)'(z)$ for $z\in(0,z_1)$. Combined with the fact that $\widetilde{G}_\lambda(0)=0$, we have:
\begin{align*}
   \widetilde{G}_\lambda(z)=\begin{cases}
            F^{-1}_X(z), &\quad  \text{  if  }           0\leqslant z < z_1,\\
             F^{-1}_X(z_1),&\quad  \text{  if  }            z_1 \leqslant  z \leqslant 1.
          \end{cases}
\end{align*}
This corresponds to a deductible contract.
\par
(C.2) If $z_2$ exists and $z_2 \in (0,a]$, then $\widetilde{G}_\lambda'(z)=(F^{-1}_X)'(z)$ for $z\in(z_2,z_1)$ in view of \eqref{optimal}. Combining the property of $z_1$ and $z_2$, we deduce $\lambda- u'(W_\Delta-\widetilde{G}_\lambda(z_2))T'(z_2)\leqslant 0$. Then, using Lemma \ref{lemma:52}, we have $\lambda- u'(W_\Delta-\widetilde{G}_\lambda(z))T'(z)< 0$ for $z\in [0,z_2)$. It follows from  \eqref{optimal} that $\widetilde{G}_\lambda'(z)=0$ for $z\in(0,z_2)$. In this case, we can express $\widetilde{G}_\lambda(\cdot)$ as follows
\begin{align*}
   \widetilde{G}_\lambda(z)=\begin{cases}
            0, &\quad  \text{  if  }              0 \leqslant z< z_2,\\
            F^{-1}_X(z)-F^{-1}_X(z_2), &\quad  \text{  if  }              z_2 \leqslant z< z_1,\\
            F^{-1}_X(z_1)-F^{-1}_X(z_2),&\quad  \text{  if  }            z_1 \leqslant z \leqslant 1.
          \end{cases}
\end{align*}
This is the threefold contract, depicted in Figure 2.
\par
(C.3) If $z_2$ exists and $z_2 \in (b,1)$ (recall that $b$ is the turning point where the weighting function $T(\cdot)$ changes from being concave to convex), then a similar analysis as in Case C.2 shows that $\lambda- u'(W_\Delta-\widetilde{G}_\lambda(z_1))T'(z_1)>0$ and $\lambda- u'(W_\Delta-\widetilde{G}_\lambda(z_2))T'(z_2)<0$. This means $u'(W_\Delta-\widetilde{G}_\lambda(z_2))T'(z_2)>u'(W_\Delta-\widetilde{G}_\lambda(z_1))T'(z_1)$. However, $u'(W_\Delta-\widetilde{G}_\lambda(z_1)) \geqslant u'(W_\Delta-\widetilde{G}_\lambda(z_2))>0$ and $T'(z_1)>T'(z_2)>0$, which is a contradiction. So, this case is not feasible.
\par
(C.4) If $z_2$ exists and $z_2 \in (a,b]$, then $\lambda- u'(W_\Delta-\widetilde{G}_\lambda(z_2))T'(z_2)<0$. We prove $\widetilde{G}_\lambda(z)\equiv \widetilde{G}_\lambda(z_2) \ \forall z\in [0,z_2]$. In fact, if it is false, then there exists $z_3$ such that $\int_{z_3}^{z_2} [\lambda-u'(W_\Delta-\widetilde{G}_\lambda(z_2))T'(t)] dt=0$ and $\int_{z}^{z_2} [\lambda-u'(W_\Delta-\widetilde{G}_\lambda(z_2))T'(t)] dt<0$ for $z\in (z_3,z_2)$. However, $\lambda-u'(W_\Delta-\widetilde{G}_\lambda(z_2))T'(z)<\lambda-u'(W_\Delta-\widetilde{G}_\lambda(z_2))T'(z_2)<0$ for $z\in (z_3,z_2)$ since $z_2 \in (a,b]$. So, $\int_{z_3}^{z_2} [\lambda-u'(W_\Delta-\widetilde{G}_\lambda(z_2))T'(t)] dt<(\lambda-u'(W_\Delta-\widetilde{G}_\lambda(z_2))T'(z_2))(z_2-z_3)<0$, arriving at a contradiction. Therefore, $k=F^{-1}_X(z_1)-F^{-1}_X(z_2)$. From $\int_{z_1}^1 [ \lambda- u'(W_\Delta-k)T'(t) ] dt=0 $, it follows $\lambda=u'(W_\Delta-k)\frac{1-T(z_1)}{1-z_1}=u'(W_\Delta+F^{-1}_X(z_2)-F^{-1}_X(z_1))\frac{1-T(z_1)}{1-z_1}$. However,
\begin{eqnarray*}
&&  \int_{z_2}^{z_1} \left[ u'(W_\Delta+F^{-1}_X(z_2)-F^{-1}_X(z_1))\frac{1-T(z_1)}{1-z_1}- u'(W_\Delta+F^{-1}_X(z_2)-F^{-1}_X(t))T'(t) \right] dt \\
> &&   \int_{z_2}^{z_1} \left[ u'(W_\Delta+F^{-1}_X(z_2)-F^{-1}_X(z_1))\frac{1-T(z_1)}{1-z_1}- u'(W_\Delta+F^{-1}_X(z_2)-F^{-1}_X(z_1))T'(t) \right] dt \\
%=&& u'(W_\Delta+F^{-1}_X(z_2)-F^{-1}_X(z_1)) \int_{z_2}^{z_1} \left[ \frac{1-T(z_1)}{1-z_1}- T'(t) \right] dt \\
=&& u'(W_\Delta+F^{-1}_X(z_2)-F^{-1}_X(z_1))(z_1-z_2) \left[ \frac{1-T(z_1)}{1-z_1}- \frac{T(z_1)-T(z_2)}{z_1-z_2} \right]>0,
%>&& 0  \ \ \ ( \ by \ Lemma \ A.1 \ in \ Appendix \ A),
\end{eqnarray*}
where the last inequality follows from  Lemma \ref{lemma:A1}-(ii) in Appendix A. This is a contradiction. So, the current case will not occur either.
\par
To summarize, for any $\lambda>\widehat{\lambda} u'(W_\Delta)$, only deductible and threefold contracts are possibly optimal, stipulated in Case C.1 and Case C.2 above. Next, we investigate these two cases more closely.
\par
Define a function $h_\Delta(\cdot)$ on $[a,c]$ as follows:
\begin{align}\label{hfunc}
h_\Delta(z):=\int_0^z \left[ \frac{u'(W_\Delta-F^{-1}_X(z))(1-T(z))}{1-z}-u'(W_\Delta-F^{-1}_X(t))T'(t) \right] dt.
\end{align}
Then, by using Lemma \ref{lemma:52} and the fact that $\frac{1-T(a)}{1-a}=T'(a)$, we have
\begin{eqnarray*}
 h_\Delta(a)= && \int_0^a \left[ u'(W_\Delta-F^{-1}_X(a))T'(a)-u'(W_\Delta-F^{-1}_X(t))T'(t) \right] dt < 0.
\end{eqnarray*}
Recalling that $T(c)=c$, we have
\begin{eqnarray*}
h_\Delta(c)= && \int_0^c \left[ \frac{u'(W_\Delta-F^{-1}_X(c))(1-T(c))}{1-c}-u'(W_\Delta-F^{-1}_X(t))T'(t) \right] dt \\
=&& \int_0^c [ u'(W_\Delta-F^{-1}_X(c))-u'(W_\Delta-F^{-1}_X(t))T'(t) ] dt \\
> && u'(W_\Delta-F^{-1}_X(c))c - \int_0^c [ u'(W_\Delta-F^{-1}_X(c))T'(t) ] dt =0.
 %=&& u'(W_\Delta-F^{-1}_X(c))c - u'(W_\Delta-F^{-1}_X(c))c  \\
 %=&& 0.
\end{eqnarray*}

Moreover, we take the derivative of $h_\Delta(z)$ with respect to $z\in [a,c]$ to obtain
\begin{eqnarray*}
 h_\Delta'(z)=&& -u'(W_\Delta-F^{-1}_X(z))T'(z)+u'(W_\Delta-F^{-1}_X(z))\frac{1-T(z)}{1-z} \\
 && -u''(W_\Delta-F^{-1}_X(z))\frac{1-T(z)}{1-z}z(F^{-1}_X)'(z)+u'(W_\Delta-F^{-1}_X(z))z\frac{\frac{1-T(z)}{1-z}-T'(z)}{1-z}\\
 =&& u'(W_\Delta-F^{-1}_X(z))\left(\frac{1-T(z)}{1-z}-T'(z)\right)- u''(W_\Delta-F^{-1}_X(z))\frac{1-T(z)}{1-z}z(F^{-1}_X)'(z) \\
 && +u'(W_\Delta-F^{-1}_X(z))z\frac{\frac{1-T(z)}{1-z}-T'(z)}{1-z} >0.
\end{eqnarray*}
Hence, there exists a unique point $l_\Delta \in (a,c)$ such that $h_\Delta(l_\Delta)=0$, $h_\Delta(z)<0$ for $z\in (a,l_\Delta)$, and $h_\Delta(z)>0$ for $z\in (l_\Delta,c)$.
\par
Define
\begin{align}
   \underline{G}(z)=\begin{cases}
            F^{-1}_X(z), &\quad  \text{  if  }           0\leqslant z < l_\Delta,\\%[3pt]
             F^{-1}_X(l_\Delta),&\quad  \text{  if  }            l_\Delta\leqslant z \leqslant 1,
          \end{cases}
\end{align}
and $K_\Delta:=\int_0^1 \underline{G}(z) dz$.
\begin{prop}\label{prop:51}
If $K_\Delta \leqslant \Delta < E[X]$, then the optimal solution to \eqref{orgi-3} is
\begin{align}
   G^*(z)=\begin{cases}
            F^{-1}_X(z), &\quad  \text{  if  }           0\leqslant z < f,\\%[3pt]
             F^{-1}_X(f),&\quad  \text{  if  }            f\leqslant z \leqslant 1,
          \end{cases}
\end{align}
where $f$ is the unique scalar such that $f \geqslant l_\Delta$ and $\int_0^1 G^*(z) dz = \Delta$.
\end{prop}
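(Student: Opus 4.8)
The plan is to verify that the proposed $G^*(\cdot)$ satisfies the necessary and sufficient condition \eqref{optimal} for an appropriate multiplier $\lambda$, and then invoke the duality argument (Proposition \ref{prop-dual}) together with the uniqueness of optimizer for strictly concave $u$. Concretely, I would set $\lambda := u'(W_\Delta - F^{-1}_X(f)) \frac{1-T(f)}{1-f}$, which is the natural candidate forced by the requirement that $N_\lambda(f) = 0$ at the "kink" $z = f$ of the deductible-type contract. The task then splits into three verifications: (a) $G^*(\cdot) \in \mathbb{G}$; (b) $N_\lambda(z) > 0$ for $z \in (0,f)$, so that \eqref{optimal} forces $\widetilde{G}_\lambda'(z) = (F^{-1}_X)'(z)$ there, matching $G^*$; and (c) $N_\lambda(z) < 0$ for $z \in (f,1)$, so that \eqref{optimal} forces $\widetilde{G}_\lambda'(z) = 0$, again matching $G^*$ (which is constant on $[f,1]$). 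Item (a) is immediate since $G^*(0)=0$ and $(G^*)'(z)$ equals either $(F^{-1}_X)'(z)$ or $0$ a.e.

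For the sign analysis in (b) and (c), I would exploit the structure already developed in the paper. On $[f,1]$ we have $\widetilde{G}_\lambda(z) \equiv F^{-1}_X(f)$, so
\begin{align*}
N_\lambda(z) = -\int_z^1 \left[\lambda - u'(W_\Delta - F^{-1}_X(f)) T'(t)\right] dt = -u'(W_\Delta - F^{-1}_X(f))(1-z)\left[\tfrac{\lambda}{u'(W_\Delta-F^{-1}_X(f))} - \tfrac{1-T(z)}{1-z}\right],
\end{align*}
and since $\frac{\lambda}{u'(W_\Delta-F^{-1}_X(f))} = f(f) = \frac{1-T(f)}{1-f}$ while $f \geqslant l_\Delta > a$, Lemma \ref{desp} (monotonicity of $f(\cdot) = \frac{1-T(\cdot)}{1-\cdot}$ increasing on $[a,1)$) gives $\frac{1-T(z)}{1-z} > \frac{1-T(f)}{1-f}$ for $z \in (f,1)$, hence $N_\lambda(z) < 0$ there. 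This establishes (c) and in particular confirms the flat part is consistent.

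For (b), on $[0,f]$ the candidate has $\widetilde{G}_\lambda(z) = F^{-1}_X(z)$, so $N_\lambda(z) = N_\lambda(f) + \int_z^f \left[\lambda - u'(W_\Delta - F^{-1}_X(t)) T'(t)\right] dt$ with $N_\lambda(f) = 0$; thus I must show $\int_z^f [\lambda - u'(W_\Delta - F^{-1}_X(t))T'(t)]\,dt > 0$ for all $z \in [0,f)$. Rewriting $\lambda$ and noting $F^{-1}_X(f) \geqslant F^{-1}_X(z)$, one sees this integral is exactly a tail of the function $h_\Delta$ evaluated along the deductible profile — more precisely, one checks the identity $\int_0^z[\lambda - u'(W_\Delta-F^{-1}_X(t))T'(t)]\,dt$ relates to $h_\Delta$ at the relevant points, and the sign follows from the established facts that $h_\Delta(l_\Delta) = 0$, $h_\Delta$ is strictly increasing on $[a,c]$, $h_\Delta < 0$ on $(a, l_\Delta)$, together with Lemma \ref{lemma:52} (strict monotonicity of $u'(W_\Delta - G(z))T'(z)$ on $[0,a]$) to handle the sub-interval $[0,a]$. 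The main obstacle is precisely this bookkeeping on $[0,f)$: one needs to combine the $[0,a]$ estimate (via Lemma \ref{lemma:52}, which shows $\lambda - u'(W_\Delta-F^{-1}_X(t))T'(t)$ is negative then positive as $t$ increases past $a$, so its tail integral up to $f$ stays positive) with the $[a,f)$ estimate (via the sign and monotonicity of $h_\Delta$, using $f \geqslant l_\Delta$ so that $h_\Delta(f) \geqslant 0$) to conclude $N_\lambda(z) > 0$ throughout. Once \eqref{optimal} is verified, Theorem \ref{theorem:main} gives that $G^*$ is optimal for \eqref{orgi-6} under this $\lambda$; the existence of $f \geqslant l_\Delta$ with $\int_0^1 G^*(z)\,dz = \Delta$ follows from $K_\Delta \leqslant \Delta < E[X]$ by continuity and monotonicity of $f \mapsto \int_0^1 G^*$, its uniqueness from strict monotonicity, and then $\lambda$ binds the constraint so the standard duality argument identifies $G^*$ as optimal for \eqref{orgi-3}.
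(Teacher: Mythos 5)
Your proposal follows the paper's own proof essentially step for step: the same multiplier $\lambda_\Delta=u'(W_\Delta-F^{-1}_X(f))\tfrac{1-T(f)}{1-f}$ forced by $N_\lambda(f)=0$, the same reduction to showing $\int_z^f\left[\lambda_\Delta-u'(W_\Delta-F^{-1}_X(t))T'(t)\right]dt>0$ on $(0,f)$ using Lemma \ref{desp} on $[a,f)$ and Lemma \ref{lemma:52} together with the $l_\Delta$/$h_\Delta$ anchor on $(0,a)$, followed by Theorem \ref{theorem:main} and the duality argument of Proposition \ref{prop-dual}. One caution on the only step you leave as ``bookkeeping'': the parenthetical claim that Lemma \ref{lemma:52} shows the integrand is ``negative then positive as $t$ increases past $a$'' is not justified (that lemma gives monotonicity only on $[0,a]$ and no sign information, and on $[a,f]$ the integrand need not change sign just once); the paper instead treats $z\in[a,f)$ by the direct comparison $u'(W_\Delta-F^{-1}_X(t))\leqslant u'(W_\Delta-F^{-1}_X(f))$ plus Lemma \ref{desp}, and splits $z\in(0,a)$ into two sub-cases according to the sign of $\lambda_\Delta-u'(W_\Delta-F^{-1}_X(z))T'(z)$, the second of which genuinely needs $h_\Delta(l_\Delta)=0$ and $\lambda_\Delta\geqslant\tfrac{u'(W_\Delta-F^{-1}_X(l_\Delta))(1-T(l_\Delta))}{1-l_\Delta}$.
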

\begin{proof}
The existence of $f$ follows from the monotonicity of $G^{*}$ with respesct to $f$ immediately.
Denoting $\lambda_\Delta:=u'(W_\Delta-F^{-1}_X(f))\frac{1-T(f)}{1-f}$,  we need to show that $G^*(\cdot)$ satisfies \eqref{optimal} with $\lambda=\lambda_\Delta$.
First, it is straightforward that $\int_f^1 \left[ \frac{u'(W_\Delta-F^{-1}_X(f))(1-T(f))}{1-f}-u'(W_\Delta-F^{-1}_X(f))T'(t) \right] dt = 0$. Next, we are to prove that $\int_z^f \left[ \frac{u'(W_\Delta-F^{-1}_X(f))(1-T(f))}{1-f}-u'(W_\Delta-F^{-1}_X(t))T'(t) \right] dt>0$ $\forall z \in (0,f)$. We divide the proof into two cases.
\begin{itemize}
\item If $z \in [a,f)$, then
\begin{align*}
& \int_z^f \left[ \frac{u'(W_\Delta-F^{-1}_X(f))(1-T(f))}{1-f}-u'(W_\Delta-F^{-1}_X(t))T'(t) \right] dt \\
\geqslant & \int_z^f \left[ \frac{u'(W_\Delta-F^{-1}_X(f))(1-T(f))}{1-f}-u'(W_\Delta-F^{-1}_X(f))T'(t) \right] dt \\
%=& \frac{u'(W_\Delta-F^{-1}_X(f))(1-T(f))(f-z)}{1-f}-u'(W_\Delta-F^{-1}_X(f))[T(f)-T(z)]  \\
%=& u'(W_\Delta-F^{-1}_X(f))(f-z)\left[\frac{1-T(f)}{1-f}-\frac{T(f)-T(z)}{f-z}\right] \\
=& u'(W_\Delta-F^{-1}_X(f))(f-z)\left[\frac{1-T(f)}{1-f}-\frac{1-T(z)}{1-z}\right] > 0,
\end{align*}
where the last inequality is due to Lemma \ref{desp}.
\item If $z \in (0,a)$ and $u'(W_\Delta-F^{-1}_X(z))T'(z) \leqslant \frac{u'(W_\Delta-F^{-1}_X(f))(1-T(f))}{1-f}$, then by Lemma \ref{lemma:52} and the result above, we have
\begin{align*}
& \int_z^f \left[ \frac{u'(W_\Delta-F^{-1}_X(f))(1-T(f))}{1-f}-u'(W_\Delta-F^{-1}_X(t))T'(t) \right] dt \\
=& \int_z^a \left[ \frac{u'(W_\Delta-F^{-1}_X(f))(1-T(f))}{1-f}-u'(W_\Delta-F^{-1}_X(t))T'(t) \right] dt  \\
& + \int_a^f \left[ \frac{u'(W_\Delta-F^{-1}_X(f))(1-T(f))}{1-f}-u'(W_\Delta-F^{-1}_X(t))T'(t) \right] dt \\
>& \int_z^a \left[ \frac{u'(W_\Delta-F^{-1}_X(f))(1-T(f))}{1-f}-u'(W_\Delta-F^{-1}_X(t))T'(t) \right] dt \\
\geqslant & \int_z^a \left[u'(W_\Delta-F_X^{-1}(z))T'(z)-u'(W_\Delta-F^{-1}_X(t))T'(t) \right] dt > 0.
\end{align*}
If $z \in (0,a)$ and $u'(W_\Delta-F^{-1}_X(z))T'(z) > \frac{u'(W_\Delta-F^{-1}_X(f))(1-T(f))}{1-f}$ holds, then
\begin{align*}
 & \int_z^f \left[ \frac{u'(W_\Delta-F^{-1}_X(f))(1-T(f))}{1-f}-u'(W_\Delta-F^{-1}_X(t))T'(t) \right] dt \\
=& \int_{l_\Delta}^f \left[ \frac{u'(W_\Delta-F^{-1}_X(f))(1-T(f))}{1-f}-u'(W_\Delta-F^{-1}_X(t))T'(t) \right] dt \\
  & + \int_z^{l_\Delta} \left[ \frac{u'(W_\Delta-F^{-1}_X(f))(1-T(f))}{1-f}-u'(W_\Delta-F^{-1}_X(t))T'(t) \right] dt \\
\geqslant & \int_z^{l_\Delta} \left[ \frac{u'(W_\Delta-F^{-1}_X(f))(1-T(f))}{1-f}-u'(W_\Delta-F^{-1}_X(t))T'(t) \right] dt \\
\geqslant & \int_z^{l_\Delta} \left[ \frac{u'(W_\Delta-F^{-1}_X(l_\Delta))(1-T(l_\Delta))}{1-l_\Delta}-u'(W_\Delta-F^{-1}_X(t))T'(t) \right] dt \\
= & \int_0^{l_\Delta} \left[ \frac{u'(W_\Delta-F^{-1}_X(l_\Delta))(1-T(l_\Delta))}{1-l_\Delta}-u'(W_\Delta-F^{-1}_X(t))T'(t) \right] dt \\
 & - \int_0^z \left[ \frac{u'(W_\Delta-F^{-1}_X(l_\Delta))(1-T(l_\Delta))}{1-l_\Delta}-u'(W_\Delta-F^{-1}_X(t))T'(t) \right] dt \\
=& - \int_0^z \left[ \frac{u'(W_\Delta-F^{-1}_X(l_\Delta))(1-T(l_\Delta))}{1-l_\Delta}-u'(W_\Delta-F^{-1}_X(t))T'(t) \right] dt > 0,
\end{align*}
where the last inequality is due to $$u'(W_\Delta-F^{-1}_X(z))T'(z) > \frac{u'(W_\Delta-F^{-1}_X(f))(1-T(f))}{1-f} \geqslant \frac{u'(W_\Delta-F^{-1}_X(l_\Delta))(1-T(l_\Delta))}{1-l_\Delta},$$ as $l_\Delta \leqslant f$ and the fact that $u'(W_\Delta-F^{-1}_X(z))T'(z) $ is strictly decreasing on $[0,a]$.
\end{itemize}
The claim follows now.
\end{proof}
\begin{lemma}\label{lemma:53}
If $ 0 < \Delta < K_\Delta$, then the corresponding optimal contract is not a deductible one.
\end{lemma}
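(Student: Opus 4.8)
The plan is to argue by contradiction, leaning on the shape classification already established in this section. Suppose that for some $\Delta$ with $0<\Delta<K_\Delta$ the (unique) optimal solution $\widetilde G_{\lambda^*}$ of \eqref{orgi-3} is a deductible contract. Since the discussion of Cases (A)--(C) shows that, for $\lambda^*>\widehat\lambda u'(W_\Delta)$, an optimal $\widetilde G_{\lambda^*}$ satisfying \eqref{optimal} is either a deductible contract (Case (C.1)) or a threefold one (Case (C.2)), and a threefold contract is not deductible, a deductible optimum must be of the Case (C.1) form: $\widetilde G_{\lambda^*}(z)=F^{-1}_X(z)$ on $[0,z_1)$ and $\widetilde G_{\lambda^*}(z)=F^{-1}_X(z_1)$ on $[z_1,1]$ for some $z_1\in(a,1)$, with $N_{\lambda^*}(z_1)=0$ and $N_{\lambda^*}(z)>0$ for all $z\in(0,z_1)$. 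The duality construction also gives $\int_0^1\widetilde G_{\lambda^*}(z)\,dz=\Delta$.

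First I would extract the multiplier and reconnect it with the function $h_\Delta$ of \eqref{hfunc}. Evaluating $N_{\lambda^*}(z_1)=0$ over $[z_1,1]$, where $\widetilde G_{\lambda^*}\equiv F^{-1}_X(z_1)$, yields $\lambda^*=u'(W_\Delta-F^{-1}_X(z_1))\frac{1-T(z_1)}{1-z_1}$. Substituting this value into $N_{\lambda^*}(0)=\int_0^{z_1}[\lambda^*-u'(W_\Delta-F^{-1}_X(t))T'(t)]\,dt$ and comparing term-for-term with \eqref{hfunc}, one recognizes that $N_{\lambda^*}(0)=h_\Delta(z_1)$ whenever $z_1\in(a,c]$. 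Because $N_{\lambda^*}(\cdot)$ is continuous and strictly positive on $(0,z_1)$, letting $z\downarrow0$ gives $h_\Delta(z_1)=N_{\lambda^*}(0^+)\geq0$; since $h_\Delta$ is strictly increasing on $[a,c]$ with its unique zero at $l_\Delta\in(a,c)$, this forces $z_1\geq l_\Delta$. The remaining possibility $z_1>c$ is harmless, since then $z_1>c>l_\Delta$ directly, so in all cases $z_1\geq l_\Delta$.

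Next I would convert the budget identity into a monotonicity inequality. Setting $\Phi(s):=\int_0^s F^{-1}_X(z)\,dz+(1-s)F^{-1}_X(s)$, one computes $\Phi'(s)=(1-s)(F^{-1}_X)'(s)\geq0$, so $\Phi$ is increasing. A direct calculation gives $\int_0^1\widetilde G_{\lambda^*}(z)\,dz=\Phi(z_1)$, while the definition of $\underline G$ gives $K_\Delta=\int_0^1\underline G(z)\,dz=\Phi(l_\Delta)$. Combining with $z_1\geq l_\Delta$ yields $\Delta=\Phi(z_1)\geq\Phi(l_\Delta)=K_\Delta$, contradicting $\Delta<K_\Delta$. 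Hence the optimal contract is not a deductible one.

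The main obstacle is the identity $N_{\lambda^*}(0)=h_\Delta(z_1)$ combined with a correct reading of \eqref{optimal}: one must be sure that the sign information built into Case (C.1), namely $N_{\lambda^*}>0$ on $(0,z_1)$, genuinely survives the limit $z\downarrow0$, that the substituted value of $\lambda^*$ makes the integrand in $N_{\lambda^*}(0)$ coincide with the one defining $h_\Delta$ in \eqref{hfunc}, and that the boundary possibility $z_1>c$ (which falls outside the interval $[a,c]$ on which $h_\Delta$ and $l_\Delta$ were analyzed) is disposed of separately. Once $z_1\geq l_\Delta$ is secured, the remainder is just the elementary monotonicity of $\Phi$ and is routine.
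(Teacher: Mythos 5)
Your proof is correct and takes essentially the same route as the paper's: both extract $\lambda^*=u'(W_\Delta-F^{-1}_X(z_1))\frac{1-T(z_1)}{1-z_1}$ from the flat part, identify the first-order quantity at $0$ with $h_\Delta(z_1)$, and play the sign of $h_\Delta$ around $l_\Delta$ against the budget identity. The only difference is the direction of the contradiction --- you use optimality to force $z_1\geqslant l_\Delta$ and hence $\Delta\geqslant K_\Delta$, while the paper uses $\Delta<K_\Delta$ to force $\overline{z}<l_\Delta$ and then contradicts $N_{\lambda^*}(0)\geqslant 0$ --- which is an immaterial reordering of the same ingredients.
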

\begin{proof}
There exists $\lambda^*$ such that $\widetilde{G}_{\lambda^*}(\cdot)$ satisfies \eqref{optimal} under $\lambda^*$ and $\int_0^1 \widetilde{G}_{\lambda^*}(z)dz=\Delta$ (see Appendix C). If $\widetilde{G}_{\lambda^*}(\cdot)$ corresponds to a deductible contract, then there exists  $\overline{z}$ (since $\Delta < K_\Delta$, we have $\overline{z}<l_\Delta$) such that
\begin{align}
   \widetilde{G}_{\lambda^*}(z)=\begin{cases}
            F^{-1}_X(z), &\quad  \text{  if  }           0\leqslant z < \overline{z},\\%[3pt]
             F^{-1}_X(\overline{z}),&\quad  \text{  if  }            \overline{z}\leqslant z \leqslant 1.
          \end{cases}
\end{align}
Since $\widetilde{G}_{\lambda^*}(\cdot)$ satisfies \eqref{optimal}, we have $\int_{\overline{z}}^1 [ \lambda^*- u'(W_\Delta-F^{-1}_X(\overline{z}))T'(t) ] dt = 0$, or $\lambda^*=u'(W_\Delta-F^{-1}_X(\overline{z}))\frac{1-T'(\overline{z})}{1-\overline{z}}$.
\par
On the other hand, $M(z):=\int_{z}^{\overline{z}} [ u'(W_\Delta-F^{-1}_X(\overline{z}))\frac{1-T'(\overline{z})}{1-\overline{z}}- u'(W_\Delta-F^{-1}_X(t))T'(t) ] dt \geqslant 0$  for $z\in[0,\overline{z}]$. However, by the definition of $l_\Delta$, $h_\Delta(\overline{z})\equiv M(0)=\int_0^{\overline{z}} [ \frac{u'(W_\Delta-F^{-1}_X(\overline{z}))(1-T(\overline{z}))}{1-\overline{z}}-u'(W_\Delta-F^{-1}_X(t))T'(t) ] dt < 0$ as $\overline{z}<l_\Delta$. Since $M(\cdot)$ is a continuous function, a contradiction arises.
\end{proof}
\par
It follows from Lemma \ref{lemma:53} that, if $\ 0 < \Delta < K_\Delta$, the optimal contract (which always exists) can only be a threefold one, corresponding to Case C.2.
We are now led to the following proposition.
\begin{prop}\label{prop:52}
If $ 0 < \Delta < K_\Delta$, then the optimal solution to \eqref{orgi-3} is given as
\begin{align*}
   G^*(z)=\begin{cases}
            0, &\quad  \text{  if  }              0 \leqslant z<z_2,\\
            F^{-1}_X(z)-F^{-1}_X(z_2), &\quad  \text{  if  }              z_2 \leqslant z< z_1,\\
            F^{-1}_X(z_1)-F^{-1}_X(z_2),&\quad  \text{  if  }            z_1 \leqslant z \leqslant 1,
          \end{cases}
\end{align*}
where $z_2,z_1$ are such that $z_2\leqslant a \leqslant z_1$, $\int_{z_2}^{z_1} [ \frac{u'(W_\Delta-F^{-1}_X(z_1)+F^{-1}_X(z_2))(1-T(z_1))}{1-z_1}- u'(W_\Delta-F^{-1}_X(t)+F^{-1}_X(z_2))T'(t) ] dt=0$ and $\int_0^1 G^*(z) dz=\Delta$.
\end{prop}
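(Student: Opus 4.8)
The plan is to obtain this result by combining the exhaustive case analysis carried out above with Lemma~\ref{lemma:53}. First I would invoke Proposition~\ref{prop-dual} of Appendix~C to produce a multiplier $\lambda^{*}$ for which $\widetilde{G}_{\lambda^{*}}(\cdot)$ solves the auxiliary problem \eqref{orgi-6} and satisfies $\int_{0}^{1}\widetilde{G}_{\lambda^{*}}(z)\,dz=\Delta$; because $u$ is strictly concave this solution is unique, and the standard Lagrangian duality argument then identifies $G^{*}(\cdot):=\widetilde{G}_{\lambda^{*}}(\cdot)$ as the unique optimal solution of \eqref{orgi-3}. We already know that a binding constraint forces $\lambda^{*}>\widehat{\lambda}\,u'(W_{\Delta})$, and that for every such multiplier the discussion of Cases (A), (B) and (C.1)--(C.4) reduces the shape of any function obeying \eqref{optimal} to exactly two candidates: the deductible contract of Case~(C.1) and the threefold contract of Case~(C.2).

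The next step is to eliminate the deductible candidate. Since $0<\Delta<K_{\Delta}$, Lemma~\ref{lemma:53} asserts precisely that the optimal contract is not a deductible one, so $G^{*}(\cdot)$ must be the threefold function of Case~(C.2), with break points satisfying $0<z_{2}\leqslant a\leqslant z_{1}<1$. It then remains to record the equations that determine $z_{1}$ and $z_{2}$. Writing $k:=F^{-1}_X(z_1)-F^{-1}_X(z_2)=G^{*}(1)$, the defining relation of $z_{1}$ in Case~(C.2) is $N_{\lambda^{*}}(z_{1})=0$ together with $G^{*}\equiv k$ on $[z_{1},1]$; integrating $T'$ over $[z_{1},1]$ this gives $\lambda^{*}=u'(W_{\Delta}-k)\tfrac{1-T(z_{1})}{1-z_{1}}$. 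The defining relation of $z_{2}$ is $N_{\lambda^{*}}(z_{2})=0$; splitting $\int_{z_{2}}^{1}=\int_{z_{2}}^{z_{1}}+\int_{z_{1}}^{1}$, discarding the second integral (which vanishes by the relation just found), using $G^{*}(t)=F^{-1}_X(t)-F^{-1}_X(z_2)$ for $t\in[z_{2},z_{1}]$, and substituting the value of $\lambda^{*}$, I obtain exactly the integral equation
\[
\int_{z_2}^{z_1} \Big[ \tfrac{u'(W_\Delta-F^{-1}_X(z_1)+F^{-1}_X(z_2))(1-T(z_1))}{1-z_1}- u'(W_\Delta-F^{-1}_X(t)+F^{-1}_X(z_2))T'(t) \Big] dt=0
\]
stated in the proposition, while the remaining relation $\int_{0}^{1}G^{*}(z)\,dz=\Delta$ is simply the binding of the Lagrange constraint.

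I expect the only genuinely delicate point to be existence and admissibility of the pair $(z_{1},z_{2})$: existence is not in question because the optimal solution exists (Appendix~B) and has just been shown to have this form, and the constructed $G^{*}$ is automatically feasible for \eqref{orgi-3}, being absolutely continuous with $G^{*}(0)=0$ and $0\leqslant (G^{*})'\leqslant (F^{-1}_X)'$ a.e.; uniqueness of $(z_{1},z_{2})$ follows from uniqueness of the optimal solution when $u$ is strictly concave. The bookkeeping that converts $N_{\lambda^{*}}(z_{i})=0$ into the displayed equations, and the monotonicity argument showing that as $(z_{1},z_{2})$ ranges over the admissible configurations $\int_{0}^{1}G^{*}$ sweeps out $(0,K_{\Delta})$ and hence matches the prescribed $\Delta$, are routine; the conceptual work has already been done in Cases~(C.1)--(C.4) and in Lemma~\ref{lemma:53}.
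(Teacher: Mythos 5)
Your proposal is correct and follows essentially the same route as the paper: the paper's own proof is simply ``a direct consequence of Lemma~\ref{lemma:53},'' relying implicitly on the preceding case analysis (A), (B), (C.1)--(C.4) that narrows the candidates to deductible or threefold, exactly as you argue. Your additional bookkeeping --- extracting $\lambda^{*}=u'(W_\Delta-k)\frac{1-T(z_1)}{1-z_1}$ from $N_{\lambda^{*}}(z_1)=0$ and splitting $N_{\lambda^{*}}(z_2)=0$ over $[z_2,z_1]$ and $[z_1,1]$ to obtain the displayed integral equation --- is accurate and merely makes explicit what the paper leaves to Case~(C.2).
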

\begin{proof}
The conclusion is a direct consequence of Lemma \ref{lemma:53}.
\end{proof}
\par
Note that any pair $(z_2,z_1)$ satisfying the requirements in Proposition \ref{prop:52} leads to an optimal solution to \eqref{orgi-3}. Therefore such a pair $(z_2,z_1)$ is unique as the optimal solution to \eqref{orgi-3} is unique.
\par
Proposition \ref{prop:51} and Proposition \ref{prop:52} give two qualitatively distinct optimal contracts for any given $0<\Delta<E[X]$, and the two cases are divided depending on whether or not $\Delta<K_\Delta$.
However, $K_\Delta$ in general depends on $\Delta$ in an implicit and complicated way; so it is hard to compare $\Delta$ and $K_\Delta$. Nevertheless we are able to treat at least  two cases where
$A_u(z)$ is either a constant or strictly decreasing in $z$.
\par
First, assume that the utility function exhibits constant absolute risk aversion, e.g. $u(z)=1-e^{-\alpha z}$ $\forall z\in \mathbb{R}^+$. Then
it is easy to see from \eqref{hfunc} that $l_\Delta$ is independent of $\Delta$, and hence so is $K_\Delta$. In this case, denote $K\equiv K_\Delta$ and $\widehat{\pi}=(1+\rho)(E[X]-K)$. Then
we have the following result.
\begin{thm}\label{thm:54}
Assume that Assumptions \ref{assump:21}, \ref{assump:23}, and \ref{ap} hold, and that $u(\cdot)$  exhibits constant absolute risk aversion. Then the optimal indemnity function $I^*(\cdot)$ to Problem \eqref{orgi} is given as
\begin{enumerate}[(i)]
\item If $\pi=(1+\rho)E[X]$, then $I^*(z)=z$ for $z\in [0,M]$.
\item If $\widehat{\pi}<\pi <(1+\rho)E[X]$, then
\begin{align*}
   {I}^*(z)=\begin{cases}
            z, &\quad  \text{  if  }           0\leqslant z < F^{-1}_X(z_2),\\%[3pt]
             F^{-1}_X(z_2) , &\quad  \text{  if  }        F^{-1}_X(z_2)\leqslant z < F^{-1}_X(z_1),\\
            z-F^{-1}_X(z_1)+F^{-1}_X(z_2),&\quad  \text{  if  }             F^{-1}_X(z_1)\leqslant z \leqslant M,
          \end{cases}
\end{align*}
where $(z_2,z_1)$ is the unique pair satisfying $z_2\leqslant a \leqslant z_1$,
$\int_{z_2}^{z_1} [ \frac{u'(W_\Delta-F^{-1}_X(z_1)+F^{-1}_X(z_2))(1-T(z_1))}{1-z_1}- u'(W_\Delta-F^{-1}_X(t)+F^{-1}_X(z_2))T'(t) ] dt=0$, and $E[I^*(X)]=\frac{\pi}{1+\rho}$.
\item If $0 \leqslant \pi \leqslant \widehat{\pi}$, then
\begin{align*}
   {I}^*(z)=\begin{cases}
            0, &\quad  \text{  if  }            0\leqslant z < F^{-1}_X(f),\\
            z-F^{-1}_X(f),&\quad  \text{  if  }             F^{-1}_X(f)\leqslant z \leqslant M,
          \end{cases}
\end{align*}
where $f$ is the unique scalar satisfying $E[I^*(X)]=\frac{\pi}{1+\rho}$.
\end{enumerate}
\end{thm}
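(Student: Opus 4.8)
The plan is to derive Theorem \ref{thm:54} as a direct corollary of Propositions \ref{prop:yaari} (no, rather) \ref{prop:51} and \ref{prop:52}, translating their conclusions from the language of the quantile variable $\Delta$ and the decision variable $G(\cdot)$ back into the language of the premium $\pi$ and the indemnity $I(\cdot)$. The key observations making this translation routine are: (a) by definition $\Delta = E[X] - \frac{\pi}{1+\rho}$, so the constraint $\int_0^1 G^*(z)\,dz = \Delta$ is equivalent to $E[R(X)] = \Delta$, which in turn is equivalent to $E[I(X)] = E[X] - E[R(X)] = \frac{\pi}{1+\rho}$; (b) the optimal $R^*(\cdot)$ is recovered from $G^*(\cdot)$ via $R^*(z) = G^*(F_X(z))$ for $z\in[0,M]$ (by Lemma \ref{lemma:quantile} and the surrounding discussion), and then $I^*(z) = z - R^*(z)$; and (c) under the assumption of constant absolute risk aversion, $h_\Delta$ and hence $l_\Delta$ and $K_\Delta$ do not depend on $\Delta$, so we may write $K\equiv K_\Delta$ and $\widehat\pi := (1+\rho)(E[X]-K)$, and the dichotomy $\Delta < K_\Delta$ versus $\Delta \geqslant K_\Delta$ becomes $\pi > \widehat\pi$ versus $\pi \leqslant \widehat\pi$.

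First I would establish that for a CARA utility $u(z) = 1 - e^{-\alpha z}$ (or any $u$ with $A_u \equiv \alpha$ constant), the function $h_\Delta(z)$ defined in \eqref{hfunc} is independent of $\Delta$: indeed $u'(W_\Delta - F^{-1}_X(z)) = \alpha e^{-\alpha W_\Delta} e^{\alpha F^{-1}_X(z)}$, so the common factor $\alpha e^{-\alpha W_\Delta}$ cancels in the bracketed difference inside the integral, leaving an expression involving only $e^{\alpha F^{-1}_X(\cdot)}$, $T$, and $T'$. Consequently the root $l_\Delta\in(a,c)$ of $h_\Delta$ is a fixed number $l$, and $K_\Delta = \int_0^1 \underline G(z)\,dz$ with $\underline G(z) = F^{-1}_X(z\wedge l)$ (padded constant beyond $l$) is a fixed number $K$ satisfying $K\leqslant E[X]$.

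Next I would handle the three cases. For (i), when $\pi = (1+\rho)E[X]$ we have $\Delta = 0$, and as already noted in the text preceding Lemma \ref{lemma:51} the unique feasible-and-optimal solution to \eqref{orgi-3} is $G^*\equiv 0$, giving $R^*\equiv 0$ and $I^*(z) = z$. For (iii), when $0\leqslant\pi\leqslant\widehat\pi$ we have $\Delta = E[X] - \frac{\pi}{1+\rho} \geqslant E[X] - (E[X]-K) = K = K_\Delta$, so Proposition \ref{prop:51} applies and yields the deductible form $G^*(z) = F^{-1}_X(z\wedge f)$ with $f\geqslant l$ determined by $\int_0^1 G^*(z)\,dz = \Delta$, i.e. $E[I^*(X)] = \frac{\pi}{1+\rho}$; translating via $I^*(z) = z - G^*(F_X(z))$ gives exactly $I^*(z) = (z - F^{-1}_X(f))^+$, the stated deductible contract (uniqueness of $f$ being inherited from the strict monotonicity of $\Delta\mapsto\int_0^1 F^{-1}_X(z\wedge f)\,dz$ in $f$, which is Assumption \ref{assump:21}). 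For (ii), when $\widehat\pi<\pi<(1+\rho)E[X]$ we have $0<\Delta<K = K_\Delta$, so Proposition \ref{prop:52} applies and gives the threefold $G^*$ with breakpoints $z_2\leqslant a\leqslant z_1$ characterized by the displayed integral equation and $\int_0^1 G^*(z)\,dz = \Delta$; again translating through $R^*(z) = G^*(F_X(z))$ and $I^*(z) = z - R^*(z)$, using $F^{-1}_X(F_X(x)) \equiv x$ on $[0,M]$ (Assumption \ref{assump:21}), produces the three-piece formula stated, and uniqueness of $(z_2,z_1)$ is exactly the uniqueness remark made right after Proposition \ref{prop:52}.

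I do not anticipate a genuine obstacle here, as this is a bookkeeping theorem; the only mildly delicate point is making sure the ``$\Delta$-independence'' of $K_\Delta$ under CARA is argued cleanly (so that $\widehat\pi$ is well-defined before $\pi$ is chosen), and being careful with the boundary of each regime — in particular that the three cases (i) $\pi=(1+\rho)E[X]$, (ii) $\widehat\pi<\pi<(1+\rho)E[X]$, (iii) $0\leqslant\pi\leqslant\widehat\pi$ partition the relevant premium range $(0,(1+\rho)E[X]]$ and match up with the $\Delta$-regimes $\Delta=0$, $0<\Delta<K_\Delta$, $K_\Delta\leqslant\Delta<E[X]$ of Propositions \ref{prop:51}–\ref{prop:52} (the case $\Delta = E[X]$, i.e. $\pi = 0$, corresponds to no coverage and is subsumed in (iii) with $f$ such that $I^*\equiv 0$). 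Everything else is the substitution $I^* = z - G^*\circ F_X$ and the equivalence $\int_0^1 G^* = \Delta \iff E[I^*(X)] = \frac{\pi}{1+\rho}$.
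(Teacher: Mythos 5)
Your proposal is correct and follows essentially the same route as the paper, whose proof of this theorem is simply the observation that the result follows from Propositions \ref{prop:51} and \ref{prop:52} together with the fact that $K_\Delta$ is constant in $\Delta$ under constant absolute risk aversion. You merely spell out the details the paper leaves implicit (the cancellation of $e^{-\alpha W_\Delta}$ in $h_\Delta$, the change of variables $\Delta=E[X]-\frac{\pi}{1+\rho}$, and the recovery $I^*(z)=z-G^*(F_X(z))$), all of which are accurate.
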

\begin{proof}
The result follows from Propositions \ref{prop:51}, \ref{prop:52} and the fact that $K_\Delta$ is a constant for any $0<\Delta<E[X]$.
\end{proof}
%\begin{remark}
%Let us compare Theorem 3.10 and Theorem 3.6 under the same probability weighting function. It is easy to see that $\pi_c<\widehat{\pi}$ since $l_\Delta<c$. So, for $\pi\in (\pi_c,\widehat{\pi})$, an insured with the RDEU type preference likes to get the deductible contract while that of the Yaari's type wants to get contract \eqref{three}. The reason is that an insured with a strictly concave utility function is more risk averse; hence he is more concerned about large losses.
%\end{remark}
\par
Now, we study the case in which $A_u(z)$ is strictly decreasing. We need the following lemma.
\begin{lemma}\label{lemma:55}
If $0<\Delta_1<\Delta_2<E[X]$, then $a<l_{\Delta_1}<l_{\Delta_2}<c$.
\end{lemma}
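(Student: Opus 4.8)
The plan is to show that $l_\Delta$ is strictly increasing in $\Delta$ by analyzing how the sign-changing function $h_\Delta(\cdot)$ from \eqref{hfunc} moves as $\Delta$ varies. Recall $l_\Delta \in (a,c)$ is the unique root of $h_\Delta$, with $h_\Delta<0$ on $(a,l_\Delta)$ and $h_\Delta>0$ on $(l_\Delta,c)$; since $h_\Delta$ crosses zero from below, it suffices to prove that $\Delta \mapsto h_\Delta(z)$ is \emph{strictly decreasing} for each fixed $z\in(a,c)$, because then for $\Delta_1<\Delta_2$ we get $h_{\Delta_2}(l_{\Delta_1}) < h_{\Delta_1}(l_{\Delta_1}) = 0$, which forces $l_{\Delta_2} > l_{\Delta_1}$ (as $h_{\Delta_2}$ is negative to the left of its root). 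The bounds $a<l_{\Delta_i}<c$ are already established in the derivation preceding Proposition \ref{prop:51}.

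First I would fix $z\in(a,c)$ and write $W_\Delta = W + (1+\rho)\Delta$, so that increasing $\Delta$ strictly increases $W_\Delta$. Then I would differentiate
\[
h_\Delta(z)=\int_0^z \left[ \frac{u'(W_\Delta-F^{-1}_X(z))(1-T(z))}{1-z}-u'(W_\Delta-F^{-1}_X(t))T'(t) \right] dt
\]
with respect to $\Delta$ (equivalently with respect to $w:=W_\Delta$, up to the positive factor $1+\rho$), obtaining
\[
\frac{\partial h_\Delta(z)}{\partial w}=\int_0^z \left[ \frac{u''(w-F^{-1}_X(z))(1-T(z))}{1-z}-u''(w-F^{-1}_X(t))T'(t) \right] dt.
\]
The goal is to show this is negative. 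Factoring out $u'$ and using the Arrow--Pratt measure $A_u=-u''/u'$, the integrand equals
\[
-\,u'(w-F^{-1}_X(z))A_u(w-F^{-1}_X(z))\frac{1-T(z)}{1-z}+u'(w-F^{-1}_X(t))A_u(w-F^{-1}_X(t))T'(t),
\]
so I need $\int_0^z u'(w-F^{-1}_X(t))A_u(w-F^{-1}_X(t))T'(t)\,dt < u'(w-F^{-1}_X(z))A_u(w-F^{-1}_X(z))\frac{1-T(z)}{1-z}\cdot z$ — wait, more precisely I need the full integral to be negative, i.e. the ``deductible-like'' weighted average on $[0,z]$ of $u'A_u$ against $T'$ to be dominated by the endpoint value times $(1-T(z))/(1-z)$.

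The main obstacle will be controlling the product $g(t):=u'(w-F^{-1}_X(t))A_u(w-F^{-1}_X(t))$ across $[0,z]$. Since $t\mapsto F^{-1}_X(t)$ is increasing, $w-F^{-1}_X(t)$ is decreasing in $t$; $u'$ is increasing along this (as $u'$ is decreasing in its argument) while $A_u$ is increasing along this when $A_u$ is decreasing in its argument (Assumption \ref{ap}-(i)) — so the two factors move in the same direction and $g$ is increasing on $[0,z]$. Hence $g(t)\leqslant g(z)$ for $t\leqslant z$, giving
\[
\int_0^z g(t)T'(t)\,dt \leqslant g(z)\int_0^z T'(t)\,dt = g(z)\,T(z),
\]
and it then remains to check $g(z)\,T(z) < g(z)\cdot z\cdot\frac{1-T(z)}{1-z}$, i.e. $T(z)(1-z) < z(1-T(z))$, i.e. $T(z)<z$, which holds precisely on $(c,1)$ — but $z<c$, so this naive bound fails and I must instead use the sharper comparison. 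The correct route: combine the negativity with the already-known sign structure. Specifically, at $z=l_\Delta$ we have $h_\Delta(l_\Delta)=0$, meaning the weighted integral \emph{exactly} balances; differentiating the identity defining $l_\Delta$ in $\Delta$ and using that $h_\Delta'(l_\Delta)>0$ (established before Proposition \ref{prop:51}) together with $\partial h_\Delta/\partial\Delta < 0$ at that point, the implicit function theorem gives $dl_\Delta/d\Delta = -(\partial h_\Delta/\partial\Delta)/h_\Delta'(l_\Delta) > 0$. So the real work is confined to proving $\partial h_\Delta(z)/\partial\Delta<0$ at $z=l_\Delta$ (not for all $z$), where I can exploit $h_\Delta(l_\Delta)=0$ to rewrite $u'(w-F^{-1}_X(l_\Delta))\frac{1-T(l_\Delta)}{1-l_\Delta}\cdot$-weighted terms and reduce to a comparison of $A_u$ values, using that $A_u$ is \emph{strictly} decreasing and $F^{-1}_X(l_\Delta)>F^{-1}_X(t)$ for $t<l_\Delta$ on a set of positive measure (Assumption \ref{assump:21} gives $F^{-1}_X$ strictly increasing / absolutely continuous). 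I expect this endpoint computation — massaging the $h_\Delta(l_\Delta)=0$ relation to extract a strict $A_u$-monotonicity inequality — to be the crux; once it is in hand, the implicit-function argument and the strict chain $a<l_{\Delta_1}<l_{\Delta_2}<c$ follow immediately from the properties of $h_\Delta$ recorded above.
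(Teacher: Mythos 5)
Your overall strategy is sound, and once the computation you defer as ``the crux'' is actually carried out it yields a correct proof that is a close (differential) cousin of the paper's (finite-difference) argument. You were right to abandon the claim that $h_\Delta(z)$ is decreasing in $\Delta$ for \emph{every} fixed $z$: for CARA utility one has $\partial h_\Delta(z)/\partial W_\Delta=-A_u\,h_\Delta(z)$, which is \emph{positive} to the left of the root, so the global claim is false and only the value at $z=l_\Delta$ matters. The deferred endpoint computation does go through: writing $w=W_\Delta$ and using $h_\Delta(l_\Delta)=0$ in the form $\frac{u'(w-F^{-1}_X(l_\Delta))(1-T(l_\Delta))}{1-l_\Delta}\,l_\Delta=\int_0^{l_\Delta}u'(w-F^{-1}_X(t))T'(t)\,dt$, one gets
\begin{align*}
\frac{\partial h_\Delta}{\partial w}(l_\Delta)
&=\frac{u''(w-F^{-1}_X(l_\Delta))(1-T(l_\Delta))}{1-l_\Delta}\,l_\Delta-\int_0^{l_\Delta}u''(w-F^{-1}_X(t))T'(t)\,dt\\
&=\int_0^{l_\Delta}\bigl[A_u(w-F^{-1}_X(t))-A_u(w-F^{-1}_X(l_\Delta))\bigr]u'(w-F^{-1}_X(t))T'(t)\,dt<0,
\end{align*}
the strict inequality holding because $A_u$ is strictly decreasing and $F^{-1}_X(t)<F^{-1}_X(l_\Delta)$ on a set of positive measure (it cannot be constant on all of $[0,l_\Delta]$, since that would force $T(l_\Delta)=l_\Delta$, i.e.\ $l_\Delta=c$). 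Combined with $h_\Delta'(l_\Delta)>0$ and $h_\Delta(a)<0<h_\Delta(c)$, the implicit function theorem then gives $l_\Delta$ strictly increasing in $\Delta$ and confined to $(a,c)$, as required. The paper reaches the same conclusion by a finite-difference version of the same comparison: it shows directly that $h_{\Delta_2}(l_{\Delta_1})<0$ by multiplying the identity $h_{\Delta_1}(l_{\Delta_1})=0$ by the constant $u'(W_{\Delta_2}-F^{-1}_X(l_{\Delta_1}))/u'(W_{\Delta_1}-F^{-1}_X(l_{\Delta_1}))$ and using that $s\mapsto u'(W_{\Delta_2}-s)/u'(W_{\Delta_1}-s)$ is decreasing under strict DARA (Lemma \ref{lemma:A2}); monotonicity of $h_{\Delta_2}$ on $[l_{\Delta_1},c)$ then locates $l_{\Delta_2}$ in $(l_{\Delta_1},c)$. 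The finite-difference route is marginally more robust, as it needs no differentiability of $\Delta\mapsto l_\Delta$, but both arguments hinge on exactly the same DARA inequality. Note finally that the lemma is only valid under the standing hypothesis of this part of Section 5 that $A_u$ is \emph{strictly} decreasing --- for CARA utility $l_\Delta$ is constant in $\Delta$ --- so that hypothesis must be invoked explicitly, as you do.
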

\begin{proof}
According to definition of $l_{\Delta_1}$, we have
\begin{align*}
h_{\Delta_1}(l_{\Delta_1})=\int_0^{l_{\Delta_1}} \left[ \frac{u'(W_{\Delta_1}-F^{-1}_X(l_{\Delta_1}))(1-T(l_{\Delta_1}))}{1-l_{\Delta_1}}-u'(W_{\Delta_1}-F^{-1}_X(t))T'(t) \right] dt=0.
\end{align*}
Since $W_{\Delta_1}<W_{\Delta_2}$, we have
$\frac{u'(W_{\Delta_2}-F^{-1}_X(l_{\Delta_1}))}{u'(W_{\Delta_1}-F^{-1}_X(l_{\Delta_1}))}<\frac{u'(W_{\Delta_2}-F^{-1}_X(t))}{u'(W_{\Delta_1}-F^{-1}_X(t))}$ for $t\in [0,l_{\Delta_1})$ by Lemma \ref{lemma:A2} in Appendix A. Hence
\begin{align*}
h_{\Delta_2}(l_{\Delta_1})=\int_0^{l_{\Delta_1}} \left[ \frac{u'(W_{\Delta_2}-F^{-1}_X(l_{\Delta_1}))(1-T(l_{\Delta_1}))}{1-l_{\Delta_1}}-u'(W_{\Delta_2}-F^{-1}_X(t))T'(t) \right] dt<0.
\end{align*}
As a result $h_{\Delta_2}(l_{\Delta_1})<0$, $h_{\Delta_2}(c)>0$. Since $h_{\Delta_2}'(z)>0$ for $z\in [l_{\Delta_1},c)$, we get $l_{\Delta_2}\in (l_{\Delta_1},c)$.
\end{proof}
\par
Define $\Delta(d):=\int_0^d F^{-1}_X(z) dz+\int_d^1 F^{-1}_X(d) dz=\int_0^d F^{-1}_X(z) dz+F^{-1}_X(d)(1-d)$ on $d\in [a,c]$. Then $\Delta'(d)=(1-d)(F^{-1}_X)'(d)>0$. Hence, $\Delta(\cdot)$ is a continuous and strictly increasing function. Determine $l_{\Delta(a)}$ and $l_{\Delta(c)}$ by $h_{\Delta(a)}(l_{\Delta(a)})=0$ and $h_{\Delta(c)}(l_{\Delta(c)})=0$, and  set $\widetilde{\Delta}:=\Delta(l_{\Delta(a)})$ and $\overline{\Delta}:=\Delta(l_{\Delta(c)})$. Finally, define a function $g(\cdot)$ on $[a,c]$ as follows:
\begin{align*}
g(z):=\int_0^z \left[ \frac{u'(W_0+(1+\rho)\Delta(z)-F^{-1}_X(z))(1-T(z))}{1-z}-u'(W_0+(1+\rho)\Delta(z)-F^{-1}_X(t))T'(t) \right] dt.
\end{align*}

\begin{prop}\label{prop:53}
Assume that Assumptions \ref{assump:21}, \ref{assump:23}, and \ref{ap} hold and that $A_u(\cdot)$ is strictly decreasing. Then the optimal solution  to \eqref{orgi-3}
is given as
\begin{enumerate}[(i)]
\item
 If $\Delta = 0$, then ${G}^*(z)=0$ $\forall 0\leqslant z\leqslant 1$.
\item If $0<\Delta \leqslant \widetilde{\Delta}$, then
\begin{align}
   G^*(z)=\begin{cases}
            0, &\quad  \text{  if  }              0 \leqslant z< z_2,\\
            F^{-1}_X(z)-F^{-1}_X(z_2), &\quad  \text{  if  }              z_2 \leqslant z< z_1,\\
            F^{-1}_X(z_1)-F^{-1}_X(z_2),&\quad  \text{  if  }            z_1 \leqslant z \leqslant 1,
          \end{cases}
\end{align}
where $z_2,z_1$ are such that $z_2\leqslant a \leqslant z_1$, $\int_{z_2}^{z_1} [ \frac{u'(W_\Delta-F^{-1}_X(z_1)+F^{-1}_X(z_2))(1-T(z_1))}{1-z_1}- u'(W_\Delta-F^{-1}_X(t)+F^{-1}_X(z_2))T'(t) ] dt=0$, and $\int_0^1 G^*(z) dz=\Delta$.
\item If $\widetilde{\Delta}< \Delta <\overline{\Delta}$, then let $p\in (l_{\Delta(a)},l_{\Delta(c)})$ such that $\Delta(p)=\Delta$. If $g(p)<0$, then
\begin{align}
   G^*(z)=\begin{cases}
            0, &\quad  \text{  if  }              0 \leqslant z< z_2,\\
            F^{-1}_X(z)-F^{-1}_X(z_2), &\quad  \text{  if  }              z_2 \leqslant z< z_1,\\
            F^{-1}_X(z_1)-F^{-1}_X(z_2),&\quad  \text{  if  }            z_1\leqslant  z \leqslant 1,
          \end{cases}
\end{align}
where $z_2,z_1$ are such that $z_2\leqslant a \leqslant z_1$, $\int_{z_2}^{z_1} [ \frac{u'(W_\Delta-F^{-1}_X(z_1)+F^{-1}_X(z_2))(1-T(z_1))}{1-z_1}- u'(W_\Delta-F^{-1}_X(t)+F^{-1}_X(z_2))T'(t) ] dt=0$, and $\int_0^1 G^*(z) dz=\Delta$. If $g(p)\geqslant 0$, then
\begin{align}
   G^*(z)=\begin{cases}
            F^{-1}_X(z), &\quad  \text{  if  }           0\leqslant z < f,\\%[3pt]
             F^{-1}_X(f),&\quad  \text{  if  }            f\leqslant z \leqslant 1,
          \end{cases}
\end{align}
where $f$ is such that $f < l_{\Delta(c)}$ and $\int_0^1 G^*(z) dz = \Delta$.
\item If $\overline{\Delta} \leqslant \Delta \leqslant E[X] $, then
\begin{align}
   G^*(z)=\begin{cases}
            F^{-1}_X(z), &\quad  \text{  if  }           0\leqslant z < f,\\ %[3pt]
             F^{-1}_X(f),&\quad  \text{  if  }            f\leqslant z \leqslant 1,
          \end{cases}
\end{align}
where $f$ is such that $f \geqslant l_{\Delta(c)}$ and $\int_0^1 G^*(z) dz = \Delta$.
\end{enumerate}
\end{prop}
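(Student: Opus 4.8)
The plan is to reduce the proposition to a dichotomy that is already in hand. By the running case analysis of this section together with Propositions \ref{prop:51} and \ref{prop:52} and Lemma \ref{lemma:53}, for every $0<\Delta<E[X]$ the optimal solution of \eqref{orgi-3} is the threefold contract of Proposition \ref{prop:52} if $\Delta<K_\Delta$, and the deductible contract of Proposition \ref{prop:51} if $\Delta\geqslant K_\Delta$; the two endpoints $\Delta=0$ and $\Delta=E[X]$ are the trivial cases recorded just before the proposition. Thus the whole task is to rewrite the implicit condition ``$\Delta$ versus $K_\Delta$'' in terms of the explicit quantities $\widetilde\Delta$, $\overline\Delta$, $g$, using that $A_u(\cdot)$ is strictly decreasing. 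I would first record three identities obtained by unwinding definitions: (a) $K_\Delta=\Delta(l_\Delta)$; (b) $\Delta(\cdot)$ is a continuous strictly increasing bijection of $[0,1]$ onto $[0,E[X]]$ (since $\Delta'(d)=(1-d)(F_X^{-1})'(d)>0$), so $\Delta^{-1}$ is available on $[0,E[X]]$; and (c) $g$ is $h_\Delta$ with $\Delta$ replaced by $\Delta(z)$, i.e.\ $g(z)=h_{\Delta(z)}(z)$ (the first argument $W_0+(1+\rho)\Delta(z)$ in the definition of $g$ standing for $W_{\Delta(z)}$). I would also keep at hand, from the analysis already carried out, that $h_\Delta$ has a unique zero $l_\Delta\in(a,c)$ with $h_\Delta<0$ on $(0,l_\Delta)$ and $h_\Delta>0$ on $(l_\Delta,1)$, and that $\Delta\mapsto l_\Delta$ is strictly increasing (Lemma \ref{lemma:55}).

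Next I would dispatch parts (i), (ii) and (iv), which need only Lemma \ref{lemma:55} and identity (a). Part (i) is the trivial full-coverage case. For part (ii), with $0<\Delta\leqslant\widetilde\Delta=\Delta(l_{\Delta(a)})$: if $\Delta\leqslant\Delta(a)$ then $K_\Delta=\Delta(l_\Delta)>\Delta(a)\geqslant\Delta$ because $l_\Delta>a$; if $\Delta(a)<\Delta\leqslant\widetilde\Delta$, set $p:=\Delta^{-1}(\Delta)\in(a,l_{\Delta(a)}]$ and use $\Delta(a)<\Delta(p)=\Delta$ together with Lemma \ref{lemma:55} to get $l_\Delta>l_{\Delta(a)}\geqslant p$, whence $K_\Delta=\Delta(l_\Delta)>\Delta(p)=\Delta$. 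Either way $\Delta<K_\Delta$, so Proposition \ref{prop:52} gives the stated threefold contract. Part (iv) is the mirror image: with $\overline\Delta\leqslant\Delta\leqslant E[X]$, either $\Delta\geqslant\Delta(c)$ and then $K_\Delta=\Delta(l_\Delta)<\Delta(c)\leqslant\Delta$ since $l_\Delta<c$, or $\overline\Delta\leqslant\Delta<\Delta(c)$ and then $p:=\Delta^{-1}(\Delta)\in[l_{\Delta(c)},c)$ gives $l_\Delta<l_{\Delta(c)}\leqslant p$ by Lemma \ref{lemma:55} (using $\Delta(p)<\Delta(c)$), so $K_\Delta<\Delta$; in both cases Proposition \ref{prop:51} applies, and its cutoff, being the unique root of $\int_0^1 G^*(z)\,dz=\Delta$, equals $\Delta^{-1}(\Delta)\geqslant l_{\Delta(c)}$.

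For part (iii) I would fix $\widetilde\Delta<\Delta<\overline\Delta$ and set $p:=\Delta^{-1}(\Delta)\in(l_{\Delta(a)},l_{\Delta(c)})$; this is precisely the window where Lemma \ref{lemma:55} alone leaves the position of $p$ relative to $l_\Delta$ undetermined, so here I would invoke identity (c) directly. Since $g(p)=h_{\Delta(p)}(p)=h_\Delta(p)$ and $h_\Delta<0$ on $(0,l_\Delta)$, $h_\Delta>0$ on $(l_\Delta,1)$, one gets $g(p)<0\iff p<l_\Delta\iff\Delta=\Delta(p)<\Delta(l_\Delta)=K_\Delta$, and $g(p)\geqslant0\iff\Delta\geqslant K_\Delta$. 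Thus $g(p)<0$ puts us in Proposition \ref{prop:52} (threefold), while $g(p)\geqslant0$ puts us in Proposition \ref{prop:51} (deductible, with $f=p<l_{\Delta(c)}$). In all parts, uniqueness of $(z_2,z_1)$, resp.\ of $f$, is inherited from Propositions \ref{prop:52} and \ref{prop:51} and the injectivity of $\Delta(\cdot)$.

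The main obstacle I anticipate is not any new estimate — the analytic content (existence of optimizers via Appendix C, the shape reduction to exactly the deductible and threefold types in Cases C.1--C.2, and the monotonicity of $l_\Delta$ in $\Delta$) has already been established — but the bookkeeping needed to make the thresholds line up: one must verify that identities (a)--(c) and Lemma \ref{lemma:55} together make $(0,\widetilde\Delta]\cup(\widetilde\Delta,\overline\Delta)\cup[\overline\Delta,E[X]]$ decompose $[0,E[X]]$ exactly according to the sign of $\Delta-K_\Delta$, with each of the endpoints $\Delta\in\{\Delta(a),\widetilde\Delta,\overline\Delta,\Delta(c),K_\Delta\}$ assigned to the correct contract type. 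The delicate endpoint checks are that $\Delta=\widetilde\Delta$ still yields $\Delta<K_\Delta$ and $\Delta=\overline\Delta$ still yields $\Delta\geqslant K_\Delta$; these hold because $l_{\Delta(a)}>a$ and $l_{\Delta(c)}<c$ force the inequalities drawn from Lemma \ref{lemma:55} to be strict at those points.
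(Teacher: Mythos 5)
Your proposal is correct and follows essentially the same route as the paper's own (much terser) proof: reduce everything to the dichotomy $\Delta$ versus $K_\Delta$, dispatch (i), (ii), (iv) via Lemma \ref{lemma:55} together with Propositions \ref{prop:51} and \ref{prop:52}, and for (iii) use the identity $g(p)=h_{\Delta(p)}(p)=h_\Delta(p)$ and the sign pattern of $h_\Delta$ around $l_\Delta$ to translate the sign of $g(p)$ into the comparison $\Delta\lessgtr K_\Delta$. Your version merely fills in the bookkeeping (the identity $K_\Delta=\Delta(l_\Delta)$, the endpoint checks at $\widetilde\Delta$ and $\overline\Delta$, and the reading of $W_0+(1+\rho)\Delta(z)$ as $W_{\Delta(z)}$) that the paper leaves implicit.
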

\begin{proof}
(i),(ii) and (iv) are direct consequences of Proposition \ref{prop:51}, \ref{prop:52} and Lemma \ref{lemma:55}. For
(iii), there is a unique $p\in (l_{\Delta(a)},l_{\Delta(c)})$ such that $\Delta(p)=\Delta$, which follows from the definition of $\widetilde{\Delta}$, $\overline{\Delta}$ and the fact that $\Delta(\cdot)$ is a continuous and strictly increasing function. If $g(p)<0$, then $h_\Delta(p)<0$; hence $l_\Delta>p$. Therefore, $\Delta<K_\Delta$. The desired result follows from Proposition \ref{prop:52}. The proof for $g(p)\geqslant 0$ is similar.
\end{proof}
\par
Let us give the result in terms of premium and indemnity function.
\begin{thm}
Assume that Assumptions \ref{assump:21}, \ref{assump:23}, and \ref{ap} hold, and that $A_u(\cdot)$ is strictly decreasing. Then the optimal indemnity function $I^*(\cdot)$ to Problem (1)
is given as
\begin{enumerate}[(i)]
\item If $ \pi=(1+\rho)E[X]$, then ${I}^*(z)=z$ $\forall z\in [0,M]$.
\item If $(1+\rho)(E[X]-\widetilde{\Delta})\leqslant \pi <(1+\rho)E[X]$, then
\begin{align*}
   {I}^*(z)=\begin{cases}
            z, &\quad  \text{  if  }           0\leqslant z <F^{-1}_X(z_2),\\ %[3pt]
             F^{-1}_X(z_2) , &\quad  \text{  if  }        F^{-1}_X(z_2)\leqslant z < F^{-1}_X(z_1),\\
            z-F^{-1}_X(z_1)+F^{-1}_X(z_2),&\quad  \text{  if  }             F^{-1}_X(z_1)\leqslant z \leqslant M,
          \end{cases}
\end{align*}
where $(z_2,z_1)$ is the unique pair satisfying $z_2\leqslant a \leqslant z_1$,
$\int_{z_2}^{z_1} [ \frac{u'(W_\Delta-F^{-1}_X(z_1)+F^{-1}_X(z_2))(1-T(z_1))}{1-z_1}- u'(W_\Delta-F^{-1}_X(t)+F^{-1}_X(z_2))T'(t) ] dt=0$, and $E[I^*(X)]=\frac{\pi}{1+\rho}$.
\item  If $ (1+\rho)(E[X]-\overline{\Delta})< \pi <(1+\rho)(E[X]-\widetilde{\Delta})$, then let $p\in (l_{\Delta(a)},l_{\Delta(c)})$ such that $\Delta(p)=E[X]-\frac{\pi}{1+\rho}$. If $g(p)<0$, then
\begin{align*}
   {I}^*(z)=\begin{cases}
            z, &\quad  \text{  if  }           0\leqslant z < F^{-1}_X(z_2),\\%[3pt]
             F^{-1}_X(z_2) , &\quad  \text{  if  }        F^{-1}_X(z_2)\leqslant z < F^{-1}_X(z_1),\\
            z-F^{-1}_X(z_1)+F^{-1}_X(z_2),&\quad  \text{  if  }             F^{-1}_X(z_1)\leqslant z \leqslant M,
          \end{cases}
\end{align*}
where $(z_2,z_1)$ is the unique pair satisfying $z_2\leqslant a \leqslant z_1$,
$\int_{z_2}^{z_1} [ \frac{u'(W_\Delta-F^{-1}_X(z_1)+F^{-1}_X(z_2))(1-T(z_1))}{1-z_1}- u'(W_\Delta-F^{-1}_X(t)+F^{-1}_X(z_2))T'(t) ] dt=0$, and $E[I^*(X)]=\frac{\pi}{1+\rho}$. If $g(p)\geqslant 0$, then
\begin{align*}
   {I}^*(z)=\begin{cases}
            0, &\quad  \text{  if  }            0\leqslant z < F^{-1}_X(f),\\
            z-F^{-1}_X(f),&\quad  \text{  if  }             F^{-1}_X(f)\leqslant z \leqslant M,
          \end{cases}
\end{align*}
where $q$ is the unique number satisfying $f<l_{\Delta(c)}$ and $E[I^*(X)]=\frac{\pi}{1+\rho}$.
\item If $ 0\leqslant \pi \leqslant (1+\rho)(E[X]-\overline{\Delta})$, then
\begin{align*}
   {I}^*(z)=\begin{cases}
            0, &\quad  \text{  if  }            0\leqslant z < F^{-1}_X(f),\\
            z-F^{-1}_X(f),&\quad  \text{  if  }             F^{-1}_X(f)\leqslant z \leqslant M,
          \end{cases}
\end{align*}
where $q$ is the unique number satisfying $f\geqslant l_{\Delta(c)}$ and $E[I^*(X)]=\frac{\pi}{1+\rho}$.
\end{enumerate}
\end{thm}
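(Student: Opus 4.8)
The plan is to obtain the statement as a direct transcription of Proposition \ref{prop:53}, in exactly the way the Yaari theorem was obtained from Proposition \ref{prop:yaari}. The first step is to set up the dictionary between the premium $\pi$ and the auxiliary quantity $\Delta = E[X] - \frac{\pi}{1+\rho}$. With this identity, a bound such as $\pi \geqslant (1+\rho)(E[X] - \widetilde{\Delta})$ reads $\frac{\pi}{1+\rho} \geqslant E[X] - \widetilde{\Delta}$, i.e. $\Delta \leqslant \widetilde{\Delta}$, while $\pi < (1+\rho)E[X]$ reads $\Delta > 0$; handling $\overline{\Delta}$ and the endpoint $\pi = (1+\rho)E[X]$ in the same manner, the four $\pi$-ranges of the statement match --- with their strict/non-strict endpoints preserved --- the four $\Delta$-ranges $\Delta = 0$, $0 < \Delta \leqslant \widetilde{\Delta}$, $\widetilde{\Delta} < \Delta < \overline{\Delta}$, $\overline{\Delta} \leqslant \Delta \leqslant E[X]$ defining the cases of Proposition \ref{prop:53}. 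In the third case the point $p$ there, defined by $\Delta(p) = \Delta$, is the same $p$ with $\Delta(p) = E[X] - \frac{\pi}{1+\rho}$, so the $g(p) < 0$ versus $g(p) \geqslant 0$ dichotomy transfers unchanged.

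The second step is, in each case, to quote Proposition \ref{prop:53} for the optimal quantile $G^*(\cdot)$ of Problem \eqref{orgi-3} and then recover the contract via $R^*(z) = G^*(F_X(z))$ and $I^*(z) = z - R^*(z)$, $z \in [0,M]$, as justified at the end of Section 2.2. The only calculation is to simplify $G^*(F_X(z))$ branch by branch using $F^{-1}_X(F_X(z)) \equiv z$ on $[0,M]$ (Assumption \ref{assump:21}). For the threefold $G^*$: on $[0, F^{-1}_X(z_2))$, where $F_X(z) < z_2$, one gets $R^*(z) = 0$ and $I^*(z) = z$; on $[F^{-1}_X(z_2), F^{-1}_X(z_1))$, where $z_2 \leqslant F_X(z) < z_1$, one gets $R^*(z) = F^{-1}_X(F_X(z)) - F^{-1}_X(z_2) = z - F^{-1}_X(z_2)$, hence $I^*(z) = F^{-1}_X(z_2)$; and on $[F^{-1}_X(z_1), M]$ one gets $R^*(z) = F^{-1}_X(z_1) - F^{-1}_X(z_2)$, hence $I^*(z) = z - F^{-1}_X(z_1) + F^{-1}_X(z_2)$. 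The deductible $G^*$ is treated identically and yields $I^*(z) = 0$ on $[0, F^{-1}_X(f))$ and $I^*(z) = z - F^{-1}_X(f)$ on $[F^{-1}_X(f), M]$, while case (i) is immediate since $\Delta = 0$ forces $G^* \equiv 0$ and hence $I^*(z) = z$.

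The third step is to translate the normalization. In Proposition \ref{prop:53} the parameters $(z_2, z_1)$ or $f$ are fixed by $\int_0^1 G^*(z)\,dz = \Delta$; but $\int_0^1 G^*(z)\,dz = E[R^*(X)]$ and $R^*(X) = X - I^*(X)$, so this is equivalent to $E[I^*(X)] = E[X] - \Delta = \frac{\pi}{1+\rho}$, which is exactly the condition in the statement, and uniqueness of the parameters carries over from Proposition \ref{prop:53} (which in turn rests on uniqueness of the solution to \eqref{orgi-3} under strictly concave $u$). I do not expect any genuine obstacle: the argument is a verbatim analogue of the proof following Proposition \ref{prop:yaari} with Proposition \ref{prop:53} substituted for Proposition \ref{prop:yaari}, and the only point that requires a moment's care is the matching of the four premium intervals --- with the correct placement of strict and non-strict inequalities --- to the four $\Delta$-intervals, which the dictionary in the first step settles.
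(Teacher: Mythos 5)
Your proposal is correct and is exactly the argument the paper intends: the paper's own proof is the one-line remark that the theorem ``follows easily from Proposition \ref{prop:53},'' and your three steps (the dictionary $\Delta=E[X]-\frac{\pi}{1+\rho}$ between premium and retention ranges, the branch-by-branch recovery of $I^*$ from $G^*$ via $R^*(z)=G^*(F_X(z))$, and the equivalence of $\int_0^1 G^*=\Delta$ with $E[I^*(X)]=\frac{\pi}{1+\rho}$) are precisely the omitted routine details, carried out as in the Yaari case.
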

\begin{proof}
It follows easily from Proposition \ref{prop:53}.
\end{proof}

\section{Numerical Illustrations}
\noindent
In this section, we use a numerical example to illustrate our result with a given premium $\pi$. we take the same numerical setting as in Bernard et al. (2015) for a comparison purpose:
The loss $X$ follows a truncated exponential distribution with the density function $f(x)=\frac{me^{-mx}}{1-e^{-mM}}$, where the intensity parameter $m=0.1$, and $M=10$. The initial wealth $W_0=15$, and $u(x)=1-e^{-\gamma x}$ with $\gamma=0.02$. Moreover, $\rho=0.2$ and $\pi=3$.
Finally,
the weighting function $T_\theta(x)=\frac{x^\theta}{(x^\theta+(1-x)^\theta)^{\frac{1}{\theta}}}$ with $\theta=0.5$.
We can verify that the assumptions  of Theorem \ref{thm:54}  is satisfied under this setting.
\begin{figure}[htp]
  \centering
      \includegraphics[width=4 in]{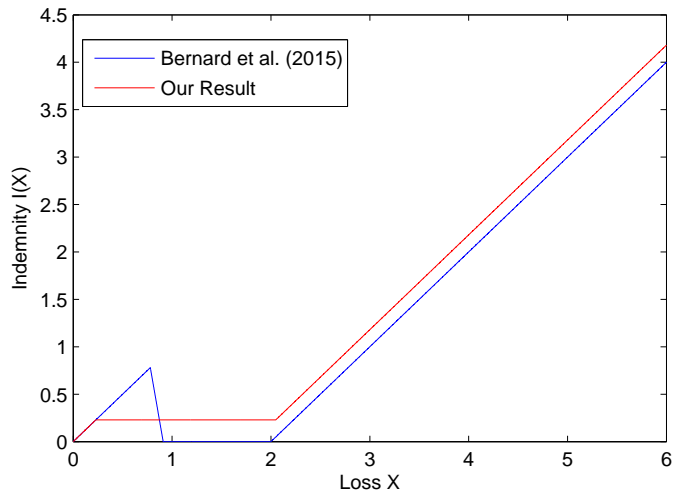}
      \caption{Our result vs Bernard et al. (2015). }
\end{figure}
\par
The optimal indemnity obtained by Bernard et al. (2015) is plotted in blue in Figure 3. We note that  in the result of Bernard et al. (2015),
if the loss is between 1 and 2, then the insured has the incentive to hide part of the loss in order to be paid with a larger compensation.
By contrast, our indemnity function, depicted in red, is increasing and any increment in compensations is always less than or equal to the increment in losses. It effectively rules out the aforementioned behavior of moral hazard.

\section{Conclusion}
\noindent
In this paper, we have studied an optimal insurance design problem where the insured has the RDU preference. There are documented evidences proving that this preference captures human behaviors better than the EU preference. The main contribution of our work is that our optimal contracts
are monotone with respect to losses, thereby eliminating  the potential problem of moral hazard associated with the existing results.
\par
An interesting conclusion from our results is that, under our assumptions (in particular Assumption \ref{ap}-(ii)), there are only two types of non-trivial optimal contracts possible, one being
the classical deductible and the other the threefold contract covering both small and large losses.
\par
While we have demonstrated that Assumption \ref{ap}-(ii) holds for many economically interesting cases, removing this assumption remains a mathematically outstanding open problem.

%In our work, the indemnity function is required to be non-decreasing and any increment in compensation is always less than or equal to the increment in loss. We represent this requirement by using quantile function, and then apply the calculus of variations to solve this problem. The Yaari model and general utility model are studied respectively in this paper. After solving this problem, we find a class of contracts, which have a qualitatively different feature from the classic insurance contracts and BHYZ's contracts. The practitioners may benefit from considering our results.
%This quantile formulation used in our work is very general and effective way in dealing this monotonicity constraint.
%%%%%%%%%%%%%%%%%%%%%%%%%%%%

\begin{appendix}
\section{Some Lemmas}
\noindent
In this part, we prove some lemmas which have been used in Section 5.
\begin{lemma}\label{lemma:A1}
Assume $T(\cdot):[0,1]\mapsto [0,1]$ satisfies Assumption \ref{assump:23}. We have the following results: \\
(i)  If $a<z$, then $T'(z)<\frac{1-T(z)}{1-z}$. \\
(ii) If $a<z_2<b$ and $z_2<z_1<1$, then $\frac{1-T(z_1)}{1-z_1}> \frac{T(z_1)-T(z_2)}{z_1-z_2}$.
\end{lemma}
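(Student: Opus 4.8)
The plan is to deduce both parts directly from Lemma \ref{desp} and, more precisely, from the sign analysis already carried out inside its proof. Recall that there we set $p(z)=(1-T(z))-T'(z)(1-z)$, computed $p'(z)=-T''(z)(1-z)$, and concluded from Assumption \ref{assump:23} (together with the boundary behaviour $T'(0+)>1$, $T'(1-)=+\infty$ and the fact that $T$ is strictly convex on $[b,1]$) that $p(z)<0$ on $[0,a)$ and $p(z)>0$ on $(a,1)$. We also established there that $f(z)=\frac{1-T(z)}{1-z}$ is continuous on $[0,1)$ and strictly increasing on $[a,1)$. These are the only ingredients needed.

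For part (i), fix $z$ with $a<z<1$. Since $p(z)>0$ we have $T'(z)(1-z)<1-T(z)$, and dividing by $1-z>0$ gives $T'(z)<\frac{1-T(z)}{1-z}$, which is the claim. (If a self-contained argument is preferred, part (i) is literally the assertion $p>0$ on $(a,1)$, so reproving it just amounts to rerunning the monotonicity-of-$p$ argument from the proof of Lemma \ref{desp}; no new idea is involved, which is why I would simply cite that proof.)

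For part (ii), assume $a<z_2<b$ and $z_2<z_1<1$. Since $T$ is differentiable on $(0,1)\supseteq[z_2,z_1]$, the mean value theorem yields $\xi\in(z_2,z_1)$ with $T'(\xi)=\frac{T(z_1)-T(z_2)}{z_1-z_2}$. Because $\xi>z_2>a$, part (i) applies to $\xi$ and gives $T'(\xi)<\frac{1-T(\xi)}{1-\xi}=f(\xi)$; and because $a<\xi<z_1<1$ and $f$ is strictly increasing on $[a,1)$ by Lemma \ref{desp}, we also get $f(\xi)<f(z_1)=\frac{1-T(z_1)}{1-z_1}$. Chaining these, $\frac{T(z_1)-T(z_2)}{z_1-z_2}=T'(\xi)<f(\xi)<\frac{1-T(z_1)}{1-z_1}$, as required. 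Note that this argument only uses $z_2>a$ and $z_1<1$; the hypothesis $z_2<b$ in the statement is not actually needed and merely records the context of the later applications in Section 5.

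There is no genuine obstacle here: all the real work is in Lemma \ref{desp}. The only points requiring a moment's care are (a) that the mean-value point $\xi$ lands in $(a,1)$, so that part (i) is legitimately applicable to it, and (b) the direction of monotonicity of $f$ on $[a,1)$ when passing from $f(\xi)$ to $f(z_1)$ — both are immediate from the hypotheses and from Lemma \ref{desp}.
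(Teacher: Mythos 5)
Your proof is correct. For part (i) you invoke the sign fact $p(z):=(1-T(z))-T'(z)(1-z)>0$ on $(a,1)$ established inside the proof of Lemma \ref{desp}; this is legitimate (no circularity, since Lemma \ref{desp} precedes and does not use Lemma \ref{lemma:A1}) and is the same underlying content as the paper's argument, which instead re-derives the inequality by a case split ($T'(z)<T'(a)=f(a)<f(z)$ for $a<z\leqslant b$, and strict convexity of $T$ on $[b,1]$ for $z>b$). For part (ii) you take a genuinely different route: the paper uses the mediant identity, observing that $\frac{1-T(z_1)}{1-z_1}>\frac{T(z_1)-T(z_2)}{z_1-z_2}$ is algebraically equivalent to $\frac{1-T(z_1)}{1-z_1}>\frac{1-T(z_2)}{1-z_2}$ (since $1-T(z_2)=(1-T(z_1))+(T(z_1)-T(z_2))$ and $1-z_2=(1-z_1)+(z_1-z_2)$), and then concludes from the strict monotonicity of $f$ on $[a,1)$; you instead apply the mean value theorem to get $\xi\in(z_2,z_1)$ with $T'(\xi)=\frac{T(z_1)-T(z_2)}{z_1-z_2}$ and chain $T'(\xi)<f(\xi)<f(z_1)$ via part (i) and the monotonicity of $f$. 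The paper's version is slightly more elementary (no appeal to differentiability beyond what Lemma \ref{desp} already used), while yours makes the role of part (i) explicit and exposes that the conclusion holds under the weaker hypotheses $a<z_2<z_1<1$ — an observation that, as you note, applies equally to the paper's argument, since it too only uses $z_2>a$ and $z_1<1$.
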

\begin{proof}
(i) If $a<z\leqslant b$, then $T'(z)<T'(a)<\frac{1-T(z)}{1-z}$. If $b<z$, then $T'(z)<\frac{1-T(z)}{1-z}$ since $T(\cdot)$ is convex and strictly increasing on $[b,1]$. \\
(ii) Since $1-T(z_1)$, $1-z_1$, $T(z_1)-T(z_2)$, and $z_1-z_2$ are all strictly positive, we have
\begin{eqnarray*}
 \frac{1-T(z_1)}{1-z_1}> \frac{T(z_1)-T(z_2)}{z_1-z_2}
\Longleftrightarrow && \frac{1-T(z_1)}{1-z_1}> \frac{(1-T(z_1))+(T(z_1)-T(z_2))}{(1-z_1)+(z_1-z_2)} \\
\Longleftrightarrow && \frac{1-T(z_1)}{1-z_1}> \frac{1-T(z_2)}{1-z_2}.
\end{eqnarray*}
However, $\frac{1-T(z_1)}{1-z_1}> \frac{1-T(z_2)}{1-z_2}$ follows from Lemma \ref{desp}.
\end{proof}
\par
For fixed $x>0$, define $q(z):=u'(x+z)u'(x-z)$ on $z\in(0,x)$.
\begin{lemma}\label{lemma:A2}
If $-\frac{u''(z)}{u'(z)}$ is strictly decreasing, then $q(z)$ is a strictly increasing function on $z\in (0,x)$.
\end{lemma}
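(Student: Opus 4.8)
The plan is to differentiate $q$ directly and to reduce the sign of $q'$ to the assumed monotonicity of the Arrow--Pratt coefficient $A_u=-u''/u'$. First I would note that $u$ being strictly increasing and twice differentiable (Assumption \ref{scu}) guarantees $u'>0$ everywhere, so in particular $u'(x+z)>0$ and $u'(x-z)>0$ for all $z\in(0,x)$, and $q$ is differentiable on $(0,x)$. By the product and chain rules,
\[
q'(z)=u''(x+z)\,u'(x-z)-u'(x+z)\,u''(x-z),\qquad z\in(0,x).
\]

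Next I would divide this expression by the strictly positive quantity $u'(x+z)\,u'(x-z)$, which shows that $q'(z)>0$ is equivalent to
\[
\frac{u''(x+z)}{u'(x+z)}>\frac{u''(x-z)}{u'(x-z)},
\qquad\text{that is,}\qquad
A_u(x+z)<A_u(x-z).
\]
Since $z>0$ we have $x+z>x-z$, and by hypothesis $A_u=-u''/u'$ is strictly decreasing; hence $A_u(x+z)<A_u(x-z)$ indeed holds. Therefore $q'(z)>0$ for every $z\in(0,x)$, and $q$ is strictly increasing on $(0,x)$.

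There is no genuine obstacle in this argument; the only points deserving a word of care are the legitimacy of the division (secured by $u'>0$, which follows from $u$ being strictly increasing) and the differentiability of $q$ (secured by the twice differentiability of $u$ in Assumption \ref{scu}). I would state these two facts explicitly at the outset and then carry out the two displayed steps above.
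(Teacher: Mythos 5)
Your proposal is correct and follows essentially the same route as the paper: differentiate $q$, factor out (equivalently, divide by) the positive product $u'(x+z)u'(x-z)$, and conclude from the strict monotonicity of $A_u$ applied at the ordered points $x-z<x+z$. No further comment is needed.
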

\begin{proof}
We take derivative:
\begin{eqnarray*}
q'(z)= && u''(x+z)u'(x-z)-u'(x+z)u''(x-z)  \\
= && u'(x+z)u'(x-z)\left[(-\frac{u''(x-z)}{u'(x-z)})-(-\frac{u''(x+z)}{u'(x+z)})\right]>  0.
\end{eqnarray*}
Hence, we get the result.
\end{proof}

\section{Existence of Optimal Solutions to \eqref{orgi-3} and \eqref{orgi-6} }
\noindent
%For the following lemma, refer to Dunford and Schwartz (1958, page 382).
%\begin{lemma}
%(Arzela-Ascoli Theorem) \ Consider a sequence of real-valued continuous functions $(f_n)_{n \in \mathds{N}}$ defined on a closed and bounded interval $[a,b]$ of the real line. If this sequence is uniformly bounded and equicontinuous, then there exists a subsequence $f_{n_k}$ that converges uniformly to a continuous function on $[a,b]$.
%\end{lemma}
We first prove that the constraint set $\mathbb{G}$ is compact under some norm. We consider all the continuous functions on $[0,1]$, denoted as $C[0,1]$. Define a metric between $x(\cdot),y(\cdot)$ as $\rho(x(\cdot),y(\cdot))=\max_{0\leqslant t \leqslant 1}|x(t)-y(t)|$, $\forall x(\cdot),y(\cdot) \in C[0,1]$. Clearly, $C[0,1]$ is a metric space under $\rho$. By Arzela--Ascoli's theorem, for any sequence $(G_n(\cdot))_{n \in \mathds{N}}$ in ${\mathbb{G}}$, there exists a subsequence $G_{n_k}(\cdot)$ that converges in $C[0,1]$ under $\rho$.
\begin{lemma}\label{lemma:B1}
The feasible set $\ {\mathbb{G}}$ is compact  under $\rho$.
\end{lemma}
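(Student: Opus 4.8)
The plan is to establish sequential compactness of $\mathbb{G}$ in the metric space $(C[0,1],\rho)$, which is equivalent to compactness there; note first that $\mathbb{G}\subseteq C[0,1]$, since every element of $\mathbb{G}$ is absolutely continuous. The argument splits into (a) relative compactness and (b) closedness. For (a): by \eqref{Galternative} every $G(\cdot)\in\mathbb{G}$ has $G(0)=0$, and integrating the bound $0\leqslant G'\leqslant (F^{-1}_X)'$ together with the absolute continuity of $F^{-1}_X$ gives
\[
0\leqslant G(z)-G(y)\leqslant F^{-1}_X(z)-F^{-1}_X(y),\qquad 0\leqslant y\leqslant z\leqslant 1 .
\]
Taking $y=0$ shows $0\leqslant G\leqslant F^{-1}_X$ on $[0,1]$, so $\mathbb{G}$ is uniformly bounded ($F^{-1}_X$ being continuous on the compact interval $[0,1]$), and the displayed inequality combined with the uniform continuity of $F^{-1}_X$ shows that $\mathbb{G}$ is equicontinuous. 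By the Arzela--Ascoli theorem, any sequence $(G_n(\cdot))_n\subseteq\mathbb{G}$ then has a subsequence $(G_{n_k}(\cdot))_k$ converging in $\rho$ to some $G(\cdot)\in C[0,1]$; this is exactly what was recorded just before the statement of the lemma.

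For (b) I would check that this limit $G(\cdot)$ belongs to $\mathbb{G}$. First, $G(0)=\lim_k G_{n_k}(0)=0$. Next, each $G_{n_k}$ satisfies the displayed double inequality above; since $\rho$-convergence implies pointwise convergence, letting $k\to\infty$ yields $0\leqslant G(z)-G(y)\leqslant F^{-1}_X(z)-F^{-1}_X(y)$ for all $0\leqslant y\leqslant z\leqslant 1$. It then remains to recover the form demanded by \eqref{Galternative} from this pointwise estimate: the inequality says that $G(\cdot)$ and $F^{-1}_X(\cdot)-G(\cdot)$ are both increasing and that the absolute-continuity modulus of $G$ is dominated by that of $F^{-1}_X$, so $G$ is absolutely continuous because $F^{-1}_X$ is (Assumption \ref{assump:21}); this is precisely the implication already used to derive \eqref{Galternative}. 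Differentiating the two increasing functions a.e. gives $0\leqslant G'(z)\leqslant (F^{-1}_X)'(z)$ for a.e. $z\in[0,1]$, whence $G(\cdot)\in\mathbb{G}$ by \eqref{Galternative}. Thus $\mathbb{G}$ is closed, and together with (a) this gives compactness.

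The only delicate point is the last step: $\mathbb{G}$ is defined through an almost-everywhere condition on derivatives, which need not be preserved under a uniform limit. The way around it is to pass to the equivalent pointwise description of $\mathbb{G}$---functions vanishing at $0$ and ``Lipschitz with respect to $F^{-1}_X$'' in the sense of the displayed inequality---which is obviously stable under pointwise limits, and then to recover the a.e. derivative bounds from absolute continuity, exactly as in the proof of \eqref{Galternative}. Everything else is a routine application of Arzela--Ascoli.
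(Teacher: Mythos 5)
Your proof is correct and follows essentially the same route as the paper: extract a uniformly convergent subsequence via Arzela--Ascoli and show the limit satisfies $0\leqslant G(z)-G(y)\leqslant F^{-1}_X(z)-F^{-1}_X(y)$ (the paper does this by contradiction, you by passing the inequality to the pointwise limit, which is the same substance). You are in fact more careful than the paper on the last step --- recovering the a.e.\ derivative description of $\mathbb{G}$ from the pointwise inequality via absolute continuity --- which the paper's proof leaves implicit.
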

\begin{proof}
For any sequence $(G_n(\cdot))_{n \in \mathds{N}}$ in ${\mathbb{G}}$ , there exists a subsequence $G_{n_k}(\cdot)$ that uniformly converges in $G^*(\cdot) \in C[0,1]$. We now prove that $G^*(\cdot) \in {\mathbb{G}}$. If there exist $a>b$ such that $G^*(b)-G^*(a)=\eta > 0$, then take $\varepsilon :=\frac{1}{3} \eta$. If follows from the uniform convergence that  there exists $K$ such that $\rho (G_{n_{k}}(\cdot),G^*(\cdot)) \leqslant \varepsilon$ $\forall \ k \geqslant K$. Hence, $0 < \eta = G^*(b)-G^*(a)=G^*(b)-G_{n_{k}}(b)+G_{n_{k}}(b)-G_{n_{k}}(a)+G_{n_{k}}(a)-G^*(a) \leqslant \varepsilon + 0 + \varepsilon=\frac{2}{3}\eta$ $\forall \ k \geqslant K$, which is a contradiction. This proves  that $0 \leqslant G^*(a)-G^*(b)$ $\forall a>b$. Similarly, we can prove that $G^*(a)-G^*(b) \leqslant F^{-1}_X(a)-F^{-1}_X(b)$.
\end{proof}
\par
The existence of optimal solutions to \eqref{orgi-3} and \eqref{orgi-6} can be established now. For example, for \eqref{orgi-6}, let $v_\lambda(\Delta)$ be the optimal value of \eqref{orgi-6} under given $\lambda$ and $\Delta$. We can take a sequence $(G_n(\cdot))_{n \in \mathds{N}}$ in ${\mathbb{G}}$ such that $v_\lambda(\Delta) = \lim_{n \uparrow +\infty} U_\Delta(\lambda, G_n(\cdot))$. Then, according to Lemma \ref{lemma:B1}, there exists a subsequence $G_{n_k}(\cdot)$ converging to $G^*(\cdot)$ in ${\mathbb{G}}$ and $G^*(\cdot)$ is optimal solution to \eqref{orgi-6}. For \eqref{orgi-3}, the proof is similar.

\section{Existence of Lagrangian Multiplier to \eqref{orgi-3}}
\noindent
For the following lemma, refer to Komiya (1988) for an elementary proof.
\begin{lemma}
(Sion's Minimax Theorem) \ Let X be a compact convex subset of a linear topological space and Y a convex subset of a linear topological space. If $f$ is a real-valued function on $X\times Y$ such that $f(x,\cdot )$ is continuous and concave on $Y$ $\forall x\in X$, and $f(\cdot ,y)$ is continuous and convex on $X$ $\forall y\in Y$, then, $\min\limits_{{x\in X}}  \  \max\limits_{{y\in Y}} f(x,y)=\max\limits_{{y\in Y}} \ \min\limits_{{x\in X}}f(x,y).$
\end{lemma}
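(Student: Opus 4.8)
First I would dispose of the easy direction. For every $x\in X$ and $y\in Y$ one has $\min_{x'\in X}f(x',y)\le f(x,y)\le\max_{y'\in Y}f(x,y')$, and taking the supremum over $y$ and then the infimum over $x$ yields $\max_{y\in Y}\min_{x\in X}f(x,y)\le\min_{x\in X}\max_{y\in Y}f(x,y)$; all the extrema over $X$ are attained because $X$ is compact and the relevant maps of $x$ (namely $f(\cdot,y)$, continuous, and $x\mapsto\sup_{y}f(x,y)$, lower semicontinuous as a supremum of continuous functions) attain their extrema on a compact set. So the entire content lies in the reverse inequality $\min_{x}\max_{y}f\le\max_{y}\min_{x}f$, which I would argue by contradiction.

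Suppose it fails, and fix a real number $\alpha$ with $\max_{y}\min_{x}f(x,y)<\alpha<\min_{x}\max_{y}f(x,y)$. From the right inequality, for each $x$ there is $y$ with $f(x,y)>\alpha$, so the open sets $\{x\in X:\ f(x,y)>\alpha\}$, $y\in Y$, cover the compact set $X$; I would extract a finite subcover indexed by $y_{1},\dots,y_{m}$ and set $Y_{0}:=\mathrm{conv}\{y_{1},\dots,y_{m}\}\subseteq Y$, a compact finite-dimensional simplex. From the left inequality, for each $y\in Y_{0}$ there is $x$ with $f(x,y)<\alpha$, so the open sets $\{y\in Y_{0}:\ f(x,y)<\alpha\}$, $x\in X$, cover the compact set $Y_{0}$; I would extract a finite subcover indexed by $x_{1},\dots,x_{k}$ and set $X_{0}:=\mathrm{conv}\{x_{1},\dots,x_{k}\}\subseteq X$, again a compact simplex. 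Using that $x\mapsto\sup_{y\in Y_{0}}f(x,y)$ is lower semicontinuous and $y\mapsto\inf_{x\in X_{0}}f(x,y)$ is upper semicontinuous on these compacta, one checks that the strict separation survives the restriction: $\max_{y\in Y_{0}}\min_{x\in X_{0}}f(x,y)<\alpha<\min_{x\in X_{0}}\max_{y\in Y_{0}}f(x,y)$. Hence it suffices to derive a contradiction in the case where both $X$ and $Y$ are finite-dimensional simplices.

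For the simplicial case I would induct on $\dim X_{0}+\dim Y_{0}$. The base case, in which $X_{0}$ and $Y_{0}$ are single points, is trivial: both iterated extrema then equal $f(x_{0},y_{0})$, leaving no room for $\alpha$. For the inductive step one exploits the convexity hypotheses: along any segment in $Y_{0}$ the superlevel set $\{\,y:\ f(x,y)>\alpha\,\}$ is an interval (concavity of $f(x,\cdot)$), and along any segment in $X_{0}$ the sublevel set $\{\,x:\ f(x,y)<\alpha\,\}$ is an interval (convexity of $f(\cdot,y)$); combining this with the connectedness of intervals and the intermediate value theorem, one can replace a vertex of the simplex by a boundary point of a proper face without destroying the covering property, thereby lowering $\dim X_{0}+\dim Y_{0}$ while preserving the strict separation, until the trivial base case produces the contradiction.

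I expect this last step to be the main obstacle. Compactness and continuity alone only yield, for each individual $y$, some $x$ with $f(x,y)<\alpha$; manufacturing a single $x^{\ast}$ that simultaneously lies outside every $\{x:\ f(x,y_{i})>\alpha\}$ is genuinely false without the convexity structure, so the substantive ingredient is the interplay between quasi-convexity/quasi-concavity (which forces level sets along segments to be intervals) and connectedness. Carrying this out carefully in arbitrary finite dimension is precisely where an ``elementary'' proof becomes intricate, which is why it is cleanest here simply to invoke the self-contained treatment of Komiya (1988).
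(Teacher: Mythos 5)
The paper does not actually prove this lemma: it is quoted as a classical result, with the reader referred to Komiya (1988) for an elementary proof. Measured against that, your proposal is an honest outline rather than a proof, and you say as much yourself.

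The parts you do carry out are sound. Weak duality ($\max_y\min_x f\le\min_x\max_y f$) is correct, as is the attainment of the minima over the compact set $X$ (note only that the suprema over $Y$, which is merely convex, need not be attained, so the statement is really about $\sup_Y$; this looseness is already in the paper's formulation). The finite-subcover reduction is also correct: the sets $\{x:f(x,y)>\alpha\}$ are open and cover $X$, the convex hulls $Y_0$ and $X_0$ of the finitely many selected points are compact, and your verification that the strict separation $\max_{y\in Y_0}\min_{x\in X_0}f<\alpha<\min_{x\in X_0}\max_{y\in Y_0}f$ survives the restriction goes through because $\max_i f(\cdot,y_i)$ and $\min_j f(x_j,\cdot)$ are continuous on compacta. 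The genuine gap is the inductive step. The proposed mechanism --- ``replace a vertex of the simplex by a boundary point of a proper face without destroying the covering property,'' inducting on $\dim X_0+\dim Y_0$ --- is not substantiated, and it is not clear it can be made to work as stated: nothing in the sketch explains how quasi-concavity along a segment in $Y_0$ produces the required boundary point, nor why the covering property is preserved. Komiya's elementary argument is structured differently: the decisive ingredient is a two-point lemma asserting that if $\alpha<\min_x\max\{f(x,y_1),f(x,y_2)\}$ then some $y_0$ on the segment $[y_1,y_2]$ already satisfies $\alpha<\min_x f(x,y_0)$, proved by observing that the convex (hence connected) sublevel sets $\{x:f(x,z)\le\beta\}$ for $z\in[y_1,y_2]$ must each lie entirely inside one of the two disjoint closed convex sets $\{x:f(x,y_i)\le\beta\}$, and that this dichotomy contradicts the connectedness of the segment; the general case then follows by induction on the \emph{number} of points $y_1,\dots,y_m$ in the finite subcover, not on dimension. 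Since you ultimately defer to Komiya (1988), your treatment coincides with the paper's; but as a self-contained proof the proposal is incomplete at exactly the step where all the substance lies.
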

\begin{prop}\label{prop-dual}
For any $0<\Delta<E[X]$, there is $\lambda^*$ such that $\widetilde{G}_{\lambda^*}(\cdot)$ is optimal solution to \eqref{orgi-6} under $\lambda^*$ and $\int_0^1 \widetilde{G}_{\lambda^*}(z) dz = \Delta$.
\end{prop}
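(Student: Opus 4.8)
The plan is to establish strong duality between \eqref{orgi-3} and its Lagrangian dual by Sion's minimax theorem, and then to read off $\lambda^*$ from the common value. Write $F(G):=\int_0^1 u(W_\Delta-G(z))T'(z)\,dz$ and $\mathcal{L}(G,\lambda):=F(G)+\lambda\big(\int_0^1 G(z)\,dz-\Delta\big)$ for $G\in\mathbb{G}$ and $\lambda\geqslant0$, so that \eqref{orgi-6} is $\sup_{G\in\mathbb{G}}\mathcal{L}(G,\lambda)$ and, since $\inf_{\lambda\geqslant0}\lambda(\int_0^1 G(z)\,dz-\Delta)$ equals $0$ when $\int_0^1 G(z)\,dz\geqslant\Delta$ and $-\infty$ otherwise, the value of \eqref{orgi-3} is $v=\sup_{G\in\mathbb{G}}\inf_{\lambda\geqslant0}\mathcal{L}(G,\lambda)$. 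By Appendix B this supremum is attained at some $G^*\in\mathbb{G}$, and $G^*$ must bind the constraint, i.e. $\int_0^1 G^*(z)\,dz=\Delta$: otherwise $(1-\varepsilon)G^*\in\mathbb{G}$ for small $\varepsilon>0$ is still feasible and, as $u$ is strictly increasing, strictly improves $F$, a contradiction.

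Next I would check the hypotheses of Sion's theorem on $[0,\Lambda]\times\mathbb{G}$. The set $\mathbb{G}$ is convex by the description \eqref{Galternative} and compact in $(C[0,1],\rho)$ by Lemma \ref{lemma:B1}; for fixed $\lambda$, $\mathcal{L}(\cdot,\lambda)$ is concave on $\mathbb{G}$ because $u$ is concave and $T'\geqslant0$, and continuous with respect to $\rho$ by dominated convergence ($0\leqslant G(z)\leqslant F^{-1}_X(z)\leqslant M<W_\Delta$ and $\int_0^1 T'(z)\,dz=1$ provide the integrable dominating function $u(W_\Delta)T'(z)$); and for fixed $G$, $\mathcal{L}(G,\cdot)$ is affine, hence convex and continuous, on $[0,\Lambda]$. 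Before invoking the theorem I would fix $\Lambda$ large: plugging $G=F^{-1}_X\in\mathbb{G}$ gives $\Phi(\lambda):=\sup_{G\in\mathbb{G}}\mathcal{L}(G,\lambda)\geqslant F(F^{-1}_X)+\lambda(E[X]-\Delta)\to+\infty$ since $\Delta<E[X]$, so $\inf_{\lambda\geqslant0}\Phi=\min_{\lambda\in[0,\Lambda]}\Phi$ once $\Lambda$ exceeds some $\Lambda_0$, and this minimum is attained.

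Sion's theorem then yields $\min_{\lambda\in[0,\Lambda]}\Phi(\lambda)=\sup_{G\in\mathbb{G}}\min_{\lambda\in[0,\Lambda]}\mathcal{L}(G,\lambda)$, where for fixed $G$ the inner minimum equals $F(G)$ if $\int_0^1 G(z)\,dz\geqslant\Delta$ and $F(G)-\Lambda(\Delta-\int_0^1 G(z)\,dz)$ if $\int_0^1 G(z)\,dz<\Delta$. The delicate point — and the main obstacle — is to show that, for $\Lambda$ chosen large enough, this right-hand side equals exactly $v$ and not something larger, i.e. the branch with $\int_0^1 G(z)\,dz<\Delta$ cannot overtake $v$. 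I would handle this through the value function $\psi(s):=\sup\{F(G):G\in\mathbb{G},\ \int_0^1 G(z)\,dz=s\}$, $s\in[0,E[X]]$, which is finite, concave (convex feasible sets varying affinely in $s$, concave objective) and decreasing (the scaling $G\mapsto(s'/s)G$ keeps $G$ in $\mathbb{G}$ and, for $s'<s$, raises $F$), with $\psi(\Delta)=v$. Since $\Delta$ lies in the open interval $(0,E[X])$, $\psi$ has a finite left derivative there, and the supergradient inequality gives $\psi(s)\leqslant v-\psi'(\Delta-)(\Delta-s)$; hence taking $\Lambda\geqslant\max\{\Lambda_0,-\psi'(\Delta-)\}$ forces $\psi(s)-\Lambda(\Delta-s)\leqslant v$ for every $s<\Delta$, so the RHS above is exactly $v$. (Equivalently, one may note that the perturbation function of \eqref{orgi-3} admits a supergradient $-\lambda^*$ with $\lambda^*\geqslant0$ at the origin, which directly gives $\mathcal{L}(G,\lambda^*)\leqslant v$ for all $G\in\mathbb{G}$.)

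Putting the pieces together, for such $\Lambda$ we obtain $\min_{\lambda\in[0,\Lambda]}\Phi(\lambda)=v$, so there is no duality gap. Let $\lambda^*\in[0,\Lambda]$ attain $\Phi(\lambda^*)=v$. Using $\int_0^1 G^*(z)\,dz=\Delta$ we get $\Phi(\lambda^*)\geqslant\mathcal{L}(G^*,\lambda^*)=F(G^*)+\lambda^*\cdot0=v=\Phi(\lambda^*)$, so equality holds throughout and $G^*$ maximises $\mathcal{L}(\cdot,\lambda^*)$ over $\mathbb{G}$; that is, $G^*$ is an optimal solution of \eqref{orgi-6} under $\lambda^*$. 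When $u$ is strictly concave (Assumption \ref{scu}) this solution is unique, so $\widetilde{G}_{\lambda^*}=G^*$, and in all cases $\int_0^1\widetilde{G}_{\lambda^*}(z)\,dz=\Delta$, which is the assertion. $\Box$
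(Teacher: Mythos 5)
Your proof is correct, and its skeleton---Lagrangian duality closed by Sion's minimax theorem, followed by the observation that $\Phi(\lambda^*)\geqslant \mathcal{L}(G^*,\lambda^*)=v=\Phi(\lambda^*)$ forces $G^*$ to maximise the Lagrangian at $\lambda^*$---is the same as the paper's. The genuine divergence is in how the unboundedness of the multiplier is handled. The paper applies Sion with $\mathbb{G}$ as the compact factor and $\lambda$ ranging over all of $[0,\infty)$ (the quoted version of the theorem needs only one compact set), so the inner infimum over $\lambda$ is the exact indicator of feasibility and no gap-repair is needed; it then exhibits the explicit bound $\overline{\lambda}=\frac{v(\Delta)+1}{E[X]-\Delta}$, beyond which $v(\lambda,\Delta)\geqslant v(\Delta)+1$, to conclude that the dual infimum is attained on a compact interval. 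You instead truncate to $[0,\Lambda]$ before invoking Sion, which creates the spurious branch $F(G)-\Lambda(\Delta-\int_0^1 G)$ for infeasible $G$, and you repair it with the concave, decreasing perturbation function $\psi$ and its left supergradient at $\Delta$; your $\Lambda_0$ plays the role of the paper's $\overline{\lambda}$, and the $\psi$-argument is extra work the paper's route avoids. On the other hand, as your closing parenthetical notes, the supergradient $-\psi'(\Delta-)\leqslant 0$ at the interior point $\Delta$ already yields a $\lambda^*\geqslant 0$ with $\mathcal{L}(G,\lambda^*)\leqslant v$ for all $G\in\mathbb{G}$, so your argument could dispense with Sion's theorem altogether---a more elementary route than the paper's. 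You also supply a detail the paper merely asserts, namely that the optimum $G^*$ of \eqref{orgi-3} binds the constraint (via the scaling $(1-\varepsilon)G^*$).
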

\begin{proof}
Let $\Delta$ be given with $0<\Delta<E[X]$. Denote by $G^*(\cdot)$ the optimal solution to \eqref{orgi-3} under $\Delta$ (it is easy to show $\int_0^1 G^*(z) dz=\Delta$) and by $\widetilde{G}_{\lambda}(\cdot)$ the optimal solution to \eqref{orgi-6} under $\lambda$ and $\Delta$. Denote by $v(\Delta)$ and $v(\lambda,\Delta)$ be respectively the optimal values of \eqref{orgi-3} and \eqref{orgi-6}.
\par
We first prove that $v(\lambda,\Delta)$ is a convex function in $\lambda$ for given $\Delta$. Noting that $U_\Delta(\lambda,G(\cdot))$ is linear in $\lambda$ for any given $G(\cdot)$, we have
\begin{eqnarray*}
v(\alpha\lambda_1+(1-\alpha)\lambda_2,\Delta) = && \max\limits_{ {G(\cdot)}} U_\Delta(\alpha\lambda_1+(1-\alpha)\lambda_2,G(\cdot)) \\
= && \max\limits_{ {G(\cdot)}} \{\alpha U_\Delta(\lambda_1,G(\cdot))+(1-\alpha)U_\Delta(\lambda_2,G(\cdot))\} \\
\leqslant && \max\limits_{ {G(\cdot)}} \{\alpha U_\Delta(\lambda_1,G(\cdot))\}+\max\limits_{ {G(\cdot)}} \{(1-\alpha) U_\Delta(\lambda_2,G(\cdot))\} \\
= && \alpha \max\limits_{ {G(\cdot)}} \{U_\Delta(\lambda_1,G(\cdot))\}+(1-\alpha)\max\limits_{ {G(\cdot)}} \{U_\Delta(\lambda_2,G(\cdot))\} \\
= && \alpha v(\lambda_1,\Delta)+(1-\alpha)v(\lambda_2,\Delta).
\end{eqnarray*}
\par
Moreover, %we have $\min\limits_{0\leqslant \lambda} \max\limits_{ {G(\cdot)\in \mathbb{G}}} U_\Delta(\lambda,G(\cdot))\geqslant U_\Delta(G^*(\cdot))\geqslant \max\limits_{ {G(\cdot)\in \mathbb{G}}} \min\limits_{0\leqslant \lambda} U_\Delta(\lambda,G(\cdot)).$ Then,
by Sion's minimax theorem, the following equality holds: $\max\limits_{0\leqslant \lambda} \min\limits_{ {G(\cdot)\in \mathbb{G}}} -U_\Delta(\lambda,G(\cdot))=\min\limits_{ {G(\cdot)\in \mathbb{G}}} \max\limits_{0\leqslant \lambda} -U_\Delta(\lambda,G(\cdot))$; hence $\min\limits_{0\leqslant \lambda} \max\limits_{ {G(\cdot)\in \mathbb{G}}} U_\Delta(\lambda,G(\cdot))=\max\limits_{ {G(\cdot)\in \mathbb{G}}} \min\limits_{0\leqslant \lambda} U_\Delta(\lambda,G(\cdot))$. Finally, we have $v(\Delta)= \inf_{0\leqslant \lambda} v(\lambda,\Delta)$ (i.e. $\min\limits_{0\leqslant \lambda} \max\limits_{ {G(\cdot)\in \mathbb{G}}} U_\Delta(\lambda,G(\cdot))= U_\Delta(G^*(\cdot))$).
\par
Let us denote $\overline{\lambda}:=\frac{v(\Delta)+1}{\int_0^1 F^{-1}_X(z) dz - \Delta}=\frac{U_\Delta(G^*(\cdot))+1}{E[X] - \Delta}$. For any $\lambda\geqslant \overline{\lambda}$, we have
\begin{eqnarray*}
v(\lambda,\Delta)= && \max\limits_{ {G(\cdot)\in \mathbb{G}}} U_\Delta(\lambda,G(\cdot))
\geqslant  U_\Delta(\lambda,F^{-1}_X(z))) \\
= && \int_0^1 u(W_\Delta-F^{-1}_X(z))T'(z)dz + \lambda(\int_0^1 F^{-1}_X(z) dz - \Delta)
\geqslant\lambda(\int_0^1 F^{-1}_X(z) dz - \Delta) \\
\geqslant && \overline{\lambda}(\int_0^1 F^{-1}_X(z) dz - \Delta) \ \ (\mbox{since }\int_0^1 F^{-1}_X(z) dz > \Delta)\\
= && v(\Delta)+1,
\end{eqnarray*}
which yields $v(\Delta)= \inf_{0\leqslant \lambda } v(\lambda,\Delta)= \inf_{0\leqslant \lambda\leqslant \overline{\lambda}} v(\lambda,\Delta)$.
%\footnote{{\bf Zhou: This is a bit strange: why do you use $v(\Delta)+1$? Shouldn't $v(\Delta)$ be more normal? Zhuang: This is because we need a low bound to ensure that $v(\Delta)= \inf_{0\leqslant \lambda\leqslant \overline{\lambda}} v(\lambda,\Delta)$ holds. Here, 1 can be changed into a strictly positive number $\epsilon$. If we only have $v(\lambda,\Delta)\geq v(\Delta)$, then $\lambda^*$ which minimizes $v(\lambda,\Delta)$ still can be located above $\bar{\lambda}$. }}
\par
Therefore, by using the convexity of $v(\lambda,\Delta)$, we can find the optimal $\lambda^*\in [0,\overline{\lambda}]$ minimizes the right part, and satisfies that $v(\Delta)=v(\lambda^*,\Delta)$. Moreover,
\begin{eqnarray*}
v(\lambda^*,\Delta)\geqslant && U_\Delta(\lambda^*,G^*(\cdot))
=  \int_0^1 u(W_\Delta-G^*(z))T'(z)dz+\lambda^*(\int_0^1 G^*(z) dz - \Delta) \\
= && \int_0^1 [u(W_\Delta-G^*(z))T'(z)] dz
= U_\Delta(G^*(\cdot))=v(\Delta).
\end{eqnarray*}
The second equality comes from the fact that $G^*(\cdot)$ is the optimal solution to \eqref{orgi-3} under $\Delta$; hence $\int_0^1 G^*(z) dz=\Delta$. By $v(\Delta)=v(\lambda^*,\Delta)$ and $v(\lambda^*,\Delta)\geqslant U_\Delta(\lambda^*,G^*(\cdot))=v(\Delta)$, we have $G^*(\cdot)$ is optimal solution to \eqref{orgi-6} under given $\lambda^*$. And, by uniqueness of optimal solutions to \eqref{orgi-6}, we know that $G^*(\cdot)$ is the unique optimal solution to \eqref{orgi-6} under given $\lambda^*$ and satisfying $\int_0^1 G^*(z) dz=\Delta$.
\end{proof}
\end{appendix}

\newpage

%%%%%%%%%%%%%%%%%%%%%%%%%%%%%%%%%%%%%%%%%%%%%%%%%%

%%%%%%%%%%%%%%%%%%%%%%%%%%%%%%%%%%%%%%%%%%%%%%%%%

%\nocite{*}
%\bibliographystyle{abbrv}
%\bibliography{reference}

\end{document}